\newtheorem{theorem}{Theorem}[section]
\newtheorem{corollary}[theorem]{Corollary}
\newtheorem{lemma}[theorem]{Lemma}
\newtheorem{definition}[theorem]{Definition}
\newtheorem{remark}[theorem]{Remark}
\newtheorem{example}[theorem]{Example}
\newlength{\JZHeightOfX}
\newcommand{\JZOrcidlink}[1]{
\setlength{\JZHeightOfX}{\fontcharht\font`X}
\includegraphics[height=\JZHeightOfX]{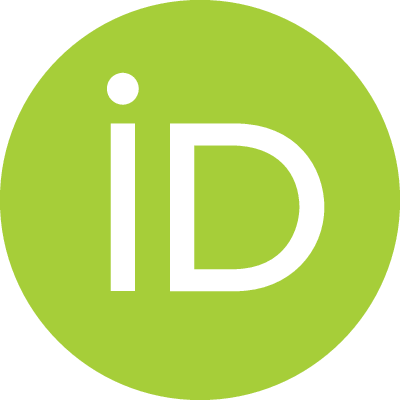}
\href{https://orcid.org/#1}{#1}
}
\newcommand{\tagx}[1][]{\refstepcounter{equation}(\theequation)\label{#1}}
\begin{document}
\title{A Rigorous Derivation of the Functional Renormalisation Group Equation}
\author{Jobst Ziebell\hspace{2em}\JZOrcidlink{0000-0002-9715-6356}\\
\small{Theoretisch-Physikalisches Institut, Friedrich-Schiller-University, Jena, Germany}}
\date{4 September 2023}

\maketitle

\begin{abstract}
The functional renormalisation group equation is derived in a mathematically rigorous fashion in a framework suitable for the Osterwalder-Schrader formulation of quantum field theory.
To this end, we devise a very general regularisation scheme which retains reflection positivity as well as the infinite degrees of freedom including smoothness.
Furthermore, it is shown how the classical limit is altered by the regularisation process leading to an inevitable breaking of translation invariance.
We also give precise conditions for the convergence of the obtained theories upon removal of the regularisation.
\end{abstract}

\section{Introduction}
It is still unknown whether interacting and non-perturbative quantum field theories encoding relevant physical phenomena can be given precise mathematical meaning.
Most prominently, it is unclear how to proceed from perturbative calculations to exact ones except in some special cases such as conformal field theories.
A much celebrated result that is frequently used is the functional renormalisation group equation \cite{src:Ellwanger1994,src:Morris1994} (FRGE) in the form presented by Wetterich \cite{src:Wetterich1993}.
It is generally accepted as an exact equation encoding the flow from a classical action functional to its quantum counterpart and many results have been derived or conjectured from its exact properties or practical approximations.
Some notable reviews include \cite{src:Delamotte2012,src:EichhornEtAl2020,src:Gies2012}.
However, the derivation of the FRGE has not yet been put into a mathematically satisfactory form but has remained restricted to formal respectively perturbative proofs \cite{src:Keller:PerturbativeFlows}.

In this work, we investigate the case of a real scalar field on Euclidean spacetime.
In the framework of the Osterwalder-Schrader theorem \cite{src:OsterwalderSchrader1, src:OsterwalderSchrader2} as presented by Glimm and Jaffe \cite{src:GlimmJaffe} it is modelled by a probability measure on the space of tempered distributions.
However, starting from a classical action it is not generally known how to define a corresponding measure.
Instead, it is customary to produce well-defined regularised theories and investigate the convergence of the corresponding measures upon the removal of the regularisation.
In this work we shall investiate how this procedure can be made to work in tandem with Wetterich's equation.

We begin by listing some basic tools in section \ref{sec:Preliminaries} that are necessary for the overall analysis.
In particular, we cover basic features of Radon Gaußian measures on locally convex spaces along with some standard results in convex analysis.
In section \ref{sec:LSCEnvelopes} we generalise the concept of lower semicontinuous envelopes and introduce the property of supercoercivity.
The latter enables a direct computation of a lower semicontinuous envelope.

Section \ref{sec:Convergence} then studies the convergence of measures on the space of tempered distributions in terms of convergence of dual objects that are fundamental to Wetterich's equation.
The connection is found to be given in terms of Mosco convergence and Attouch-Wets convergence, concepts which both enable a continuity theorem of the Legendre-Fenchel transform.

In section \ref{sec:FunctionalRegularisationScheme} we present a regularisation scheme tailored to the derivation of Wetterich's equation.
ts interplay with reflection positivity is studied in section \ref{sec:ReflectionPositivity} and finally, the proof of Wetterich's equation is given in section \ref{sec:WetterichDerivation}.

The derivation may be summarised as follows:
We define a family $Z_k$ of moment-generating functions (generating functionals) as well as their logarithms $W_k$ of a regularised theory on a Hilbert completion of the space of tempered distributions.
Then, we proceed by showing that the corresponding Fréchet derivatives $D W_k$ are linear bijections of Hilbert spaces such that the Legendre-Fenchel transforms (convex conjugates) $\Gamma_k$ of $W_k$ can be given in terms of $W_k$ and $(D W_k)^{-1}$.
It is then easily shown that $\Gamma_k$ - also referred to as the \enquote{effective average action} - satisfies a differential equation.
Furthermore, we compute the \enquote{classical limit} of $\Gamma_k$ which turns out to be the classical action modulo some contributions related to regularisation.

Finally, the full quantum field theory may be obtained by studying the convergence of the regularised effective average actions in terms of the result obtained in section \ref{sec:Convergence}.
However, it it not clear whether - or in which cases - Wetterich's equation itself survives the passage to the corresponding limit.
\section{Preliminaries and Conventions}
\label{sec:Preliminaries}
Every vector space in this work is taken to be real unless explicitly stated otherwise and the complexification of a real vector space $V$ will be denoted by $V_{\mathbb{C}}$.
\begin{definition}
\label{def:FourierTransform}
We define the Fourier transform $\hat{f}$ of a measurable function $f : \mathbb{R}^d \to \mathbb{R}$ as
\begin{equation}
\hat{f} \left( p \right) = \left( 2 \pi \right)^{-\frac{d}{2}} \int_{\mathbb{R}^d} \exp \left[ - i p x \right] f \left( x \right) \mathrm{d} x \, ,
\end{equation}
whenever the integral converges.
The corresponding unitary operator on $L^2(\mathbb{R}^d)_{\mathbb{C}}$ is denoted by $\mathcal{F}$ and the function $x \to \exp [ - i p x ] / ( 2 \pi )^{d/2}$ interpreted as a tempered distribution is denoted by $\mathcal{F}_p$.
\end{definition}

We shall work with the topological vector space $\mathcal{S} ( \mathbb{R}^d )$ of Schwartz functions over $\mathbb{R}^d$ for $d \in \mathbb{N}$ which for brevity we shall refer to as $\mathcal{S}$.
The corresponding space of tempered distributions with its strong dual topology will be referred to as $\mathcal{S}'_\beta$.
We shall generally let $X'$ denote the dual space of a locally convex space $X$ and reserve the notion $X'_\beta$ to encode the dual space with its strong dual topology.
In the case of normed spaces the dual will be denoted by $X^\ast$ and always assumed to carry the induced Banach space topology.
For locally convex spaces $X, Y$ and continuous linear operators $T : X \to Y$, we shall denote the transpose by $\ltrans{T} : Y' \to X'$.
The inner product on $L^2 ( \mathbb{R}^d )$ will be denoted with $( \cdot, \cdot )$ and the associated continuous inclusion of $\mathcal{S}$ into $\mathcal{S}'_\beta$ by $\iota : \mathcal{S} \to \mathcal{S}'_\beta$.

\begin{definition}
Let $(X, p)$ be a seminormed space and consider the set $\mathcal{C}(X)$ of all Cauchy sequences in $X$ and let $(x_n) \sim (y_n)$ whenever $\lim_{n \to \infty} p( x_n - y_n ) = 0$.
Then $\mathcal{C}(X) / \sim$ is a vector space with the obvious operations, $\bar{p} ( [(x_n)]_\sim ) = \lim_{n \to \infty} p( x_n )$ is a well-defined norm on $\mathcal{C}(X) / \sim$ and we shall denote the resulting normed space by $X_p$.
$X_p$ is complete and is called the \textbf{completion} of $(X,p)$.
There is also a \textbf{natural map} $\pi^p : X \to X_p, x \mapsto [(x, x, \dots)_\sim]$ which is linear, continuous and has dense range.
\end{definition}

\begin{remark}
Let $(X,p)$ be a seminormed space and $Y$ a complete locally convex space.
Then, by the Hahn-Banach theorem, every continuous linear operator $L : X \to Y$ extends uniquely to the completion $X_p$ in the sense that there is a unique continuous linear operator $\bar{L} : X_p \to Y$ such that $L = \bar{L} \circ \pi^p$.
Consequently, for every seminorm $q \ge p$ on $X$ there is a unique continuous, linear \textbf{natural map} $\pi^p_q : X_q \to X_p$ with $\pi^p = \pi^p_q \circ \pi^q$.
\end{remark}
\subsection{Measure Theory}
To a great part the results of this paper depend on properties of Gaußian measures on locally convex vector spaces.
As such we will give some definitions of the relevant concepts taken mostly from \cite{src:Bogachev:GaußianMeasures}.
\begin{definition}
Given a topological space $X$, we let $\mathcal{B}(X)$ denote its \textbf{Borel} $\sigma$-algebra, i.e. the smallest $\sigma$-algebra containing all open subsets of $X$.
A member of $\mathcal{B}(X)$ is called a \textbf{Borel set}.
A measure $\mu$ on $\mathcal{B}(X)$ is called a \textbf{Borel measure on $X$}.

Given a locally convex space $X$, we define the \textbf{cylindrical $\bm{\sigma}$-algebra} $\mathcal{E}(X) \subseteq \mathcal{B}(X)$ to be the smallest $\sigma$-algebra with respect to which every function in $X'$ is measurable.
\end{definition}
\begin{definition}
For any finite measure $\mu$ on $\mathcal{E}(X)$ where $X$ is a locally convex space, we define its \textbf{characteristic function} as
\begin{equation}
\hat{\mu} \left( \phi \right) = \int_X \exp \left[ i \phi \left( x \right) \right] \mathrm{d} \mu \left( x \right)
\end{equation}
for all $\phi \in X'$.
\end{definition}
\begin{definition}
For any measure $\mu$ on $\mathcal{E}(X)$ where $X$ is a locally convex space, we define its \textbf{moment-generating function} $Z : X' \to \bar{\mathbb{R}}$ as
\begin{equation}
Z \left( \phi \right) = \int_X \exp \left[ \phi \left( x \right) \right] \mathrm{d} \mu \left( x \right)
\end{equation}
for all $\phi \in X'$.
\end{definition}
The following lemma is immediate from Hölder's inequality and Fatou's lemma.
\begin{lemma}
\label{lem:MomentGeneratingFunctionIsLSCAndLogarithmicallyConvex}
The moment-generating function of a finite measure is logarithmically convex, proper convex and lower semicontinuous whenever $X'$ is equipped with a topology at least as fine as the weak\nobreakdash-$\ast$ topology.
\end{lemma}
\begin{definition}
Let $\mu$ be a measure on a $\sigma$-algebra $\mathcal{A}$ of subsets of a set $X$ and $f : X \to Y$ a function into another set $Y$ equipped with a $\sigma$-algebra $\mathcal{A}'$.
If $f^{-1}(A') \in \mathcal{A}$ whenever $A' \in \mathcal{A}'$, then $f_* \mu := \mu \circ f^{-1}$ is a measure on $\mathcal{A}'$ called the \textbf{pushforward measure of $\mu$ under $f$}.
\end{definition}
\begin{definition}
A measure $\mu$ on $\mathcal{E}(X)$ where $X$ is a locally convex space is a \textbf{centred Gaußian measure} if the pushforward measures $\phi_* \mu$ are centred Gaußian Borel measures on $\mathbb{R}$ for every $\phi \in X'$.
A Borel measure $\mu$ on $X$ is a centred Gaußian measure if its restriction to $\mathcal{E}(X)$ is.
\end{definition}
\begin{definition}
Let $X$ be a topological space.
A finite Borel measure $\mu$ is a \textbf{Radon measure} if, for every Borel set $B \subseteq X$ and every $\epsilon > 0$, there exists a compact set $K \subseteq B$ such that $\mu ( B \setminus K ) < \epsilon$.
\end{definition}
\begin{lemma}[{\cite[Appendix 3]{src:Bogachev:GaußianMeasures}}]
A finite Radon measure on a locally convex space is uniquely determined by its characteristic function.
\end{lemma}
\begin{remark}
It is well known that Radon measures on locally convex spaces are supported on compactly embedded separable reflexive Banach spaces.
We shall however not make use of this fact and always integrate over the whole locally convex space in question i.e. including any null sets outside such supports of the relevant measures.
\end{remark}
\begin{definition}
\label{def:CameronMartinSpace}
Let $\mu$ be a Radon Gaußian measure on a locally convex space $X$.
Note that $X' \subset L^2(\mu)$ by definition and denote by $X'_\mu$ the closure of $X'$ in $L^2(\mu)$.
Given $f \in X'_\mu$, there exists a unique $R_\mu f \in X$ such that \cite[Theorem 3.2.3]{src:Bogachev:GaußianMeasures}
\begin{equation}
\phi \left( R_\mu f \right) = \int_X \phi f \mathrm{d} \mu \qquad \mbox{for all} \ \phi \in X' \, .
\end{equation}
We now define the \textbf{Cameron-Martin space} of $\mu$ as the range $H(\mu) = R_\mu(X'_\mu)$, which is turned into a separable Hilbert space by the inner product induced by $L^2(\mu)$ \cite[Theorem 3.2.7]{src:Bogachev:GaußianMeasures}.
Then, $R_\mu : X_\mu' \to H(\mu)$ is a Hilbert space isomorphism and and the canonical inclusion $H(\mu) \to X$ is continuous.
\end{definition}
\begin{theorem}[{\cite[Theorem 2.2.4]{src:Bogachev:GaußianMeasures}}]
Let $\mu$ be a centred Radon Gaußian measure on a locally convex space $X$.
Then, its characteristic function takes the form
\begin{equation}
\hat{\mu} \left( \phi \right) = \exp \left[ - \frac{1}{2} \phi \left( R_\mu \phi \right) \right] = \exp \left[ - \frac{1}{2} \left \Vert \phi \right \Vert_{L^2(\mu)}^2 \right]
\end{equation}
for every $\phi \in X'$.
\end{theorem}
\begin{theorem}[Cameron-Martin {\cite[Corollary 2.4.3, Remark 3.1.8]{src:Bogachev:GaußianMeasures}}]
\label{thm:CameronMartinTheorem}
Let $\mu$ be a centred Radon Gaußian measure on a locally convex space $X$ and $h \in H(\mu)$ an element of its Cameron-Martin space.
Then the pushforward measure $\mu_h = \mu \circ \tau_h^{-1}$ with $\tau_h : X \to X, x \mapsto x - h$ is equivalent to $\mu$ with the corresponding Radon-Nikodym derivative given by
\begin{equation}
\frac{\mathrm d \mu_h}{\mathrm{d} \mu} \left( x \right)
=
\exp \left[ \left( R_\mu^{-1} h \right) \left( x \right) - \left \Vert h \right \Vert_{H(\mu)}^2 \right]
\end{equation}
for all $x \in X$.
\end{theorem}
\begin{lemma}
\label{lem:BallHasNonzeroMeasure}
Let $\mu$ be a centred Radon Gaußian probability measure on a locally convex space $X$ and $p$ a continuous seminorm on $X$.
Furthermore, let $B^p_r(x)$ denote the open $p$-ball of radius $r$ around $x \in X$.
Then, for every $\epsilon > 0$
\begin{equation}
\mu \left( B^p_{\epsilon} \left(0 \right) \right) > 0 \, .
\end{equation}
\end{lemma}
\begin{proof}
Suppose there is some $\epsilon > 0$ such that $p^{-1} ( [ 0, \epsilon ) )$ has zero $\mu$-measure.
By \cite[theorem 3.6.1]{src:Bogachev:GaußianMeasures}, the closure $\overline{H(\mu)}$ in $X$ has full $\mu$-measure.
Since $H(\mu)$ is separable, it has some countable dense set $S$ and since the inclusion $H(\mu) \to X$ is continuous we arrive at the contradiction
\begin{equation}
1
=
\mu \left( \overline{H(\mu)} \right)
\le
\mu \left( \bigcup_{h \in S} B^p_{\epsilon} \left( h \right) \right)
\le
\sum_{h \in S} \mu \left( B^p_{\epsilon} \left( h \right) \right)
=
0 \, ,
\end{equation}
where the last equality follows from theorem \ref{thm:CameronMartinTheorem}.
\end{proof}
\begin{definition}
\label{def:MeasureConvergence}
A sequence $(\omega_n)_{n \in \mathbb{N}}$ of Radon measures on a locally convex space $X$ \textbf{converges weakly} to another Radon measure $\omega$ if
\begin{equation}
\lim_{n \to \infty} \int_X f \, \mathrm{d} \omega_n
=
\int_X f \, \mathrm{d} \omega
\end{equation}
for all bounded continuous functions $f : X \to \mathbb{R}$.
\end{definition}
\begin{theorem}[Portmanteau theorem {\cite[Theorem 3.8.2]{src:Bogachev:GaußianMeasures}}]
A sequence of Radon probability measures $(\mu_n)_{n \in \mathbb{N}}$ on a locally convex space $X$ converges weakly to a Radon probability measure $\mu$ on $X$ precisely when either (and then both) of the following conditions is satisfied:
\begin{itemize}
\item $\liminf_{n \to \infty} \mu_n (U) \ge \mu(U)$ for every open set $U \subseteq X$,
\item $\liminf_{n \to \infty} \mu_n (C) \le \mu(C)$ for every closed set $C \subseteq X$.
\end{itemize}
\end{theorem}
\begin{lemma}
\label{lem:IntegrationIsLowerSemicontinuousWRTWeakConvergence}
Let $X$ be a locally convex space, $(\mu_n)_{n \in \mathbb{N}}$ a sequence of Borel probability measures on X weakly converging to a Radon measure $\mu$ on $X$ and $f : X \to \mathbb{R}$ a lower semicontinuous function that is bounded from below.
Then
\begin{equation}
\int_X f \mathrm{d} \mu \le \liminf_{n \to \infty} \int_X f \mathrm{d} \mu_n \, .
\end{equation}
\end{lemma}
\begin{proof}
From \cite[Corollary 8.2.5]{src:Bogachev:MeasureTheory2} this is true if $f$ is bounded.
For unbounded $f$, set $f_m = \max \{f, m \}$ which is bounded and lower semicontinuous.
Then,
\begin{equation}
\int_X f \mathrm{d} \mu
=
\sup_{m \in \mathbb{N}} \int_X f_m \mathrm{d} \mu
\le
\sup_{m \in \mathbb{N}} \liminf_{n \to \infty} \int_X f_m \mathrm{d} \mu_n
\le
\liminf_{n \to \infty} \int_X f \mathrm{d} \mu_n \, .
\end{equation}
\end{proof}
\begin{definition}
A sequence $(\mu_n)_{n \in \mathbb{N}}$ of finite Borel measures on a topological space $X$ is \textbf{uniformly tight} if for any $\epsilon > 0$ there exists a compact set $K \subseteq X$ such that $\mu_n(X \setminus K) < \epsilon$ for all $n \in \mathbb{N}$.
\end{definition}

\subsection{Convex Functions}
A convex function is \textbf{proper} if it does not attain the value $-\infty$ and is not equal to the constant function $\infty$.
\begin{definition}
Let $X$ be a Hausdorff, locally convex topological vector space and $f : X \to \bar{\mathbb{R}}$ a proper convex and lower semicontinuous function.
Then, the \textbf{convex conjugate (Legendre-Fenchel transform)} $f^c : X' \to \bar{\mathbb{R}}$ of $f$ is defined as
\begin{equation}
\phi \mapsto \sup_{T \in X} \left[ \phi \left( T \right) - f \left( T \right) \right]
\end{equation}
for all $\phi \in X'$.
If we equip $X'$ with a topology $\tau$ at least as fine as the weak\nobreakdash-$\ast$ topology. it is also proper convex and lower semicontinuous.
We may then also define $(f^c)^c : (X', \tau)' \to \bar{\mathbb{R}}$ and by the well-known \textbf{Fenchel-Moreau theorem} $(f^c)^c \vert_{X} = f$ \cite{src:Zalinescu:ConvexAnalysisInGeneralVectorSpaces}.
\end{definition}
\begin{theorem}[{\cite[Theorem 2.2.9]{src:Zalinescu:ConvexAnalysisInGeneralVectorSpaces}}]
\label{thm:ConvexAndSomewhereBoundedFromAboveImpliesContinuous}
Let $f : X \to \mathbb{R}$ be a convex function on a Hausdorff, locally convex space $X$.
If $f$ is bounded from above on some open subset of $X$, then $f$ is continuous.
\end{theorem}
\begin{theorem}[{\cite[Theorem 2.2.20]{src:Zalinescu:ConvexAnalysisInGeneralVectorSpaces}}]
\label{thm:ConvexAndLowerSemicontinuousImpliesContinuous}
Let $f : X \to \mathbb{R}$ be a convex and lower semicontinuous function on $X$ where $X$ is a Banach space or a reflexive space.
Then $f$ is continuous.
\end{theorem}
\begin{lemma}
\label{lem:ConvexLowerSemicontinuityConvergenceImpliesContinuous}
Let $X$ be a Fréchet space.
Furthermore, let $f_n : X \to \bar{\mathbb{R}}$ be a sequence of convex and lower semicontinuous functions converging pointwise to a function $f : X \to \mathbb{R}$.
Then $f$ is continuous.
\end{lemma}
\begin{proof}
By the pointwise convergence, $f$ is clearly convex.
Hence, by theorem \ref{thm:ConvexAndSomewhereBoundedFromAboveImpliesContinuous} it suffices to show that $Z$ is bounded from above on some open subset of $X$.
Let
\begin{equation}
A_{K, N} = \bigcap_{n \in \mathbb{N}_{\ge N}} f_n^{-1} \left( \left( - \infty, K \right] \right)
\end{equation}
for $K, N \in \mathbb{N}$.
Then all $A_{K,N}$ are closed by the lower semicontinuity of $f_n$.
Furthermore, $\bigcup_{K, N \in \mathbb{N}} A_{K, N} = X$ because $\lim_{n \to \infty} f_n (x) = f(x) < \infty$ for all $x \in X$.
By the Baire category theorem, some $A_{K,N}$ contains an open set, i.e. there exists $N, K \in \mathbb{N}$, $x \in X$ and an open neighbourhood $U \subseteq X$ of zero such that
\begin{equation}
\sup_{y \in U} f_n \left( x + y \right) \le K
\end{equation}
for all $n \in \mathbb{N}_{\ge N}$.
Thus $f$ is bounded from above on $x + U$.
\end{proof}
\section{A Theorem on Lower Semicontinuous Envelopes}
\label{sec:LSCEnvelopes}
In the context of the Wetterich equation, a major role is played by the effective average action.
It is formally defined via the Legendre-Fenchel transform of the logarithm of the partition function which is convex and lower semicontinuous by lemma \ref{lem:MomentGeneratingFunctionIsLSCAndLogarithmicallyConvex}.
Hence, it is clear that the study of the effective average action necessitates some features of convex analysis.

In this section we introduce the concepts of supercoercivity and lower semicontinuous envelopes and prove some simple results that the author believes to be novel.

\begin{definition}
Let $X$ be a locally convex space and $f : X \to \bar{\mathbb{R}}$ a convex function with
\begin{equation}
\sup_{x \in X} \left[ p \left( x \right) - f \left( x \right) \right] < \infty
\end{equation}
for all continuous seminorms $p$ on $X$.
Then $f$ is \textbf{supercoercive}.
\end{definition}
\begin{definition}
Let $X$ be a locally convex space, $Y$ a normed space, $\iota: X \to Y$ linear and continuous withe dense range and $f : X \to \bar{\mathbb{R}}$ a convex and lower semicontinuous function.
Then the \textbf{lower semicontinuous envelope} $LSC(f,\iota) : Y \to \bar{\mathbb{R}}$ of $f$ with respect to $\iota$ is given by
\begin{equation}
LSC(f,\iota) \left( x \right) = \inf \left \{ \liminf_{n \to \infty} f \left( x_n \right) \, \big \vert \, (x_n)_{n \in \mathbb{N}} \text{ in $X$ with } \lim_{n \to \infty} \left \Vert \iota \left( x_n \right) - x \right \Vert_Y = 0 \right \}
\end{equation}
\end{definition}
\begin{remark}
This definition is a generalisation of the one that is commonly used in literature, see e.g. \cite{src:Zalinescu:ConvexAnalysisInGeneralVectorSpaces}.
It is easy to see that $LSC(f,\iota)$ is lower semicontinuous.
Furthermore, if $\iota$ is injective, $LSC(f,\iota)$ is the largest lower semicontinuous function that is not greater than $f \circ \iota^{-1}$ on $\iota(X)$.
\end{remark}
The following lemma shows that our definition of lower semicontinuous envelopes plays well with Legendre-Fenchel conjugation.
\begin{lemma}
\label{lem:LSCConjugation}
Let $X$ be a Hausdorff reflexive space, $p$ a continuous seminorm on $X'_\beta$, $\iota_p : X'_\beta \to (X'_\beta)_p = Y$ the natural map and $f : X \to \bar{\mathbb{R}}$ a proper convex and lower semicontinuous function.
Let $f^p$ denote the restriction of $f$ to the Banach space $Y^*$ considered as a subspace of $X$.
Then $LSC(f^c,\iota_p)^c = f^p$ and if $f^p$ is proper, $LSC(f^c,\iota_p) = (f^p)^c \vert_Y$.
\end{lemma}
\begin{remark}
The transpose $\ltrans{\iota}_p : Y^* \to X$ is injective by \cite[Chapter 4, §4, Corollary 2.3]{src:SchaeferWolff:TopologicalVectorSpaces}.
\end{remark}
\begin{proof}
By Fenchel-Moreau it suffices to show that $LSC(f^c,\iota_p)^c = f^p$.
Plugging in the definition of $LSC(f^c,\iota_p)^c$, for every $x \in Y^*$ there is a sequence $(\phi_n)_{n \in \mathbb{N}}$ in $Y$ and a sequence $(\psi_n)_{n \in \mathbb{N}}$ in $X'_\beta$ with $\lim_{n \to \infty} p(\phi_n - \iota_p \psi_n) = 0$ such that
\begin{equation}
LSC(f^c,\iota_p)^c \left( x \right)
=
\lim_{n \to \infty} \left[ x \left( \phi_n \right) - f^c \left( \psi_n \right) \right] \, .
\end{equation}
But since $x \in Y^*$, we have
\begin{equation}
\lim_{n \to \infty} \left \vert x \left( \phi_n - \iota_p \psi_n \right) \right \vert
\le
\lim_{n \to \infty} C p \left( \phi_n - \iota_p \psi_n \right)
=
0
\end{equation}
for some $C > 0$.
Consequently,
\begin{equation}
LSC(f^c,\iota_p)^c \left( x \right)
=
\lim_{n \to \infty} \left[ x \left( \iota_p \psi_n \right) - f^c \left( \psi_n \right) \right]
\le
f^{cc} \left( \ltrans{\iota}_p x \right)
=
f^p \left( x \right) \, .
\end{equation}
For the converse inequality, note that
\begin{align}
LSC \left(f^c,\iota_p\right)^c \left( x \right)
&\ge
\sup_{\phi \in X'_\beta} \left[ x \left( \iota_p \phi \right) - LSC(f^c,\iota_p) \left( \iota_p \phi \right) \right] \\
&\ge
\sup_{\phi \in X'_\beta} \left[ x \left( \iota_p \phi \right) - f^c \left( \phi \right) \right]
=
f^{cc} \left(  \ltrans{\iota}_p x \right)
=
f^p \left( x \right) \, . \qedhere
\end{align}
\end{proof}
While this lemma demonstrates a useful property, it may in general be difficult to actually calculate the lower semicontinuous envelope of a given function.
In the supercoercive case, however, we obtain a particularly simple expression.
\begin{lemma}
\label{lem:LSCEnvelopeSupercoercive}
Let $X$ be a Hausdorff reflexive space and $f : X \to \bar{\mathbb{R}}$ a convex, lower semicontinuous and supercoercive function.
For any continuous seminorm $p$, let $\iota_p : X \to X_p$ denote the natural map.
Then the lower semicontinuous envelope $g$ of $f$ with respect to $\iota_p$ takes the form
\begin{equation}
g \left( x \right) = \begin{cases}
\inf \left \{ f \left( y \right) : y \in \iota_p^{-1} \left( \left \{ x \right \} \right) \right \}& x \in \iota_p \, , \left(X\right) \\
\infty & \text{otherwise}
\end{cases}
\end{equation}
and is supercoercive.
\end{lemma}
\begin{proof}
It is immediately clear that $LSC(f,\iota_p) (x) \le g(x)$ for all $x \in X_p$.
Suppose that $g(x) > LSC(f,\iota_p) (x)$ for some $x \in X_p$.
Then there is a sequence $(x_n)_{n \in \mathbb{N}}$ in $X$ with $\lim_{n \to \infty} p(x - \iota_p x_n) = 0$ such that $\lim_{n \to \infty} f (x_n) < g(x)$.
If $x_n$ is bounded in $X$ there is a subnet $(y_\alpha)$ that is weakly converging to some $y \in X$ with $x = \iota_p (y)$.
But $f$ is weakly lower semicontinuous by \cite[Theorem 2.2.1]{src:Zalinescu:ConvexAnalysisInGeneralVectorSpaces} such that
\begin{equation}
f \left( y \right) \le \lim_{\alpha} f \left( y_\alpha \right) = \lim_{n \to \infty} f \left( x_n \right) < g \left( x \right) \, ,
\end{equation}
which contradicts the definition of $g$.

If $x_n$ is unbounded in $X$ there is some subsequence $(y_n)_{n \in \mathbb{N}}$ and a continuous seminorm $q$ on $X$ with $\lim_{n \to \infty} q (y_n) = \infty$.
By the supercoercivity of $f$, we must then have $\lim_{n \to \infty} f(x_n) = \lim_{n \to \infty} f(y_n) = \infty$ which is again a contradiction.

For the supercoercivity, let $(x_n)_{n \in \mathbb{N}}$ be any sequence in $\iota_p(X)$.
Then for every $x_n$ there is also some $y_n \in \iota_p^{-1}(\{x_n\})$ with
\begin{equation}
\left \vert f \left( y_n \right) - g \left( x_n \right) \right \vert < \frac{1}{n} \, .
\end{equation}
By the definition of $\iota_p$, $p(x_n) = p(y_n)$ such that
\begin{equation}
\limsup_{n \to \infty} \left[ p \left( x_n \right) - g \left( x_n \right) \right]
\le
\limsup_{n \to \infty} \left[ p \left( y_n \right) - f \left( y_n \right) + \frac{1}{n} \right]
< \infty
\end{equation}
by the supercoercivity of $f$.
Because the sequence $(x_n)$ was arbitrary, $g$ is supercoercive.
\end{proof}
\section{A Renormalisation Theorem}
\label{sec:Convergence}
In 1973 Osterwalder and Schrader gave a rigorous interpretation of Quantum Field Theory as axiomatised by Gårding and Wightman \cite{src:GaardingWightman} in terms of Wick-rotated correlation functions \cite{src:OsterwalderSchrader1,src:OsterwalderSchrader2}.
A modern formulation in terms of a measure on a space of distributions may be found in the book by Glimm and Jaffe \cite{src:GlimmJaffe} and a comprehensive introduction to the Gårding-Wightman axioms is given in the work by Jost \cite{src:Jost}.

The task of constructing measures reflecting physical aspects of reality typically requires a parameter-dependent regularisation.
The parameters can be thought of as a directed set of \enquote{windows of applicability} and the corresponding measures are thought to reflect physics more or less accurately within such a window.
Common choices are the replacement of $\mathbb{R}^d$ with a finite lattice or the introduction of certain \enquote{cutoffs} as in perturbative Quantum Field Theory, leading to models that agree well with experiments e.g within specific ranges of characteristic momenta $[ p_{\mathrm{min}}, p_{\mathrm{max}}]$.
The validity of this methodology comes from experiments in remarkable agreement with concrete calculations \cite{src:QEDPrecisionExperiment}.

Collecting such regularised models in a box results in a net $( \omega_\alpha )_{\alpha \in I}$ of measures.
The big question that remains is whether it is possible to remove the regularisation completely, i.e to interpret the limit $\lim_\alpha \omega_\alpha$ as a bona fide measure satisfying the Osterwalder-Schrader axioms.
Perhaps the most notable example where this program has worked out is the $\phi^4_3$ model \cite{src:phi43}.

In this work, the relevant measures will live on $\mathcal{S}'_\beta$ and we apply the mode of convergence given in definition \ref{def:MeasureConvergence}.

\begin{theorem}[Lévy continuity theorem \cite{src:Meyer:LevyContinuityTheorem}]
\label{thm:LevyContinuity}
Let $(\omega_n)_{n \in \mathbb{N}}$ be a sequence of Borel measures on $\mathcal{S}'_\beta$ such that their characteristic functions converge pointwise to a function that is continuous at zero.
Then there is a Radon measure $\omega$ on $\mathcal{S}'_\beta$ such that $\omega_n$ converges weakly to $\mu$.
\end{theorem}

However, in the context of Wetterich's equation the objects under consideration are not the characteristic functions but rather the convex conjugates of the logarithms of corresponding moment-generating functions $Z_\alpha : \mathcal{S} \to \bar{\mathbb{R}}$ with
\begin{equation}
\phi \mapsto \int_{\mathcal{S}'_\beta} \exp \left[ T \left( \phi \right) \right] \mathrm{d} \omega_\alpha \left( T \right) \, .
\end{equation}

Consequently, we shall work out a convergence theorem that deals with (the logarithms of) moment-generating functions and their convex conjugates.

The convergence of convex conjugate functions in infinite dimensions was originally studied on reflexive Banach spaces by Mosco who introduced the now-called Mosco convergence \cite{src:Mosco:MoscoConvegence}.
It was later generalised to locally convex spaces by Beer and Borwein \cite{src:BeerBorwein}, de Acosta \cite{src:deAcosta} and Zabell \cite{src:Zabell:MoscoLocallyConvex}.
Following the latter, we give the following definition:\footnote{In \cite{src:Zabell:MoscoLocallyConvex} Zabell gives the definition of Mosco convergence in terms of Mackey convergence. On Banach spaces and reflexive locally convex spaces, the notions of norm convergence and Mackey convergence coincide \cite[Chapter 4, Theorem 3.4]{src:SchaeferWolff:TopologicalVectorSpaces} and since we shall not work on more general spaces, the given definitions suffice.}
\begin{definition}
Let $X$ be a Banach space or a reflexive, locally convex space and $(f_n)_{n = 0}^{\infty} : X \to \bar{\mathbb{R}}$ be a sequence of extended real-valued functions on $X$.
Then
\begin{itemize}
\item $f_n$ \textbf{(M1)-converges} to $f_0$ if for every $x \in X$ there is some sequence $(x_n)_{n \in \mathbb{N}}$ that converges to $x$ such that
\[ \limsup_{n \to \infty} f_n \left( x_n \right) \le f_0 \left( x \right) \, .\]
\item $f_n$ \textbf{(K2)-converges} to $f_0$ if for every $x \in X$ and every sequence $(x_n)_{n \in \mathbb{N}}$ that converges to $x$
\[ \liminf_{n \to \infty} f_n \left( x_n \right) \ge f_0 \left( x \right) \, .\]
\item $f_n$ \textbf{(M2)-converges} to $f_0$ if for every $x \in X$ and every sequence $(x_n)_{n \in \mathbb{N}}$ that converges weakly to $x$
\[ \liminf_{n \to \infty} f_n \left( x_n \right) \ge f_0 \left( x \right) \, .\]
\end{itemize}
If $f_n$ (M1)- and (K2)-converges to $f_0$, we shall say that $f_n$ \textbf{epi-converges} to $f_0$ or converges to $f_0$ in the \textbf{Painlevé-Kuratowski} sense \cite[Theorem 5.3.5]{src:Beer:TopologiesOnClosedAndClosedConvexSets}.
If $f_n$ (M1)- and (M2)-converges to $f_0$, we shall say that $f_n$ \textbf{Mosco converges} to $f_0$.
\end{definition}
\begin{remark}
Clearly, Mosco convergence implies epi-convergence.
\end{remark}
Using Mosco convergence it is possible to express a continuity theorem of the Legendre-Fenchel conjugation.
As Zabell proved in \cite{src:Zabell:MoscoLocallyConvex}, however, the Legendre-Fenchel conjugation is not a homeomorphism with respect to Mosco convergence.
A stronger notion offering this feature is given by the so-called \textbf{Attouch-Wets convergence} which we shall exploit in our final theorem.
Its precise formulation is somewhat complicated such that we refer to \cite{src:EpigraphicalAndUniformConvergence} for a definition.
In fact, we will not need to prove Attouch-Wets convergence from first principles, but only indirectly through the theorems presented in \cite{src:EpigraphicalAndUniformConvergence} such that a lack of definition appears tolerable to the author.
Another great achievement of the Attouch-Wets convergence, is its compatibility with pointwise convergence which was also worked out in \cite{src:EpigraphicalAndUniformConvergence}.

Let us begin with the following simple lemma which trivially follows from the metrisability of $\mathcal{S}$.
\begin{lemma}
\label{lem:NullSequenceDivergence}
Let $( \phi_n )_{n \in \mathbb{N}}$ be a null sequence in $\mathcal{S}$.
Then, there exists a monotonically increasing sequence $( t_n )_{n \in \mathbb{N}}$ in $\mathbb{R}_{> 0}$ such that
\begin{equation}
\lim_{n \to \infty} t_n = \infty
\qquad \text{and} \qquad
\lim_{n \to \infty} t_n \phi_n = 0 \, .
\end{equation}
\end{lemma}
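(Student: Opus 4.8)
The plan is to use that $\mathcal{S}$ is a Fréchet space, so its topology is generated by a countable family of seminorms which, without loss of generality, may be taken to be increasing, $\lVert \cdot \rVert_1 \le \lVert \cdot \rVert_2 \le \cdots$ (for instance the usual Schwartz seminorms $\lVert \phi \rVert_m = \sup_{\lvert \alpha \rvert, \lvert \beta \rvert \le m} \sup_{x \in \mathbb{R}^d} \lvert x^\alpha \partial^\beta \phi(x) \rvert$). Convergence $\phi_n \to 0$ in $\mathcal{S}$ is then equivalent to $\lVert \phi_n \rVert_m \to 0$ as $n \to \infty$ for every fixed $m$, and the statement we must prove becomes purely one of diagonalisation: find scalars $t_n \to \infty$ growing slowly enough that $t_n \lVert \phi_n \rVert_m \to 0$ still holds for all $m$ at once.

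Concretely, I would first construct inductively a strictly increasing sequence $(N_k)_{k \in \mathbb{N}}$ of natural numbers such that $\lVert \phi_n \rVert_k \le k^{-2}$ for all $n \ge N_k$; this is possible because, for the fixed index $k$, one has $\lVert \phi_n \rVert_k \to 0$, so such an $N_k > N_{k-1}$ exists. Then I would let $(t_n)$ be constant on the blocks cut out by the $N_k$: set $t_n := 1$ for $n < N_1$ and $t_n := k$ whenever $N_k \le n < N_{k+1}$. By construction $(t_n)$ is monotonically increasing, and since $t_n \ge k$ once $n \ge N_k$ it diverges to $\infty$. (If strict monotonicity is desired, replacing $t_n$ by $t_n + 1 - 2^{-n}$ is a harmless perturbation that changes neither the divergence nor the conclusion below.)

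It then remains to verify $t_n \phi_n \to 0$, i.e.\ $t_n \lVert \phi_n \rVert_m \to 0$ for each $m$. Fixing $m$ and taking $n \ge N_m$, the index $n$ lies in some block $k \ge m$, so $t_n = k$ and, by monotonicity of the seminorms, $\lVert \phi_n \rVert_m \le \lVert \phi_n \rVert_k \le k^{-2}$; hence $t_n \lVert \phi_n \rVert_m \le 1/k$. Since the block index $k \to \infty$ as $n \to \infty$, this gives $t_n \lVert \phi_n \rVert_m \to 0$, as required, and $m$ was arbitrary.

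I do not expect a genuine obstacle: this is a textbook diagonalisation in a metrizable topological vector space. The only two points requiring minor care are arranging the generating seminorms to be increasing (so that control of $\lVert \cdot \rVert_k$ automatically controls all $\lVert \cdot \rVert_j$ with $j \le k$) and balancing the decay rate $k^{-2}$ against the growth rate $k$ of $t_n$ so that the product still vanishes; any pair of sequences $\varepsilon_k \downarrow 0$ and $c_k \uparrow \infty$ with $c_k \varepsilon_k \to 0$ would serve equally well.
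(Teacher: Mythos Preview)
Your proposal is correct and is essentially the same argument as the paper's: both pick an increasing fundamental system of seminorms, build a strictly increasing sequence $(N_k)$ with $\lVert \phi_n \rVert_k$ small (the paper uses the bound $1/(k(k+1))$, you use $1/k^2$) for $n \ge N_k$, let $t_n$ be piecewise constant on the resulting blocks, and verify $t_n \lVert \phi_n \rVert_m \le 1/k$ for $n$ in block $k \ge m$. The differences are purely cosmetic.
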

For the remainder of the section, we will use the following abbreviation.
\begin{definition}
For a continuous seminorm $p$ on $\mathcal{S}'_\beta$, let $\iota_p : \mathcal{S}'_\beta \to \mathcal{S}'_p$ denote the natural map to the corresponding completion.
Furthermore, for any function $f$ on $\mathcal{S}$ let $f^p$ denote its restriction to $( \mathcal{S}'_p )^*$.
\end{definition}
\begin{remark}
$f^p$ is well-defined because $\iota_p$ has dense range, whence its transpose $\ltrans{\iota}_p : (\mathcal{S}'_p)^* \to \mathcal{S}$ is injective \cite[Chapter 4, Corollary 2.3]{src:SchaeferWolff:TopologicalVectorSpaces}.
\end{remark}
We may now formulate a sufficient condition for the weak convergence of a sequence of measures.
\begin{theorem}
\label{thm:ConvergenceEquivalentMosco}
Let $( \omega_n )_{n \in \mathbb{N}}$ be a sequence of Borel probability measures on $\mathcal{S}'_\beta$ and $( Z_n )_{n \in \mathbb{N}}$ the corresponding moment-generating functions.
Suppose that
\begin{equation}
\limsup_{n \to \infty} Z_{n} \left( \phi \right) < \infty
\end{equation}
for all $\phi \in \mathcal{S}$.
Then, $\omega_n$ converges weakly to another Borel probability measure $\omega$ if and only if
\begin{itemize}
\item $Z_n$ Mosco converges to a convex, lower semicontinuous function $Z : \mathcal{S} \to \mathbb{R}$ and
\item for all continuous seminorms $p$ on $\mathcal{S}'_\beta$ the restrictions $Z_n^p$ Mosco converge to the corresponding restriction $Z^p$.
\end{itemize}
Moreover, in the affirmative case $Z$ is continuous and the moment-generating function of $\omega$.
\end{theorem}
\begin{proof}
Suppose $\omega_n \to \omega$ weakly and fix some $\phi \in \mathcal{S}$.
Then, we clearly have pointwise convergence of the characteristic functions $\widehat{\phi_\ast \omega_n}$ of the one-dimensional pushforward measures to $\widehat{\phi_\ast \omega}$.
Because $Z_n(\phi)$ and $Z_n(-\phi)$ are eventually finite it is known that $\widehat{\phi_\ast \omega_n}$ has an analytic continuation  $\overline{\phi_\ast \omega_n}$ to $\{ z \in \mathbb{C} : \vert z \vert \le 1 \}$ satisfying \cite{src:Lukacs:CharacteristicFunctions}
\begin{equation}
\label{eq:LukacsBound}
\max \left \{ Z_n \left( \phi \right), Z_n \left( -\phi \right) \right \}
=
\sup_{\left \vert z \right \vert \le 1} \overline{\phi_\ast \omega_n}
\left( z \phi \right)
\end{equation}
where $z$ is now a complex variable.
Hence the family $\{ \overline{\phi_\ast \omega_n} : n \in \mathbb{N} \}$ is eventually uniformly bounded within the unit ball of $\mathbb{C}$.
By the Vitali-Porter theorem \cite{src:Schiff:NormalFamilies}, pointwise convergence on the real axis implies pointwise convergence on the imaginary axis.
Since this is true for all $\phi \in \mathcal{S}$, we obtain pointwise convergence of $Z_n$ to some real-valued function $Z$ which is convex and continuous by lemma \ref{lem:ConvexLowerSemicontinuityConvergenceImpliesContinuous}.
This clearly implies that $Z_n$ (M1)-converges to $Z$.
Moreover, by lemma \ref{lem:IntegrationIsLowerSemicontinuousWRTWeakConvergence}, the moment-generating function $M$ of $\omega$ is bounded by $Z$ and is, in particular, finite everywhere.
But then the analytic continuations of $t \to M(t \phi)$ and $t \to Z(t \phi)$ to the imaginary axis must both equal the characteristic function $\widehat{\phi_\ast \omega}$.
Consequently, $M = Z$.
Regarding the (M2)-convergence, it is straightforward to see that the methods used for proving \cite[theorem~1.1]{src:VaryingFatou} apply also in our case such that for every weakly convergent sequence $\phi_n \to \phi$ in $\mathcal{S}$, we have
\begin{equation}
\begin{aligned}
Z(\phi)
&=
\int_{\mathcal{S}'_\beta} \exp \left[ \phi \left(T \right) \right] \mathrm{d} \omega \left( T \right)
=
\int_{\mathcal{S}'_\beta} \liminf_{n \to \infty, T' \to T} \exp \left[ \phi_n \left(T' \right) \right] \mathrm{d} \omega \left( T \right) \\
&\le
\liminf_{n \to \infty} \int_{\mathcal{S}'_\beta} \exp \left[ \phi_n \left(T \right) \right] \mathrm{d} \omega_n \left( T \right)
= \liminf_{n \to \infty} Z_n \left( \phi_n \right) \, ,
\end{aligned}
\end{equation}
where $T' \to T$ is considered in the topology of $\mathcal{S}'_\beta$.
The second equality follows from the boundedness of $(\phi_n)_{n \in \mathbb{N}}$ by the definition of the strong dual topology.
Hence, $Z_n$ (M2)-converges to $Z$.
Now, let $p$ be a continuous seminorm on $\mathcal{S}'_\beta$.
Clearly $Z^p_n$ (M2)-converges to $Z^p$ since every weakly convergent sequence in $( \mathcal{S}'_p )^*$ is also weakly convergent in $\mathcal{S}(\mathbb{R}^d)$.
Finally, the pointwise convergence ensures (M1)-convergence of $Z_n^p$ to $Z^p$.

Conversely, assume that $Z_n$ Mosco converges to $Z$ and that the same holds for the corresponding restrictions as above.
For any $\phi \in \mathcal{S}$ fix a continuous seminorm $p$ on $\mathcal{S}'_\beta$ such that $\phi \in ( \mathcal{S}'_p )^*$.
Then, by assumption, $Z^p_n$ Mosco converges to $Z^p$ and, in particular, also converges in the Painlevé-Kuratowski sense.
Furthermore, because all $Z_n$ are lower semicontinuous (see lemma \ref{lem:MomentGeneratingFunctionIsLSCAndLogarithmicallyConvex}), so are all $Z_n^p$.
Since $\limsup_{n \to \infty} Z^p_n ( \psi ) < \infty$ for all $\psi \in ( \mathcal{S}'_p )^*$, we have that $Z^p_n$ converges pointwise to $Z^p$ by \cite[Corollary 2.3.]{src:EpigraphicalAndUniformConvergence}.
Since, $\phi \in \mathcal{S}$ was arbitrary, we obtain $Z_n \to Z$ pointwise and the continuity of $Z$ by lemma \ref{lem:ConvexLowerSemicontinuityConvergenceImpliesContinuous}.

In analogy to the first part of the proof, we may now use the pointwise convergence of $\overline{\phi_\ast \omega_n}$ along the imaginary axis, the bound given in equation \ref{eq:LukacsBound} and the Vitali-Porter theorem to conclude that $\hat{\omega}_n$ converges pointwise to some function $c : \mathcal{S}(\mathbb{R}^d) \to \mathbb{C}$.
By theorem \ref{thm:LevyContinuity}, it just remains to show that $c$ is continuous at zero.
By the continuity of $Z$, there is some balanced neighbourhood $U \subseteq S$ of the origin such that
\begin{equation}
\sup_{\phi \in U} \left \vert Z \left( \phi \right) - Z \left( 0 \right) \right \vert \le 1
\qquad\text{and hence}\qquad
0 \le \sup_{\phi \in U} Z \left( \phi \right) \le 2 \, .
\end{equation}
But then, for all $t \in ( 0, 1 )$,
\begin{equation}
\begin{aligned}
\sup_{\phi \in t U} \left \vert c \left( \phi \right) - c \left( 0 \right) \right \vert
&=
\sup_{\phi \in U} \lim_{n \to \infty} \left \vert \hat{\omega}_n \left( t \phi \right) - \hat{\omega}_n \left( 0 \right) \right \vert
\le
\sup_{\phi \in U} \limsup_{n \to \infty} \int_{\mathcal{S}'_\beta} t \left \vert T \left( \phi \right) \right \vert \mathrm{d} \omega_n \left( T \right) \\
&\le
\frac{t}{2} \sup_{\phi \in U} \limsup_{n \to \infty} \left[ Z_n \left( \phi \right)  + Z_n \left(- \phi \right) \right]
\le
2 t \, .
\end{aligned}
\end{equation}
Since every null net $(\phi_\alpha)_{\alpha \in I}$ in $\mathcal{S}$ is eventually in $t U$ for all $t \in ( 0, 1 )$, the continuity of $c$ at the origin follows.
\end{proof}
A simple corollary to this theorem, is that we may in fact get rid of the Mosco convergence in $\mathcal{S}$.
\begin{corollary}
\label{cor:WeakConvergenceBanach}
Let $( \omega_n )_{n \in \mathbb{N}}$ be a sequence of Borel probability measures on $\mathcal{S}'_\beta$ and $( Z_n )_{n \in \mathbb{N}}$ the corresponding moment-generating functions.
Suppose that
\begin{equation}
\limsup_{n \to \infty} Z_{n} \left( \phi \right) < \infty
\end{equation}
for all $\phi \in \mathcal{S}$.
Then, $\omega_n$ converges weakly to another Borel probability measure $\omega$ if and only if there exists a lower semicontinuous, convex function $Z : \mathcal{S}(\mathbb{R}^d) \to \mathbb{R}$ such that for all continuous norms $p$ on $\mathcal{S}'_\beta$, the restrictions $Z_n^p$ Mosco converge to $Z^p$.

Moreover, in the affirmative case $Z$ is continuous and the moment-generating function of $\omega$.
\end{corollary}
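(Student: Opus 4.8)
The strategy is to reduce both implications to Theorem~\ref{lem:ConvergenceEquivalentMosco}. The forward implication is immediate: if $\omega_n$ converges weakly to $\omega$, then Theorem~\ref{lem:ConvergenceEquivalentMosco} already produces a proper convex and lower semicontinuous $Z : \mathcal{S} \to \mathbb{R}$ which is continuous, coincides with the moment-generating function of $\omega$, and whose restrictions $Z_n^p$ Mosco converge to $Z^p$ for every continuous norm $p$ on $\mathcal{S}'_\beta$. This $Z$ witnesses the condition of the corollary, and the \enquote{moreover} part carries over verbatim.

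For the converse, the point is that the second half of the proof of Theorem~\ref{lem:ConvergenceEquivalentMosco} never actually invokes Mosco convergence of $Z_n$ to $Z$ on all of $\mathcal{S}$: it uses only the restriction conditions $Z_n^p \to Z^p$, the uniform bound $\limsup_{n\to\infty} Z_n(\phi) < \infty$, and the fact that $Z$---and hence each $Z^p$---is proper, convex and lower semicontinuous. Convexity and lower semicontinuity descend to $Z^p$ because the inclusion $(\mathcal{S}'_p)^\ast \hookrightarrow \mathcal{S}$ is continuous, and properness is clear since $0 \in (\mathcal{S}'_p)^\ast$ with $Z(0) = 1$. Hence that argument applies unchanged: first one extracts pointwise convergence $Z_n \to Z$ on $\mathcal{S}$ from the Painlev\'e-Kuratowski convergence of the $Z_n^p$ together with the $\limsup$ bound; then one runs the $p+q$ construction built on Lemma~\ref{lem:NullSequenceDivergence} to conclude that $Z$ is continuous; and finally one combines equation~\ref{eq:LukacsBound}, the Vitali-Porter theorem and continuity of $Z$ at the origin to obtain that the characteristic functions $\hat{\omega}_n$ converge pointwise to a function continuous at $0$, which forces $\omega_n$ to converge weakly to some Borel probability measure $\omega$ with moment-generating function $Z$.

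If one would rather not reopen the proof of the theorem, an equivalent route is to first upgrade the hypothesis to honest Mosco convergence of $Z_n$ to $Z$ on $\mathcal{S}$ and then quote Theorem~\ref{lem:ConvergenceEquivalentMosco} directly: the pointwise convergence just extracted gives the strong recovery-sequence half of Mosco convergence via the constant sequence $\phi_n \equiv \phi$, while for the lower-limit half one notes that, given any weakly convergent $\phi_n \to \phi$ in $\mathcal{S}$, the set $B = \{\phi_n : n \in \mathbb{N}\} \cup \{\phi\}$ is bounded, so $q(T) = \sup_{\psi \in B} |T(\psi)|$ is a continuous seminorm on $\mathcal{S}'_\beta$, the sequence lies in and converges weakly within $(\mathcal{S}'_{p+q})^\ast$ for any continuous norm $p$, and the hypothesised Mosco convergence of $Z_n^{p+q}$ yields $\liminf_{n\to\infty} Z_n(\phi_n) = \liminf_{n\to\infty} Z_n^{p+q}(\phi_n) \ge Z^{p+q}(\phi) = Z(\phi)$. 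The only item requiring any care---the mild \enquote{obstacle}---is precisely this topological bookkeeping, i.e.\ that weakly convergent sequences in $\mathcal{S}$ can be captured inside a single $(\mathcal{S}'_{p+q})^\ast$ and that properness, convexity and lower semicontinuity descend from $Z$ to the $Z^p$; both are taken care of by the continuity of the inclusions $(\mathcal{S}'_p)^\ast \hookrightarrow \mathcal{S}$ and the seminorm construction already used in the proof of Theorem~\ref{lem:ConvergenceEquivalentMosco}, so no new difficulty arises.
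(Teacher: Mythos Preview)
Your first route is correct and is in fact a cleaner observation than what the paper does: the converse half of Theorem~\ref{lem:ConvergenceEquivalentMosco} indeed never touches Mosco convergence on $\mathcal{S}$ itself, only the restrictions $Z_n^p$, so the corollary follows by simply rereading that argument under the weaker hypothesis. The paper instead takes your second route---it upgrades the restriction hypothesis to full Mosco convergence on $\mathcal{S}$ and then invokes the theorem as a black box---so your primary argument is genuinely different and arguably more economical.

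That said, your sketch of the second route has a gap the paper closes explicitly. You assert that a weakly convergent sequence $\phi_n \to \phi$ in $\mathcal{S}$ automatically ``converges weakly within $(\mathcal{S}'_{p+q})^\ast$ for any continuous norm $p$''. This is not obvious: weak convergence in the Banach dual means convergence against all of $(\mathcal{S}'_{p+q})^{\ast\ast}$, and without reflexivity that bidual may strictly contain $\mathcal{S}'_{p+q}$, so testing against $\mathcal{S}'_\beta$ does not suffice. The paper handles this by invoking nuclearity of $\mathcal{S}'_\beta$ to majorise $q$ by a continuous \emph{Hilbert} norm $p$; then $(\mathcal{S}'_p)^\ast$ is reflexive, the bounded set $B$ is weakly sequentially compact there, and one identifies every weak subsequential limit with $\phi$ via the original weak convergence in $\mathcal{S}$. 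So if you want Route~2 to stand on its own, you need that reflexivity step; as written it is incomplete, though your Route~1 makes the overall proposal sound.
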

\begin{proof}
By theorem \ref{thm:ConvergenceEquivalentMosco}, we just need to prove that the Mosco convergence of $Z_n^p$ implies that of $Z_n$.
Since for every $\phi$ there is a continuous seminorm $p$ on $\mathcal{S}'_\beta$ such that $\phi \in ( \mathcal{S}'_p )^*$, we clearly have that $Z_n$ (M1)-converges to $Z$.

For the (M2)-convergence, note that for any weakly convergent sequence $\phi_n \to \phi$ in $\mathcal{S}$, the set
\begin{equation}
B = \left \{ \phi_n : n \in \mathbb{N} \right \} \cup \left \{ \phi \right \} \subset \mathcal{S}
\end{equation}
is bounded and induces a continuous seminorm $q$ on $\mathcal{S}'_\beta$.
Since, $\mathcal{S}'_\beta$ is nuclear, $q$ is majorised by some continuous Hilbert norm $p$.
Hence, $B$ is a bounded subset of the reflexive Banach space $( \mathcal{S}'_p )^*$ which is separable because $\mathcal{S}'_p$ is.
Consequently, every subsequence of $( \phi_n )_{n \in \mathbb{N}}$ has a weakly convergent subsequence in $( \mathcal{S}'_p )^*$.
By assumption, it follows that $\phi_n$ converges weakly to $\phi$ in $( \mathcal{S}'_p )^*$ and the result follows from the (M2)-convergence of $Z_n^p$.
\end{proof}
The above corollary may appear rather inelegant and in fact we can do a lot better by using the following lemma.
\begin{lemma}
\label{lem:BanachCompactConvImpliesBoundedConv}
Let $( Z_n )_{n \in \mathbb{N}}$ be a sequence of proper convex and lower semicontinuous functions from $\mathcal{S}$ to $\bar{\mathbb{R}}$.
Given another function $Z : \mathcal{S} \to \bar{\mathbb{R}}$, the following are equivalent:
\begin{enumerate}[label=(\roman*)]
\item For all continuous seminorms $p$ on $\mathcal{S}'_\beta$, the restrictions $Z_n^p$ converge uniformly on compact sets to $Z^p$.
\item For all continuous seminorms $p$ on $\mathcal{S}'_\beta$, the restrictions $Z_n^p$ converge uniformly on bounded sets to $Z^p$.
\end{enumerate}
\end{lemma}
\begin{proof}
Let $p$ be a continuous seminorm on $\mathcal{S}'_\beta$.
By nuclearity, there is a continuous seminorm $q > p$ such that the natural map $\mathcal{S}'_q \to \mathcal{S}'_p$ has dense range and is nuclear, thus in particular is compact.
But then its transpose is compact and injective such that every bounded subset of $( \mathcal{S}'_p )^*$ is mapped injectively to a compact subset of $( \mathcal{S}'_q )^*$ on which we have uniform convergence.

The converse is clear, since every compact set is bounded.
\end{proof}
The uniform convergence on bounded sets enables the following corollary.
\begin{corollary}
\label{cor:WeakConvergenceUniform}
Let $( \omega_n )_{n \in \mathbb{N}}$ be a sequence of Borel probability measures on $\mathcal{S}'_\beta$ and $( Z_n )_{n \in \mathbb{N}}$ the corresponding moment-generating functions.
Then the following are equivalent:
\begin{enumerate}[label=(\roman*)]
\item $\limsup_{n \to \infty} Z_n ( \phi ) < \infty$ for all $\phi \in \mathcal{S}$ and $\omega_n$ converges weakly to another Borel probability measure $\omega$.
\item There exists a convex and continuous function $Z : \mathcal{S} \to \mathbb{R}$ such that for all continuous seminorms $p$ on $\mathcal{S}'_\beta$, the restrictions $Z^p_n$ Attouch-Wets converge to $Z^p$.
\item $Z_n$ converges to some function $Z : \mathcal{S}(\mathbb{R}^d) \to \mathbb{R}$ uniformly on bounded sets.
\end{enumerate}
Moreover, in this case $Z$ is the moment-generating function of $\omega$.
\end{corollary}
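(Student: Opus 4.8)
The plan is to prove the cycle of implications $(i)\Rightarrow(iii)\Rightarrow(ii)\Rightarrow(i)$, reading off the supplementary statements about $Z$ along the way. Before starting I would record the \emph{bounded-set dictionary} between $\mathcal{S}$ and the Banach spaces $(\mathcal{S}'_p)^\ast$: a bounded set $B\subset\mathcal{S}$ induces the continuous seminorm $T\mapsto\sup_{\psi\in B}\left\vert T(\psi)\right\vert$ on $\mathcal{S}'_\beta$, which by nuclearity lies below a continuous Hilbert norm $p$, and then $B$ sits in the unit ball of $(\mathcal{S}'_p)^\ast$; conversely a bounded subset of $(\mathcal{S}'_p)^\ast$ is weakly bounded in $\mathcal{S}$ — viewed as the dual of $\mathcal{S}'_\beta$ — since every $T\in\mathcal{S}'_\beta$ is $p$-continuous, and hence bounded in $\mathcal{S}$. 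So uniform convergence of $Z_n$ to $Z$ on bounded subsets of $\mathcal{S}$ and uniform convergence of $Z_n^p$ to $Z^p$ on bounded subsets of $(\mathcal{S}'_p)^\ast$ for all continuous norms $p$ are the same statement. I would also use two standard facts: for proper convex lower semicontinuous functions on a normed space, Attouch--Wets convergence coincides with uniform convergence on bounded sets \cite{src:Beer:AttouchWetsConvergence}; and such functions, having convex strongly closed sublevel sets, are weakly lower semicontinuous.

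For $(i)\Rightarrow(iii)$ I would start from $\omega_n\to\omega$ weakly together with $\limsup_n Z_n(\phi)<\infty$. Corollary~\ref{cor:WeakConvergenceBanach} then supplies a continuous convex $Z:\mathcal{S}\to\mathbb{R}$, the moment-generating function of $\omega$, with $Z_n^p$ Mosco converging to $Z^p$ for every continuous norm $p$. Mosco convergence implies Painlev\'e-Kuratowski convergence, and since the hypothesis $\limsup_n Z_n(\phi)<\infty$ restricts to the subspace $(\mathcal{S}'_p)^\ast\subseteq\mathcal{S}$, \cite[corollary~2.3]{src:EpigraphicalUniformConvergence} upgrades this to uniform convergence of $Z_n^p$ to $Z^p$ on compact subsets of $(\mathcal{S}'_p)^\ast$, and then Lemma~\ref{lem:BanachCompactConveImpliesBoundedConv} to uniform convergence on bounded subsets. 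By the dictionary this is exactly $(iii)$.

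For $(iii)\Rightarrow(ii)$: if $Z_n\to Z$ uniformly on bounded subsets of $\mathcal{S}$, then $Z$ is convex as a pointwise limit of the convex $Z_n$, and bounded on bounded sets because the $Z_n$ are (they are continuous on the barrelled space $\mathcal{S}$) and the convergence is uniform there. Moreover $Z$ is continuous, hence lower semicontinuous: $\mathcal{S}$ is metrizable, a convergent sequence together with its limit is compact, and a uniform limit of continuous functions on a compact set is continuous. Restricting to any $(\mathcal{S}'_p)^\ast$, whose bounded sets are bounded in $\mathcal{S}$, yields uniform convergence of $Z_n^p$ to $Z^p$ on bounded sets, i.e.\ Attouch--Wets convergence, which is $(ii)$.

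For $(ii)\Rightarrow(i)$: by \cite{src:Beer:AttouchWetsConvergence} the assumed Attouch--Wets convergence of $Z_n^p$ to $Z^p$ is again uniform convergence on bounded subsets of $(\mathcal{S}'_p)^\ast$. This gives the $M_1$ part of Mosco convergence (use the constant recovery sequence and pointwise convergence), and the $M_2$ part: if $\phi_n\rightharpoonup\phi$ weakly in $(\mathcal{S}'_p)^\ast$ then $\{\phi_n\}$ is bounded, so $\left\vert Z_n^p(\phi_n)-Z^p(\phi_n)\right\vert\to0$ and hence $\liminf_n Z_n^p(\phi_n)=\liminf_n Z^p(\phi_n)\ge Z^p(\phi)$ by weak lower semicontinuity of the convex continuous function $Z^p$. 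Thus $Z_n^p$ Mosco converges to $Z^p$ for every continuous norm $p$. Since $Z$ is real-valued (hence proper), convex and lower semicontinuous, and since each $\phi\in\mathcal{S}$ lies in some $(\mathcal{S}'_p)^\ast$ on which pointwise convergence gives $\limsup_n Z_n(\phi)=Z(\phi)<\infty$, Corollary~\ref{cor:WeakConvergenceBanach} applies and yields weak convergence $\omega_n\to\omega$ to a Borel probability measure whose moment-generating function is the continuous $Z$. This closes the cycle and establishes the final assertion. The step I expect to be the main obstacle is $(ii)\Rightarrow(i)$: one has to manufacture Mosco convergence of the restrictions directly, using only weak lower semicontinuity of convex closed functions and in particular avoiding any appeal to reflexivity of $(\mathcal{S}'_p)^\ast$, while being careful to recover the $\limsup$ hypothesis that Corollary~\ref{cor:WeakConvergenceBanach} presupposes.
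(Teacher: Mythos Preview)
Your proof is correct and uses essentially the same ingredients as the paper: the bounded-set dictionary between $\mathcal{S}$ and the spaces $(\mathcal{S}'_p)^\ast$, Corollary~\ref{cor:WeakConvergenceBanach}, Lemma~\ref{lem:BanachCompactConveImpliesBoundedConv}, and the equivalence between Attouch--Wets convergence and uniform convergence on bounded sets for convex functions whose limit is bounded on bounded sets. The only real difference is the cycle direction: the paper runs $(i)\Rightarrow(ii)\Rightarrow(iii)\Rightarrow(i)$, citing \cite[Lemma~1.4 and Corollary~2.2]{src:EpigraphicalUniformConvergence} for the Attouch--Wets\,$\leftrightarrow$\,uniform passages, and dispatches $(iii)\Rightarrow(i)$ in one line via compact generation of $\mathcal{S}$; you run $(i)\Rightarrow(iii)\Rightarrow(ii)\Rightarrow(i)$ and in the last step explicitly rebuild Mosco convergence (the $M_2$ part via weak lower semicontinuity of convex closed functions) before invoking Corollary~\ref{cor:WeakConvergenceBanach}. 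Your $(ii)\Rightarrow(i)$ is thus more self-contained than the paper's terse $(iii)\Rightarrow(i)$, at the cost of a little extra writing; conversely the paper's ordering lets it cite the Attouch--Wets literature directly rather than unwinding to Mosco by hand. One small caution: your blanket claim that Attouch--Wets convergence \emph{coincides} with uniform convergence on bounded sets for proper convex lsc functions is too strong as stated (it fails when the limit takes the value $+\infty$), but in every place you use it the limit $Z^p$ is real-valued and bounded on bounded sets, so the equivalence does hold there.
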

\begin{proof}
$(i) \implies (ii)$: By corollary \ref{cor:WeakConvergenceBanach}, for all continuous seminorms $p$ on $\mathcal{S}'$, $Z_n^p$ Mosco-converge to $Z^p$ for some convex and continuous function $Z : \mathcal{S} \to \mathbb{R}$.
In particular, $Z_n^p$ also converge to $Z^p$ in the Painlevé-Kuratowski sense which by \cite[Corollary 2.3]{src:EpigraphicalAndUniformConvergence} implies the uniform convergence on compact subsets of $(\mathcal{S}'_p)^*$.
Applying lemma \ref{lem:BanachCompactConvImpliesBoundedConv} we also obtain uniform convergence on bounded subsets of $(\mathcal{S}'_p)^*$.
Furthermore, every bounded subset of $\mathcal{S}$ is precompact, such that $Z^p$ is bounded on bounded subsets of $(\mathcal{S}'_p)^*$ which implies the Attouch-Wets convergence by \cite[Lemma 1.4]{src:EpigraphicalAndUniformConvergence}.

$(ii) \implies (iii)$: Every bounded subset $B \subset \mathcal{S}$ induces a continuous seminorm $p$ on $\mathcal{S}'_\beta$ such that $B$ is a bounded subset of $( \mathcal{S}'_p )^*$.
Furthermore, $Z$ is bounded on $B$ by the precompactness of $B$ in $\mathcal{S}$.
Hence, \cite[Corollary 2.2.]{src:EpigraphicalAndUniformConvergence} implies the uniform convergence on $B$.

$(iii) \implies (i)$: Let $p$ be a continuous seminorm on $\mathcal{S}'_\beta$.
By the pointwise convergence, we clearly have that $Z_n^p$ (M1)-converges to $Z^p$ while the continuity of $Z^p$ follows from lemma \ref{lem:ConvexLowerSemicontinuityConvergenceImpliesContinuous}.
For the (M2)-convergence, note that any sequence $(\phi_n)_{n \in \mathbb{N}}$ in $(\mathcal{S}'_p)^*$ converging weakly to some $\phi \in (\mathcal{S}'_p)^*$ is bounded and consequently also bounded in $\mathcal{S}$.
Also, by assumption, $Z_n^p$ converges to $Z^p$ uniformly on bounded sets.
Now, keeping in mind that $Z^p$ is also convex and continuous and thus weakly lower semicontinuous by \cite[Theorem 2.2.1]{src:Zalinescu:ConvexAnalysisInGeneralVectorSpaces}, we obtain
\begin{equation}
\liminf_{n \to \infty} Z_n^p \left( \phi_n \right)
\ge
\liminf_{n \to \infty} \left[ Z_n^p \left( \phi_n \right) - Z^p \left( \phi_n \right) \right]
+ \liminf_{n \to \infty} Z^p \left( \phi_n \right)
\ge Z^p \left( \phi \right) \, .
\end{equation}
Thus $Z_n^p$ Mosco-converges to $Z^p$ and corollary \ref{cor:WeakConvergenceBanach} applies.
\end{proof}
It would now be most tempting to conclude that there is some sort of Attouch-Wets topology on $\mathcal{S}$ with respect to which $Z_n$ converges to $Z$.
However, to the knowledge of the author, there have not been many studies of generalisations of Attouch-Wets convergence to non-normed spaces.
Consequently, no such result is available at the time of this writing.
Before coming to the final theorem stating the weak convergence of measures in terms of Attouch-Wets convergence of conjugate functions, we need another short lemma.
\begin{lemma}
\label{lem:ConjugateCharacterisationOfContinuity}
Let $f : \mathcal{S} \to \bar{\mathbb{R}}$ be proper convex and lower semicontinuous.
Then $f$ is a continuous function $\mathcal{S} \to \mathbb{R}$ if and only if its convex conjugate (Legendre-Fenchel transform) $f^c$ is supercoercive.
\end{lemma}
\begin{proof}
Let $B \subseteq \mathcal{S}(\mathbb{R}^d)$ be balanced, bounded and hence precompact.
Hence, by Fenchel-Moreau,
\begin{equation}
\label{eq:Supercoercivity}
\sup_{\phi \in B} f \left( \phi \right) = \sup_{\phi \in B, T \in \mathcal{S}'_\beta} \left[ T \left( \phi \right) - f^c \left( T \right) \right]
=
\sup_{T \in \mathcal{S}'_\beta} \left[ p_B \left( T \right) - f^c \left( T \right) \right] \, ,
\end{equation}
where $p_B$ is the continuous seminorm induced by $B$.

$\Rightarrow$:
Since all continuous seminorms $p$ are bounded by some $p_B$ the implication is clear.

$\Leftarrow$:
Equation \ref{eq:Supercoercivity} shows that $f$ is bounded on bounded subsets of $\mathcal{S}$.
Hence, it is finite everywhere and theorem \ref{thm:ConvexAndLowerSemicontinuousImpliesContinuous} applies.
\end{proof}
With this lemma at hand, we arrive at the final theorem of this section with immediate relevance for the Wetterich equation.
\begin{theorem}
\label{thm:ConvergenceTheorem}
Let $( \omega_n )_{n \in \mathbb{N}}$ be a sequence of Borel probability measures on $\mathcal{S}'_\beta$, $( Z_n )_{n \in \mathbb{N}}$ the corresponding moment-generating functions and set $W_n = \ln \circ Z_n$.
Then the following are equivalent:
\begin{enumerate}[label=(\roman*)]
\item $\limsup_{n \to \infty} W_n ( \phi ) < \infty$ for all $\phi \in \mathcal{S}$ and $\omega_n$ converges weakly to another Borel probability measure $\omega$.
\item There exists a proper convex, lower semicontinuous and supercoercive function $\Gamma : \mathcal{S}'_\beta \to \bar{\mathbb{R}}$ such that for all continuous seminorms $p$ on $\mathcal{S}'_\beta$ the lower semicontinuous envelopes $LSC(W_n^c,\iota_p)$ Attouch-Wets converge to $LSC(\Gamma,\iota_p)$, where $\iota_p : \mathcal{S}'_\beta \to \mathcal{S}'_p$ denotes the natural map.
\end{enumerate}
\end{theorem}
\begin{proof}
Let us first note that all $W_n$ are convex because all $Z_n$ are logarithmically convex.
Furthermore, all $\omega_n$ are probability measures such that $Z_n$ does not attain the value $0$ and $W_n$ does not attain the value $-\infty$.
Moreover, all $W_n$ are lower semicontinuous by the monotony of the logarithm and, by definition, $W_n ( 0 ) = 0$.
Thus, letting $p$ denote some continuous seminorm on $\mathcal{S}'_\beta$, the restrictions $W_n^p$ are also proper convex and lower semicontinuous functions.
Recalling that the Legendre-Fenchel transform is a bijection between proper convex and lower semicontinuous functions on $\mathcal{S}'_p$ and $(\mathcal{S}'_p)^*$, there exist proper convex and lower semicontinuous functions $\Gamma_{n,p} : \mathcal{S}'_p \to \mathbb{R}$ such that
$\Gamma_{n,p}^c = W_n^p$ for all $n \in \mathbb{N}$.
By Fenchel-Moreau $\Gamma_{n,p}$ is equal to the restriction of $(W_n^p)^c$ to $\mathcal{S}'_p$.
Furthermore, $\limsup_{n \to \infty} W_n (\phi) < \infty$ implies $\limsup_{n \to \infty} Z_n (\phi) < \infty$.

$\Rightarrow$: By corollary \ref{cor:WeakConvergenceUniform}, $Z_n$ converges uniformly on bounded sets to the moment-generating function $Z$ of $\omega$ which is also continuous and never zero because $\omega$ is a probability measure.
On bounded sets $Z$ is always bounded away from zero by the precompactness of such sets.
Consequently, $W$ is continuous, $W_n$ also converges uniformly to $W$ on bounded sets and the same is true for the restrictions, i.e. $W_n^p$ converges uniformly to $W^p$ on bounded sets.
Applying \cite[Lemma 1.4]{src:EpigraphicalAndUniformConvergence}, we have that $W_n^p$ Attouch-Wets converges to $W^p$.

Recalling that the Legendre-Fenchel transform is a homeomorphism with respect to Attouch-Wets convergence \cite{src:Back:ContinuityOfConjugation}, it is clear that $\Gamma_{n,p}$ Attouch-Wets converges to the restriction of $W_p^c$ to $\mathcal{S}'_p$.
Now, lemma \ref{lem:LSCConjugation} shows that $W_p^c$ and $LSC(W^c,\iota_p)$ agree on $\mathcal{S}'_p$.
That $W^c$ is supercoercive is clear from lemma \ref{lem:ConjugateCharacterisationOfContinuity}.

$\Leftarrow$:
By the homeomorphism property of the Legendre-Fenchel transform and lemma \ref{lem:LSCConjugation}, $W_n^p$ Attouch-Wets converges to $(\Gamma^c)^p$.
Furthermore, $\Gamma^c$ is continuous by lemma \ref{lem:ConjugateCharacterisationOfContinuity} such that $(\Gamma^c)^p$ is bounded on bounded sets by the precompactness of bounded sets in $\mathcal{S}$.
Hence, $W_n^p$ actually converges to $\Gamma^c$ uniformly on bounded sets \cite[Corollary 2.2]{src:EpigraphicalAndUniformConvergence}.
Because every bounded set $B$ in $\mathcal{S}$ induces a continuous seminorm $q$ on $\mathcal{S}'_\beta$ such that $B$ is also bounded in $(\mathcal{S}'_q)^*$, it follows that $W_n$ converges to $\Gamma^c$ uniformly on bounded sets.
Finally, since bounded sets in $\mathcal{S}(\mathbb{R}^d)$ are precompact, $\Gamma^c$ is bounded on bounded sets such that $Z_n$ converges to $\exp [ \Gamma^c ]$ uniformly on bounded sets.
Hence, corollary \ref{cor:WeakConvergenceUniform} applies.
\end{proof}
\begin{remark}
While the convergence criterion for the dual functions $W_n^c$ is somewhat complicated, the concrete form of $LSC(\Gamma,\iota_p)$ is rather simple and may be extracted from lemma \ref{lem:LSCEnvelopeSupercoercive} due to the supercoercivity of $\Gamma$.
In the case where $p$ is a continuous norm on $\mathcal{S}'_\beta$, $LSC(\Gamma,\iota_p)$ is in fact simply given by $\Gamma$ on $\mathcal{S}'_\beta$ considered as a subspace of $\mathcal{S}'_p$ and $\infty$ everywhere else.
Likewise, if the regularisation happens to be such that all $W_n$ are continuous, the resulting supercoercivity also gives simple expressions for $W_n^c$.
As seen in section \ref{sec:WetterichDerivation}, this is indeed the case for the presented regularisation scheme.
Even in these cases, it appears to the author that there is no straightforward simplification of the dual convergence criterion.
The reason is that Attouch-Wets convergence in the spaces $\mathcal{S}'_p$ does not provide enough uniformity for a sequential statement such as (M1)-convergence in $\mathcal{S}'_\beta$ because the latter is not a Fréchet-Urysohn space.
\end{remark}
The objects $W_n^c$ in the above theorem correspond to a set of quantum effective actions of regularised theories and are precisely those objects which may be computed from the Wetterich equation.
If they are ordered such that any regularisation vanishes in the $n \to \infty$ limit, theorem \ref{thm:ConvergenceTheorem} states necessary and sufficient conditions for their convergence to a full theory in a physically meaningful manner according to the discussion in the beginning of this section.
\section{A Functional Regularisation Scheme}
\label{sec:FunctionalRegularisationScheme}
The goal of Quantum Field Theory à la Osterwalder-Schrader can be phrased in making sense of Euclideanised Feynman integrals.
A typical example is the moment-generating function
\begin{equation}
Z \left( J \right) = \int \exp \left[ -S \left( \psi \right) + \int J \psi \right] \mathrm{d} \psi \, ,
\end{equation}
where
\begin{itemize}
\item the integral is taken over some path space,
\item the measure is taken to be translation invariant,
\item $J$ lies in a dual space of the configuration space and
\item $S$ is the Euclideanised classical action typically containing terms like $\int \psi^4$.
\end{itemize}
In view of the Osterwalder-Schrader theorem one would like the integral to run over $\mathcal{S}'_\beta$, which
\begin{itemize}
\item is infinite-dimensional,
\item admits no non-trivial translation invariant measure and
\item makes expressions like $\psi^4$ generally ill-defined.
\end{itemize}
Hence, one typically regularises the path space in some fashion e.g by restricting to a finite volume and enforcing a momentum cutoff, rendering the regularised path space finite-dimensional \cite[Chapter 8]{src:ReuterSaueressig}.
As outlined in the last section the resulting generating functions $Z$ respectively the corresponding measures are then interpreted as to reflect real physics to a degree where the regularisation is assumed to introduce only negligible effects, e.g
\begin{itemize}
\item physics in a volume much smaller than the finite volume introduced by the regularisation,
\item physical phenomena with characteristic momentum scales much smaller than the enforced cutoff.\footnote{These conditions are debatable in view of the Euclideanised theory not living on physical Minkowski space.}
\end{itemize}
These particular regularisations define a directed subset of $\mathbb{R} \times \mathbb{R}$ where
\begin{equation}
\left( V, p_{\mathrm{max}} \right) \le \left( V', p'_{\mathrm{max}} \right)
\qquad
\Leftrightarrow
\qquad
V \le V' \quad \text{and} \quad p_{\mathrm{max}} \le p'_{\mathrm{max}} \, .
\end{equation}
Hence, all correspondingly regularised measures can be collected into a net $\omega_{V, p_{\mathrm{max}}}$ and we may interpret the requirement that they should describe \emph{the same physics} within their respective window of applicability as a kind of gluing instruction.
But this implies that for ever larger volumes and momentum cutoffs one expects to approach a limit that encompasses \emph{all} physical phenomena.
If this limit is indeed a Quantum Field Theory, we may phrase this requirement as the existence of a measure $\omega$ satisfying the Osterwalder-Schrader axioms and
\begin{equation}
\label{eq:VolumeMomentumLimit}
\lim_{V, p_{\mathrm{max}} \to \infty} \omega_{V, p_{\mathrm{max}}} = \omega
\end{equation}
in some appropriate sense.
As is well-known, the existence of such a limit generally requires \emph{renormalising} $S$, e.g in the case of having a term $\lambda \int \psi^4$, promoting $\lambda \in \mathbb{R}$ to a function depending on $V$ and $p_{\mathrm{max}}$.

It is quite clear that the choice of regularisation scheme immensely affects concrete calculations.
For instance, the finite volume setting is typically implemented as a compactification of $\mathbb{R}^d$ to a torus leading to discrete eigenvalues of the Laplacian, i.e discrete admissible momenta.
The personal belief of the author, however, is that a regularisation scheme maintaining as much smoothness as possible is desirable for computational and analytic methods alike.
One such scheme is the following, where the limit $n \to \infty$ corresponds to the removal of the regularistion:
\begin{enumerate}
\item For each $n \in \mathbb{N}$ split $S_n$ into a free part $S_n^{\mathrm{free}}$ and an interacting part $S_n^{\mathrm{int}}$ such that
$S_n^{\mathrm{free}} : \mathcal{S} \to \mathbb{R}$,
\begin{equation}
\phi \mapsto \left( \phi, B_n \phi \right)
\end{equation}
for some continuous, bijective, linear operator $B_n$ on $\mathcal{S}$ which is bounded from below in the sense that there is an $\eta_n > 0$ such that for all $\phi \in \mathcal{S}$
\begin{equation}
\label{eq:FreePartIsL2BoundedFromBelow}
\left( \phi, B_n \phi \right) \ge \eta_n \left( \phi, \phi \right) \, .
\end{equation}
The prototypical example is of course $B_n = m^2 - \Delta$ for a free, scalar, massive field theory on $\mathbb{R}^d$.
Define the corresponding centred Radon Gaußian probability measure $\mu_n$ on $\mathcal{S}'_\beta$ given by its characteristic function
\begin{equation}
\hat{\mu}_n \left( \phi \right) = \exp \left[ - \frac{1}{2} \left( \phi, B_n^{-1} \phi \right) \right] \, .
\end{equation}
for all $\phi \in \mathcal{S}$.
Then, $\mu_n$ encodes the free theory determined by $B_n$.
\item Fix a sequence $(\mathcal{R}_n)_{n \in \mathbb{N}}$ of linear and continuous operators $\mathcal{S}'_\beta \to \mathcal{S}$ with
\begin{equation}
\lim_{n \to \infty} \left( \iota \mathcal{R}_n \iota \right) \left( \phi \right) \left( \psi \right)
=
\lim_{n \to \infty} \left( \mathcal{R}_n \iota \phi, \psi \right)
=
\left( \phi, \psi \right) = \iota \left( \phi \right) \left( \psi \right)
\end{equation}
for all $\phi, \psi \in \mathcal{S}$.
A simple example is given by choosing
\begin{equation}
\begin{aligned}
\chi_n \left( x \right) &= \exp \left[ - \frac{1}{2} \frac{\left \Vert x \right \Vert^2}{n^2 K^2} \right] \\
\xi_n \left( x \right) &= \left( \frac{n^2 \Lambda^2}{2 \pi} \right)^{d/2} \exp \left[ - \frac{n^2 \Lambda^2}{2} \left \Vert x \right \Vert^2 \right]
\end{aligned}
\end{equation}
for some $K, \Lambda > 0$, all $x \in \mathbb{R}^d$ and setting $\mathcal{R}_n T = \chi_n \cdot ( \xi_n \ast T )$ where $\ast$ denotes convolution of functions.
Then, $n \Lambda$ can be viewed as a momentum cutoff and $n K$ as the radius of the ball to which we restrict the theory.
The contributions from outside these physical windows are heavily suppressed by the exponential decays of $\chi_n$ and $\xi_n$ respectively.

Furthermore, these particular $\mathcal{R}_n$ commute with $\mathcal{O}(\mathbb{R}^d)$ such that the rotational invariance is kept intact.
They also satisfy the stronger condition that $\lim_{n \to \infty} \iota \circ \mathcal{R}_n = \mathrm{id}$ uniformly on bounded sets which may be beneficial in specific applications.
However, it can be shown that they break reflection positivity as is explained in section \ref{sec:ReflectionPositivity}.

\item For all $n \in \mathbb{N}$ define the Radon probability measures $\nu_n = (\mathcal{R}_n)_* \mu_n$ and
\begin{equation}
\label{eq:omegaNDefinition}
\omega_n = \iota_\ast \left( \frac{\exp \left[ -S_n^{\mathrm{int}} \right]}{\int \exp \left[ -S_n^{\mathrm{int}} \right] \mathrm{d} \nu_n} \cdot \nu_n \right)
\end{equation}
on $\mathcal{S}$ and $\mathcal{S}'_\beta$ respectively.
\end{enumerate}
In analogy to equation \ref{eq:VolumeMomentumLimit}, we now have a sequence $(\omega_n)_{n \in \mathbb{N}}$ of measures corresponding to regularised theories and the ultimate goal is to find a limit $\omega$  satisfying the Osterwalder-Schrader axioms.
\begin{remark}
The use of the $L^2(\mathbb{R}^d)$ inner product in equations \ref{eq:FreePartIsL2BoundedFromBelow} and \ref{eq:omegaNDefinition} could be generalised to any other continuous inner product on $\mathcal{S}$.
However, in most applications the $L^2(\mathbb{R}^d)$ one suffices and is the simplest to work with such that there is hardly any practical benefit in admitting more general structures.
\end{remark}
\begin{remark}
To demand that $\exp [ - S_n^{\mathrm{int}} ] \in L^1 ( \nu_n )$ is strictly necessary for this regularisation scheme to work.
As a matter of fact, this requirement is not trivial e.g. with regard to mass counterterms in the $\phi^4_4$ modes.
It has, however, recently been proven in \cite[Theorem 1.1]{src:HJZ} that it is indeed satisfied.
\end{remark}

\section{A Note on Reflection Positivity}
\label{sec:ReflectionPositivity}
\begin{definition}
Let $\mathcal{S}_+ \subseteq \mathcal{S}$ denote the set of Schwartz functions with support in $\mathbb{R}_{\ge 0} \times \mathbb{R}^{d-1}$ equipped with its subspace topology turning it into a reflexive nuclear space.
For any $\phi \in \mathcal{S}$, let $\theta \phi \in \mathcal{S}$ with
\begin{equation}
\left( \theta \phi \right) \left( x_1, x_2, \dots, x_d \right) = \phi \left( -x_1, x_2, \dots, x_d \right)
\end{equation}
for all $x \in \mathbb{R}^d$.
Following \cite{src:GlimmJaffe}, we call a finite Borel measure $\mu$ on $\mathcal{S}'_\beta$ \textbf{reflection positive} if
\begin{equation}
\sum_{m, n = 1}^N c_m^* c_n \hat{\mu} \left( \phi_n - \theta \phi_m \right) \ge 0
\end{equation}
for all $N \in \mathbb{N}$ as well as all sequences $(c_n)_{n \in \mathbb{N}}$ in $\mathbb{C}$ and $( \phi_n )_{n \in \mathbb{N}}$ in $\mathcal{S}_+$.
\end{definition}
One can show that the simple example of $\mathcal{R}_n$ given in the last section does not preserve reflection positivity in the sense that the resulting non-interacting measures $\iota_* \nu_n$ are not reflection positive in general.
It is, however, possible to retain reflection possitivity if one is willing to break $\mathcal{O}(d)$ covariance:
\begin{example}
First, define
\begin{equation}
\Phi \left( x \right)
=
\exp \left[ -\frac{1}{2} \left \Vert x \right \Vert^2 \right] 
\qquad\text{and}\qquad
\Psi \left( x \right) = \begin{cases}
\exp \left[ - \frac{1}{x_1} \right] & x_1 > 0 \\
0 & \mathrm{else}
\end{cases} \, .
\end{equation}
Then, fix some $K, \Lambda, M > 0$ and set
\begin{equation}
\kappa_n \left( x \right) = \Psi \left( n M x \right)
\, \text{,} \quad
\xi_n(x) = \left( n \Lambda M \right)^d \frac{\Phi \left( n \Lambda x \right) \Psi(n M x)}{\left \Vert \Phi \Psi \right \Vert_{L^1(\mathbb{R}^d)}}
\, \text{,} \quad
\chi_n \left( x \right) = \Phi \left( \frac{x}{n K} \right)
\end{equation}
for all $x \in \mathbb{R}^d$ and all $n \in \mathbb{N}$.
Now, let
\begin{equation}
\mathcal{R}_n T = \kappa_n \cdot \chi_n \cdot \left( \theta \xi_n \ast T \right) + \theta \kappa_n \cdot \chi_n \cdot \left( \xi_n \ast T \right)
\end{equation}
for all $T \in \mathcal{S}'_\beta$.
Then, for any sequence $(\mu_n)_{n \in \mathbb{N}}$ of reflection positive finite Radon measures on $\mathcal{S}'_\beta$, the measures $\omega_n = \iota_* (\mathcal{R}_n)_* \mu_n$ are reflection positive,
\begin{equation}
\lim_{n \to \infty} \left( \iota \circ \mathcal{R}_n \circ \iota \right) \left( \phi \right)  = \iota(\phi)
\end{equation}
for all $\phi \in \mathcal{S}$ and $\mathcal{R}_n$ commutes with spatial $\mathcal{O}(d-1)$ rotations.
\end{example}
\begin{proof}
That $\mathcal{R}_n$ commutes with spatial $\mathcal{O}(d-1)$ rotations follows directly from the corresponding invariance of $\Phi$ and $\Psi$.
Furthermore, it is straightforward to verify that
\begin{equation}
\ltrans{\mathcal{R}}_n \iota \phi = \xi_n \ast \left( \chi_n \cdot \kappa_n \cdot \phi \right) + \theta \xi_n \ast \left( \chi_n \cdot \theta \kappa_n \cdot \phi \right)
\end{equation}
for all $\phi \in \mathcal{S}$ from which it directly follows that $\ltrans{\mathcal{R}}_n \circ \iota$ commutes with $\theta$ by the $\mathcal{O}(d)$ invariance of $\chi_n$.
Consequently, for any $\phi, \psi \in \mathcal{S}$,
\begin{equation}
\hat{\omega}_n \left( \phi - \theta \psi \right)
=
\hat{\mu}_n \left( \ltrans{\mathcal{R}}_n \iota \phi - \theta \ltrans{\mathcal{R}}_n \iota \psi \right) \, .
\end{equation}
Now, reflection positivity follows if $\ltrans{\mathcal{R}}_n \circ \iota$ preserves the property of being supported in $\mathbb{R}_{\ge 0} \times \mathbb{R}^{d-1}$.
This is easily seen since
\begin{equation}
\begin{aligned}
\mathrm{supp} \, \ltrans{\mathcal{R}}_n \iota \phi
&=
\mathrm{supp} \, \xi_n \ast \left( \chi_n \cdot \kappa_n \cdot \phi \right) \\
&\subseteq
\mathbb{R}_{\ge 0} \times \mathbb{R}^{d-1}
\, + \,
\mathbb{R}_{\ge 0} \times \mathbb{R}^{d-1}
=
\mathbb{R}_{\ge 0} \times \mathbb{R}^{d-1}
\end{aligned}
\end{equation}
for any $\phi \in \mathcal{S}$ supported in $\mathbb{R}_{\ge 0} \times \mathbb{R}^{d-1}$.

To see the convergence, note that for any $\phi, \psi \in \mathcal{S}$
\begin{equation}
\left( \iota \circ \mathcal{R}_n \circ \iota \right) \left( \phi \right) \left( \psi \right)
=
\int_{\mathbb{R}^d}
\left[
\kappa_n \cdot \chi_n \cdot \left( \theta \xi_n * \phi \right) \cdot \psi
+
\theta \kappa_n \cdot \chi_n \cdot \left( \xi_n * \phi \right) \cdot \psi
\right] \, .
\end{equation}
Finally, it is clear that
\begin{itemize}
\item $\xi_n * \phi \to \phi$ and $\theta \xi_n * \phi \to \phi$ in $L^2(\mathbb{R}^d)$,
\item $\kappa_n \cdot \chi_n \cdot \psi \to \psi \cdot I ( \mathbb{R}_{> 0} \times \mathbb{R}^{d-1} )$ in $L^2(\mathbb{R}^d)$,
\item $\theta \kappa_n \cdot \chi_n \cdot \psi \to \psi \cdot I ( \mathbb{R}_{< 0} \times \mathbb{R}^{d-1} )$ in $L^2(\mathbb{R}^d)$,
\end{itemize}
where $I(A)$ denotes the indicator function of a set $A \subseteq \mathbb{R}^d$.
Consequently,
\begin{equation}
\lim_{n \to \infty}
\left( \iota \circ \mathcal{R}_n \circ \iota \right) \left( \phi \right) \left( \psi \right)
=
\int_{\mathbb{R}^d}
\phi \psi
= \iota \left( \phi \right) \left( \psi \right) \, .
\end{equation}
\end{proof}
\begin{remark}
Note that we do not have that $\iota \circ \mathcal{R}_n \to \mathrm{id}$.
Indeed, letting $\delta_0$ be the Dirac distribution at zero, we have $\mathcal{R}_n \delta_0 = 0$ for all $n \in \mathbb{N}$.
\end{remark}
It is thus possible to define the operators $\mathcal{R}_n$ in a way that the resulting Gaussian measures are reflection positive.
When combining a reflection positive Gaussian measure with a density, some care must be taken to ensure that reflection positivity is maintained.
For some models on a lattice this is e.g. discussed in \cite{src:GlimmJaffe}.
\section{The Functional Renormalisation Group Equation}
\label{sec:WetterichDerivation}
In this section we are going to use the introduced regularisation scheme to define a regularised Quantum Field Theory with a slightly modified classical action given by adding a bilinear operator $F^n_k$ to $S_n^{\mathrm{int}}$.

In particular, every object in this section is taken to carry a regularisation index $n \in \mathbb{N}$ but for legibility it will not be written explicitly.

Letting $\nu$ and $S^{\mathrm{int}}$ be as in section \ref{sec:FunctionalRegularisationScheme}, we first demand the following regularity properties:
\begin{itemize}
\item $S^{\mathrm{int}} : \mathcal{S} \to \mathbb{R}$ is continuous,
\item there is a $q > 1$ such that $\exp [-S^{\mathrm{int}}] \in L^q(\nu)$,
\item there is a continuous seminorm $p$ on $\mathcal{S}$ and $C > 0$ such that $\exp [-S^{\mathrm{int}}] \le C \exp[ p^2 ]$.
\end{itemize}
\begin{remark}
The continuity is important for the boundary condition at $k \to \infty$ and it also ensures the strict positivity of $\exp [-S^{\mathrm{int}}]$ as well as its boundedness on compacta.
The second condition is only slightly stronger than $\exp [-S^{\mathrm{int}}] \in L^1(\nu)$, which is necessary for the regularisation scheme to work and enables the use of Hölder's inequality.
The third condition is used in theorem \ref{thm:DiracMeasureApproximation} enabling a weak continuity of a suitable approximation of a Dirac measure.
It is possible that it can be relaxed significantly.
\end{remark}
Let us denote the Cameron-Martin space of $\nu$ by $H ( \nu ) \subset \mathcal{S}$ and the reproducing kernel Hilbert space of $\nu$ by $\mathcal{S}'_\nu$ with its inner product $\left< \cdot, \cdot \right>$.
The spaces $H ( \nu )$ and $\mathcal{S}'_\nu$ will be of central importance in the derivation of the FRGE.
In fact, much of the proof could be given in a more abstract setting and essentially follows from properties of Gaußian measures on locally convex spaces.

Now, let $( F_k )_{k \in \mathbb{R}}$ be a family of symmetric bilinear operators on $\mathcal{S}$ with the following properties:
\begin{itemize}
\setlength\itemsep{0.5em}
\label{list:FProperties}
\item \emph{For easy differentiation, allow negative values of $k$}:
\begin{itemize}[label=$\bullet$]
\item $\forall \, \phi \in \mathcal{S}, k < 0 : \quad F_k \left( \phi, \phi \right) = 0$,
\end{itemize}
\item \emph{For Dirac delta measure approximation (see theorem \ref{thm:DiracMeasureApproximation})}:
\begin{itemize}[label=$\bullet$]
\setlength\itemsep{0.5em}
\item $\forall \, \phi \in \mathcal{S}, k \ge 0 : \quad 0 \le F_k \left( \phi, \phi \right) \le  k^2\left( \phi, \phi \right)$, \hspace*{\fill}\tagx[itm:FkBound]
\item $\forall \, \phi \in \mathcal{S} \, \exists \, C, K > 0 \, \forall \, k \ge K: \quad F_k \left( \phi, \phi \right) \ge C k^2 \left( \phi, \phi \right)$, \hspace*{\fill}\tagx[itm:FkDivergence]
\end{itemize}
\item \emph{For differentiability in lemmas \ref{lem:ZDifferentiability} and \ref{lem:YDifferentiability}}:
\begin{itemize}[label=$\bullet$]
	\setlength\itemsep{0.5em}
	\item $F$ is pointwise continuously $k$-differentiable, i.e for all $k \in \mathbb{R}$ and $\phi \in \mathcal{S}$
	\begin{equation}
	F'_k \left( \phi, \phi \right) := \lim_{t \to 0} \frac{F_{k+t} \left( \phi, \phi \right) - F_k \left( \phi, \phi \right)}{t}
	\end{equation}
	exists and is jointly continuous in $k$ and $\phi$,
	\item The above convergence is uniform in $\phi$ in the sense that there is a continuous seminorm $p$ on $\mathcal{S}$ such that for all $k \in \mathbb{R}$ and all $\epsilon > 0$ there exists some $\delta > 0$ as well as a function $o : ( - \delta, \delta ) \to \mathbb{R}$ such that $\lim_{t \to 0} o ( t )/t = 0$ and
	\begin{equation}
	\left \vert F_{k+t} \left( \phi, \phi \right) - F_k \left( \phi, \phi \right) - t F'_k \left( \phi, \phi \right) \right \vert < \epsilon o \left( t \right) p \left( \phi \right)^2
	\end{equation}
	for all $t \in ( - \delta, \delta )$ and $\phi \in \mathcal{S}$,
\end{itemize}
\item \emph{For positivity and interchange of integrations in theorem \ref{thm:FRGE}}:
\begin{itemize}[label=$\bullet$]
	\item For all $k \in \mathbb{R}$ there is a $\sigma$-finite measure space $( X^k, \mathcal{A}^k, m_k )$ and a mapping $U^k : X^k \to ( \mathcal{S}'_\nu)_{\mathbb{C}}$ such that $X^k \to \mathbb{R}, x \mapsto U^k_x ( \phi )$ is $m_k$-measurable for all $\phi \in \mathcal{S}$ and
	\begin{equation}
	\label{eq:FKDerivative}
	F'_k \left( \phi, \phi \right) = \int_{X^k} \left \vert U^k_x \left( \phi \right) \right \vert^2 \mathrm{d} m_k \left( x \right) \, .
	\end{equation}
\end{itemize}
\end{itemize}
While especially the last condition looks very technical, common choices (note the absence of prefactors for separate wave-function renormalisation) like
\begin{itemize}
\setlength\itemsep{0.5em}
\item $F_k \left( \phi, \phi \right)= \int_{\mathbb{R}^d} \left \vert \tilde{\phi} \left( p \right) \right \vert^2 \left( k^2 - \left \Vert p \right \Vert^2 \right) \theta \left( k^2 - \left \Vert p \right \Vert^2 \right) \mathrm{d} p$
\hfill a.k.a the Litim regulator,
\item $F_k \left( \phi, \phi \right)= \int_{\mathbb{R}^d} \left \vert \tilde{\phi} \left( p \right) \right \vert^2 \frac{\left \Vert p \right \Vert^2}{\exp \left[ \frac{ \left \Vert p \right \Vert^2}{k^2} \right] - 1} \mathrm{d} p$ \hfill a.k.a the exponential regulator,
\end{itemize}
for $k > 0$ are included which in particular do not carry any extra dependence on $n \in \mathbb{N}$.
\begin{remark}
Just as one could generalise to other continuous inner products in equation \ref{eq:FreePartIsL2BoundedFromBelow}, one can generalise the upper and lower $F_k$ bounds in equations \ref{itm:FkBound} and \ref{itm:FkDivergence} to other continuous seminorms.
But since the established choices of $F_k$ all work with the $L^2(\mathbb{R}^d)$ inner product there is little practical reason to do so.
\end{remark}
For brevity, define
\begin{align}
N_k &= \int_S \exp \left[ -S^{\mathrm{int}} \left( \psi \right) - \frac{1}{2} F_k \left( \psi, \psi \right) \right] \mathrm{d} \nu \left( \psi \right) \, , \\
f_k \left( \phi \right) &=
\exp \left[ -S^{\mathrm{int}} \left( \phi \right) - \frac{1}{2} F_k \left( \phi, \phi \right) \right]
\end{align}
for all $k \in \mathbb{R}$ and $\phi \in \mathcal{S}$.
Then, we may consider the family $\{ f_k / N_k \cdot \nu : k \in \mathbb{R}  \}$ of probability measures on $\mathcal{S}$ and in view of the last section the object of interest is $f_0 / N_0 \cdot \nu$.
By the properties of $S^{\mathrm{int}}$ we clearly have that $f_k : \mathcal{S} \to \mathbb{R}$ is strictly positive and there exists a $q \in ( 1, \infty ]$ such that $f_k \in L^q ( \nu )$ for all $k \in \mathbb{R}$.
Let us now define a family $Z : \mathbb{R} \times \mathcal{S}'_\nu \to \mathbb{R}$ of moment-generating functions as
\begin{equation}
Z_k \left( T \right)
=
\frac{1}{N_k} \int_{\mathcal{S}} \exp \left[ T \left( \psi \right) \right] f_k \left( \psi \right) \mathrm{d} \nu \left( \psi \right) \, .
\end{equation}
By employing the Cameron-Martin theorem, we have
\begin{equation}
\label{eq:ZRepresentation2}
Z_k \left( T \right) = \frac{1}{N_k} \exp \left[ \frac{1}{2} \left \langle T, T \right \rangle \right] \int_{\mathcal{S}} f_k \left( \psi + R_\nu T \right) \mathrm{d} \nu \left( \psi \right)
\end{equation}
for all $k \in \mathbb{R}$ and $T \in \mathcal{S}'_\nu$ such that $Z$ is indeed well-defined (everywhere finite).
Also, by virtue of \cite[Theorem 2.4.8]{src:Bogachev:GaußianMeasures} each $Z_k : \mathcal{S}'_\nu \to \mathbb{R}$ is continuous.
A straightforward calculation - that we shall omit here - shows that one may differentiate under the integral sign:
\begin{lemma}
\label{lem:ZDifferentiability}
$Z$ is continuously Fréchet differentiable and its derivative at $( k, J )$ is given by
\begin{equation}
D_{k, T} Z = \begin{pmatrix}
- \frac{1}{2 N_k} \int_{\mathcal{S}} F'_k \left( \psi, \psi \right) \exp \left[ T \left( \psi \right) \right] f_k \left( \psi \right) \mathrm{d} \nu \left( \psi \right)
- Z_k \left( T \right) \partial_k \ln N_k
\\
\frac{1}{N_k} \int_{\mathcal{S}} \psi \exp \left[ T \left( \psi \right) \right] f_k \left( \psi \right) \mathrm{d} \nu \left( \psi \right)
\end{pmatrix}
\end{equation}
where the term in the second row ($D_T Z_k$) may be understood as a (generalised) Bochner integral in $\mathcal{S}$ \cite[theorem 3]{src:ErikThomas:IntegrationInS} and in fact $D_T Z_k \in H ( \nu )$.
\end{lemma}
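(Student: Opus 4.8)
The plan is to obtain the two partial derivatives of $Z$ by differentiation under the integral sign and then to invoke the classical fact that a map on the Hilbert space $\mathbb{R} \times \mathcal{S}'_\nu$ all of whose first partial G\^{a}teaux derivatives exist throughout an open set and depend continuously on the base point is continuously Fr\'{e}chet differentiable there, with total derivative assembled from the partials.

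\emph{Derivative in $T$.} Fix $k$. For $H \in \mathcal{S}'_\nu$ and $0 < |t| \le 1$ the difference quotient of $Z_k$ in the direction $H$ has integrand $\tfrac{1}{N_k} t^{-1}(e^{t H(\psi)} - 1) e^{T(\psi)} f_k(\psi)$, which converges pointwise to $\tfrac{1}{N_k} H(\psi) e^{T(\psi)} f_k(\psi)$ and is dominated by $\tfrac{1}{N_k} |H(\psi)| (e^{H(\psi)} + e^{-H(\psi)}) e^{T(\psi)} e^{-S^{\mathrm{int}}(\psi)}$, using $|e^x - 1| \le |x| e^{|x|}$, $|t| \le 1$ and $f_k \le e^{-S^{\mathrm{int}}}$. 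Since $\psi \mapsto H(\psi), T(\psi)$ are centred Gaussians under $\nu$ (hence carry all polynomial-times-exponential moments) and $e^{-S^{\mathrm{int}}} \in L^q(\nu)$ for some $q \in (1,\infty]$, H\"{o}lder's inequality makes this majorant $\nu$-integrable, so dominated convergence yields the G\^{a}teaux derivative $H \mapsto \tfrac{1}{N_k} \int_{\mathcal{S}} H(\psi) e^{T(\psi)} f_k(\psi)\,\mathrm{d}\nu(\psi)$. Because $\|H(\cdot)\|_{L^2(\nu)} = \|H\|_{\mathcal{S}'_\nu}$ by construction of the reproducing kernel Hilbert space and $\|e^{T(\cdot)} f_k\|_{L^2(\nu)} < \infty$ by the same estimate, this functional is bounded on $\mathcal{S}'_\nu$, hence equals $\langle \,\cdot\,, S \rangle$ for a unique $S \in \mathcal{S}'_\nu$. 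By \cite[theorem~3]{src:ErikThomas:IntegrationInS} the weakly $\nu$-integrable $\mathcal{S}$-valued map $\psi \mapsto \psi\, e^{T(\psi)} f_k(\psi)$ possesses a generalised Bochner integral in $\mathcal{S}$, and pairing it with any $K \in \mathcal{S}'_\beta$ returns $\langle K, S \rangle = K(R_\nu S)$; thus $D_T Z_k = R_\nu S \in H(\nu)$, which is exactly the asserted second row.

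\emph{Derivative in $k$.} Write $G_k(T) = \int_{\mathcal{S}} e^{T(\psi)} f_k(\psi)\,\mathrm{d}\nu(\psi)$, so $Z_k = G_k/N_k$ with $N_k = G_k(0) > 0$. Pointwise $\partial_k f_k(\psi) = -\tfrac12 F'_k(\psi,\psi) f_k(\psi)$, and since $F_k(\psi,\psi) \ge 0$ the elementary bound $|e^{-a} - e^{-b}| \le |a-b|$ on $[0,\infty)$ gives $|t|^{-1}|f_{k+t}(\psi) - f_k(\psi)| \le \tfrac12 e^{-S^{\mathrm{int}}(\psi)} |t|^{-1}|F_{k+t}(\psi,\psi) - F_k(\psi,\psi)|$. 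The \emph{uniform-in-$\phi$} differentiability hypothesis on $F$ bounds the last quotient, for $t$ in a fixed punctured neighbourhood of $0$, by $|F'_k(\psi,\psi)| + p(\psi)^2$ with $p$ the distinguished continuous seminorm; moreover $\psi \mapsto F'_k(\psi,\psi)$ is a non-negative (by \eqref{eq:FKDerivative}), $2$-homogeneous, continuous form on $\mathcal{S}$, hence dominated by $c\, r(\psi)^2$ for some continuous seminorm $r$. As $\nu$ is a centred Gaussian measure on $\mathcal{S}$, Fernique's theorem \cite[corollary~2.8.6]{src:Bogachev:GaussianMeasures} makes $e^{\alpha p(\psi)^2}$ and $e^{\alpha r(\psi)^2}$ $\nu$-integrable for small $\alpha > 0$, so H\"{o}lder against the Gaussian $e^{T(\psi)}$ and $e^{-S^{\mathrm{int}}} \in L^q(\nu)$ produces a $t$-independent $L^1(\nu)$ majorant of the difference quotient of $e^{T(\cdot)} f_{(\cdot)}$. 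Dominated convergence then gives $\partial_k G_k(T) = -\tfrac12 \int_{\mathcal{S}} F'_k(\psi,\psi) e^{T(\psi)} f_k(\psi)\,\mathrm{d}\nu(\psi)$ and, at $T = 0$, $\partial_k N_k = -\tfrac12 \int_{\mathcal{S}} F'_k(\psi,\psi) f_k(\psi)\,\mathrm{d}\nu(\psi)$; the quotient rule, with $\partial_k \ln N_k = N_k^{-1}\partial_k N_k$, reproduces the first row of $D_{k,T} Z$.

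\emph{Joint continuity and conclusion.} Both components of $(k,T) \mapsto D_{k,T} Z$ are integrals of integrands jointly continuous in $(k,T)$ for $\nu$-a.e.\ $\psi$. For $k$ in a compact interval the majorants above can be chosen locally uniformly: joint continuity of $(k,\psi) \mapsto F'_k(\psi,\psi)$ together with the tube lemma bounds $\sup_k F'_k(\psi,\psi)$ by a continuous seminorm squared; $T \mapsto e^{T(\cdot)}$ is continuous into every $L^r(\nu)$ by uniform Gaussian moment bounds; and $k \mapsto f_k$, $k \mapsto N_k$ are continuous with $N_k$ bounded below on compacta. Dominated convergence thus makes $\partial_k Z$ and $D_T Z : \mathbb{R} \times \mathcal{S}'_\nu \to \mathcal{S}'_\nu$ continuous, and the criterion quoted at the outset upgrades the G\^{a}teaux derivatives to a continuous Fr\'{e}chet derivative of the stated form. \emph{The main obstacle is the $k$-differentiation}: one must convert the deliberately technical uniform-in-$\phi$ clause into a single, $t$-independent dominating function, which is what forces both the observation that a continuous quadratic form on $\mathcal{S}$ is controlled by a continuous seminorm and the appeal to Fernique's theorem; the $T$-slot, the membership $D_T Z_k \in H(\nu)$ via the Cameron--Martin isometry $R_\nu$ and Thomas's integration theory, and the final G\^{a}teaux-to-$C^1$ passage are comparatively routine.
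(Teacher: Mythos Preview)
The paper gives no proof of this lemma at all, merely prefacing it with ``It is now straightforward, though somewhat tedious, to show the following lemmata.'' Your proposal supplies precisely the tedious-but-straightforward argument the author omits, and it is correct: differentiation under the integral sign in each variable via dominated convergence, with the majorants built from the paper's own hypotheses (the uniform-in-$\phi$ differentiability clause for $F$, the bound $f_k \le e^{-S^{\mathrm{int}}} \in L^q(\nu)$, Gaussianity of $T(\cdot)$ and $H(\cdot)$ under $\nu$, and Fernique's theorem for seminorm moments), followed by the standard passage from continuous G\^{a}teaux to $C^1$-Fr\'{e}chet. The identification $D_T Z_k \in H(\nu)$ via Riesz on $\mathcal{S}'_\nu$ and the isometry $R_\nu$, together with Thomas's generalised Bochner integral for the $\mathcal{S}$-valued expression, matches the paper's framing exactly. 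There is nothing to contrast; your write-up is a faithful expansion of what the author leaves implicit.
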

Let us also define $Y : \mathbb{R} \times \mathcal{S}'_\nu \to H(\nu)$ with $( k, T ) \mapsto D_T Z_k$.
\begin{lemma}
\label{lem:YDifferentiability}
$Y$ is continuously Fréchet differentiable and its derivative at $( k, J )$ is given by
\begin{equation}
\left( D_{k, J}  Y \right) \left( l, T \right) = \begin{pmatrix}
- \frac{l}{2 N_k} \int_{\mathcal{S}} \psi F'_k \left( \psi, \psi \right) \exp \left[ J \left( \psi \right) \right] f_k \left( \psi \right) \mathrm{d} \nu \left( \psi \right)
- l D_J Z_k \cdot \partial_k \ln N_k \\
\frac{1}{N_k} \int_{\mathcal{S}} \psi T \left( \psi \right) \exp \left[ J \left( \psi \right) \right] f_k \left( \psi \right) \mathrm{d} \nu \left( \psi \right)
\end{pmatrix}
\end{equation}
for all $l \in \mathbb{R}, T \in \mathcal{S}'_\nu$.
Both integrals may again understood to be generalised Bochner integrals in $\mathcal{S}$ with values in $H ( \nu )$.
\end{lemma}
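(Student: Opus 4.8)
Since the paper flags these lemmata as \enquote{straightforward, though somewhat tedious}, and Lemma \ref{lem:YDifferentiability} is nothing but Lemma \ref{lem:ZDifferentiability} carried out with one additional factor of $\psi$ and one more differentiation, the plan is to replay that proof. Concretely, one starts from $Y(k,J) = \tfrac{1}{N_k}\int_{\mathcal{S}} \psi\,\exp[J(\psi)]\,f_k(\psi)\,\mathrm{d}\nu(\psi)$ (together with the $k$-differentiability of $N_k$, with $\partial_k N_k = -\tfrac12\int_{\mathcal{S}} F'_k(\psi,\psi)f_k(\psi)\,\mathrm{d}\nu$, which is already available from the proof of Lemma \ref{lem:ZDifferentiability}) and differentiates under the integral sign in both arguments. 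The resulting objects are to be recognised as generalised Bochner integrals in $\mathcal{S}$ in the sense of \cite[theorem 3]{src:ErikThomas:IntegrationInS} — which applies because $\mathcal{S}$ is nuclear and complete, so that weak ($=$ Pettis) integrability suffices, i.e.\ it is enough to test the integrand against every $K \in \mathcal{S}'_\beta$ — and one then verifies that the value actually lies in $H(\nu)$ by the same reproducing-kernel/Cameron--Martin argument used for $D_T Z_k$ in Lemma \ref{lem:ZDifferentiability}, noting in addition that $H(\nu)$ is closed under the limits defining the partial derivatives. Finally one checks joint continuity of the derivative in $(k,J)$.

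All the domination estimates rest on the same two facts: $f_k \in L^q(\nu)$ for some $q > 1$, and every Gaussian-type quantity appearing — $K(\cdot)$, $T(\cdot)$, $\exp[cJ(\cdot)]$, and $F'_k(\cdot,\cdot)$, which is a continuous (by hypothesis, non-negative by equation \ref{eq:FKDerivative}) quadratic form and hence dominated by the square of a continuous seminorm, the latter $\nu$-exponentially integrable by Fernique \cite[Corollary 2.8.6]{src:Bogachev:GaussianMeasures} — has $\nu$-moments of every order; Hölder's inequality with $q$ and its conjugate $q'$ turns any product of such factors times $f_k$ into an $L^1(\nu)$ function, and the single extra factor of $\psi$ compared to Lemma \ref{lem:ZDifferentiability} costs only one further such bounded factor after testing. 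The $J$-derivative is the easy part: the scalar difference quotient $h^{-1}(\exp[(J+hT)(\psi)] - \exp[J(\psi)])$ converges to $T(\psi)\exp[J(\psi)]$ and, for $|h|\le 1$, is bounded by $|T(\psi)|\bigl(\exp[(J+T)(\psi)] + \exp[(J-T)(\psi)]\bigr)$, which yields the second row after the Hölder bound. The $k$-derivative is the genuinely delicate point and is where the remaining hypotheses on $F$ are spent: $k$ enters through $N_k$ and through $f_k(\psi) = \exp[-S^{\mathrm{int}}(\psi) - \tfrac12 F_k(\psi,\psi)]$, with pointwise derivative $-\tfrac12 F'_k(\psi,\psi)f_k(\psi)$, and what one needs is a $\nu$-integrable bound on $t^{-1}(f_{k+t}(\psi)-f_k(\psi))$ uniform in $\psi$ for $t$ near $0$. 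Using the non-negativity of $F_k(\psi,\psi)$ and $F_{k+t}(\psi,\psi)$ (equation \ref{itm:FkBound}) and $|e^{-a}-e^{-b}|\le|a-b|$ for $a,b\ge 0$, this difference is $\le \tfrac12|F_{k+t}(\psi,\psi)-F_k(\psi,\psi)|\exp[-S^{\mathrm{int}}(\psi)]$, and the uniform $k$-differentiability hypothesis — the $\epsilon\,o(t)\,p(\psi)^2$ estimate for a single continuous seminorm $p$, together with the joint continuity of $F'$ — bounds it, for $|t|$ small, by $|t|$ times a multiple (uniform for $k$ in compacta) of $(1+p(\psi)^2)\exp[-S^{\mathrm{int}}(\psi)]$; since $\exp[\alpha p(\cdot)^2]\in L^1(\nu)$ for small $\alpha>0$ by Fernique, Hölder again supplies the dominating function, and the quotient rule then produces the first row.

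Joint continuity of $(k,J)\mapsto D_{k,J}Y$ follows from one more dominated-convergence argument, using the joint continuity of $F'_k(\psi,\psi)$ in $(k,\psi)$, the continuity of $N_k$ and $\partial_k\ln N_k$, the $\nu$-a.e.\ continuity of $J\mapsto\exp[J(\psi)]$, and the fact that all of the dominating functions above may be chosen locally uniform in $(k,J)$; the convergence takes place in $H(\nu)$ for the same reason as in Lemma \ref{lem:ZDifferentiability}. With both partials existing, depending on $(l,T)$ in the asserted affine-linear way, and jointly continuous in $(k,J)$, $Y$ is continuously Fréchet differentiable with the stated derivative. I expect the main obstacle to be precisely the uniform-in-$\psi$ control of the $k$-difference quotient of $f_k$ above; the remainder reproduces, essentially verbatim in structure, the argument already given for Lemma \ref{lem:ZDifferentiability}.
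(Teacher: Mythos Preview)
Your proposal is correct and matches what the paper intends: the paper omits the proof entirely, merely remarking that Lemmata \ref{lem:ZDifferentiability} and \ref{lem:YDifferentiability} are ``straightforward, though somewhat tedious'', and your outline---differentiation under the integral sign with domination supplied by H\"older against $\exp[-S^{\mathrm{int}}]\in L^q(\nu)$ and Fernique-type Gaussian integrability, the uniform $k$-differentiability hypothesis on $F$ for the $k$-derivative, and the Bochner/Cameron--Martin argument for the $H(\nu)$-valuedness---is exactly the computation the paper is suppressing.
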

These properties are inherited by $W : \mathbb{R} \times \mathcal{S}'_\nu \to \mathbb{R}, ( k, T ) \mapsto \ln Z_k ( T )$, i.e $W$ is continuously differentiable, $W_k$ is twice continuously differentiable and $D W_k : \mathcal{S}'_\nu \to H ( \nu )$.
As a matter of fact, $D W_k$ even turns out to be a bijection between $\mathcal{S}'_\nu$ and $H ( \nu )$.
The injectivity follows directly from the following positivity property of $D^2 W_k$.
\begin{theorem}
\label{thm:D2WkPositivity}
For all $k \in \mathbb{R}$ and $J \in \mathcal{S}'_\nu$ there exists a $C > 0$ such that for all $K \in \mathcal{S}'_\nu$
\begin{equation}
K \left[ \left( D^2_J W_k \right) \left( K \right) \right] \ge C \left \langle K, K \right \rangle \, .
\end{equation}
\end{theorem}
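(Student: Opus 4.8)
\emph{Overall strategy.} The bilinear form $K \mapsto K[(D^2_J W_k)(K)]$ will be recognised as the variance functional of a tilted probability measure, and the constant $C$ extracted by a soft compactness argument. A bound depending constructively on $S^{\mathrm{int}}$ and $F_k$ (e.g. by convexity or a Brascamp--Lieb estimate) is \emph{not} available, because $S^{\mathrm{int}}$ need be neither convex nor bounded; hence one argues by contradiction, using only that the tilted measure is equivalent to $\nu$ and has a density in some $L^{q_1}(\nu)$ with $q_1 > 1$.

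\emph{Step 1 (identification).} Differentiating the integral representation of $Z_k$ once more in the second slot — justified by Lemmata \ref{lem:ZDifferentiability}, \ref{lem:YDifferentiability} and the twice continuous differentiability of $W_k$ recorded above — gives, for $K, L \in \mathcal{S}'_\nu$,
\[ L \left[ \left( D^2_J W_k \right)\left( K \right) \right] = \int_{\mathcal{S}} K \, L \, \mathrm{d}\rho_k^J - \left( \int_{\mathcal{S}} K \, \mathrm{d}\rho_k^J \right)\left( \int_{\mathcal{S}} L \, \mathrm{d}\rho_k^J \right), \]
where $\rho_k^J$ is the Borel probability measure on $\mathcal{S}$ with density $g_J := \exp[J]\, f_k / \left( N_k Z_k(J) \right)$ with respect to $\nu$. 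Thus $K[(D^2_J W_k)(K)] = \mathrm{Var}_{\rho_k^J}(K)$, while $\langle K, K \rangle = \left\Vert K \right\Vert_{L^2(\nu)}^2 = \mathrm{Var}_\nu(K)$ because $\mathcal{S}'_\nu$ is the reproducing kernel Hilbert space of the centred Gaussian measure $\nu$. Moreover $g_J > 0$ everywhere, so $\rho_k^J \sim \nu$; and $g_J \in L^{q_1}(\nu)$ for some $q_1 > 1$, since $\exp[J] \in \bigcap_{p < \infty} L^p(\nu)$ (as $J$ is $\nu$-Gaussian) and $f_k \in L^{q_0}(\nu)$ for some $q_0 > 1$ by the standing assumptions together with $F_k \ge 0$.

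\emph{Step 2 (compactness and contradiction).} By homogeneity it suffices to bound $\mathrm{Var}_{\rho_k^J}$ from below on the unit sphere of $\mathcal{S}'_\nu$. Assume that infimum is $0$, and take $K_n$ with $\left\Vert K_n \right\Vert_{\mathcal{S}'_\nu} = 1$ and $\mathrm{Var}_{\rho_k^J}(K_n) \to 0$. Every $T \in \mathcal{S}'_\nu$ is a centred Gaussian variable under $\nu$, hence $\left\Vert T \right\Vert_{L^p(\nu)} = c_p \left\Vert T \right\Vert_{\mathcal{S}'_\nu}$ for each $p < \infty$; combined with $g_J \in L^{q_1}(\nu)$ this makes the inclusion $\mathcal{S}'_\nu \hookrightarrow L^2(\rho_k^J)$ bounded (and injective, as $\rho_k^J \sim \nu$). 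Passing to a weakly convergent subsequence $K_n \rightharpoonup K_\infty$ in $\mathcal{S}'_\nu$ — whence also $K_n \rightharpoonup K_\infty$ in $L^2(\rho_k^J)$ — and using weak lower semicontinuity of $F \mapsto \left\Vert F \right\Vert_{L^2(\rho_k^J)}^2 - \left( \int F \, \mathrm{d}\rho_k^J \right)^2 = \mathrm{Var}_{\rho_k^J}(F)$, one gets $\mathrm{Var}_{\rho_k^J}(K_\infty) = 0$, so $K_\infty$ is $\rho_k^J$-a.s. (hence $\nu$-a.s.) constant, thus $K_\infty = 0$ by $\nu$-centredness. Consequently $\int K_n \, \mathrm{d}\rho_k^J \to 0$, so $\int_{\mathcal{S}} K_n^2 \, g_J \, \mathrm{d}\nu = \mathrm{Var}_{\rho_k^J}(K_n) + \left( \int K_n \, \mathrm{d}\rho_k^J \right)^2 \to 0$ along the subsequence. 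But then, for every $c > 0$,
\[ 1 = \int_{\mathcal{S}} K_n^2 \, \mathrm{d}\nu \le \frac{1}{c} \int_{\mathcal{S}} K_n^2 \, g_J \, \mathrm{d}\nu + \int_{\{ g_J < c \}} K_n^2 \, \mathrm{d}\nu \le \frac{1}{c} \int_{\mathcal{S}} K_n^2 \, g_J \, \mathrm{d}\nu + c_4^2 \, \nu\left( g_J < c \right)^{1/2}, \]
the last bound by Hölder with $\left\Vert K_n \right\Vert_{L^4(\nu)} = c_4$. Sending $n \to \infty$ yields $1 \le c_4^2 \, \nu(g_J < c)^{1/2}$ for every $c > 0$, which is impossible since $\nu(g_J < c) \to 0$ as $c \to 0^+$ (because $g_J > 0$ everywhere). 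Hence the infimum is positive, and the asserted $C$ exists.

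\emph{Main obstacle.} The naive hope of a constructive lower bound fails, so the crux is to realise that a non-constructive compactness argument is the right tool; within it, the only genuinely new estimate is the level-set splitting in Step 2, whose contradiction rests essentially on the strict positivity of the density $g_J$. The identification in Step 1 is routine given the preceding lemmata, but one must be slightly careful that all differentiations under the integral (and the attendant Bochner integrals in $\mathcal{S}$) are legitimate, and that the integrability $g_J \in L^{q_1}(\nu)$, $q_1 > 1$ — which simultaneously supplies the bounded embedding $\mathcal{S}'_\nu \hookrightarrow L^2(\rho_k^J)$ and the uniform integrability implicit in the final display — indeed follows from the hypotheses.
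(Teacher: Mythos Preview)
Your proof is correct. Both you and the paper argue by contradiction along a normalised sequence $K_n$ and close with the same Gaussian hypercontractivity input $\|K_n\|_{L^4(\nu)}^2 = 3$, but the middle sections differ genuinely. The paper factorises
\[
K\bigl[(D^2_J W_k)(K)\bigr] \;\ge\; \frac{1}{N_k^2 Z_k(J)^2}\left(\int (|K|-K)\,e^{J} f_k\,\mathrm d\nu\right)\left(\int (|K|+K)\,e^{J} f_k\,\mathrm d\nu\right),
\]
forces one factor to zero along a subsequence, extracts $\nu$-a.e.\ convergence, and then uses the symmetry $L_n(-\psi)=-L_n(\psi)$ to upgrade $|L_n|-L_n\to 0$ a.e.\ to $L_n\to 0$ a.e., finishing with Vitali. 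You instead identify the form directly as $\mathrm{Var}_{\rho_k^J}(K)$, invoke weak sequential compactness of the unit ball of $\mathcal S'_\nu$ together with weak lower semicontinuity of the variance to pin down the weak limit $K_\infty=0$, and conclude $\int K_n^2\,g_J\,\mathrm d\nu\to 0$ without ever passing through pointwise convergence. Your route is more functional-analytic and arguably cleaner---it avoids the factorisation trick and the symmetry argument---while the paper's approach is more hands-on and makes the role of the centred-Gaussian structure (via the oddness $L_n(-\psi)=-L_n(\psi)$) explicit at the pointwise level. The two final estimates (your level-set splitting versus the paper's Vitali step) are essentially reformulations of one another: both exploit that $\{K_n^2\}$ is uniformly $\nu$-integrable thanks to the uniform $L^4(\nu)$ bound.

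One minor remark: in your closing paragraph you say $g_J\in L^{q_1}(\nu)$ supplies ``the uniform integrability implicit in the final display'', but that display uses only $\|K_n\|_{L^4(\nu)}=c_4$ and the strict positivity of $g_J$; the integrability of $g_J$ is needed earlier, for the bounded embedding $\mathcal S'_\nu\hookrightarrow L^2(\rho_k^J)$.
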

\begin{proof}
By Hölder's inequality,
\begin{equation}
\label{eq:D2WkEstimateFactorisation}
\begin{aligned}
K \left[ \left( D^2_J W_k \right) \left( K \right) \right] 
&=
\frac{1}{N_k^2 Z_k \left( J \right)^2}
\int_{\mathcal{S} \times \mathcal{S}}
\left[ K \left( \psi \right)^2 - K \left( \psi \right) K \left( \phi \right) \right] \\
&\phantom{=} \times
\exp \left[ J \left( \psi \right) + J \left( \phi \right) \right] f_k \left( \psi \right) f_k \left( \phi \right) \mathrm{d} \left( \nu \times \nu \right) \left( \psi, \phi \right) \\
&\ge
\frac{1}{N_k^2 Z_k \left( J \right)^2}
\int_{\mathcal{S} \times \mathcal{S}}
\left[ \left \vert K \left( \psi \right) K \left( \phi \right) \right \vert - K \left( \psi \right) K \left( \phi \right) \right] \\
&\phantom{\ge} \times
\exp \left[ J \left( \psi \right) + J \left( \phi \right) \right] f_k \left( \psi \right) f_k \left( \phi \right) \mathrm{d} \left( \nu \times \nu \right) \left( \psi, \phi \right) \\
&=
\frac{1}{N_k^2 Z_k \left( J \right)^2}
\int_{\mathcal{S}}
\left[ \left \vert K \left( \psi \right) \right \vert - K \left( \psi \right) \right]
\exp \left[ J \left( \psi \right) \right] f_k \left( \psi \right) \mathrm{d} \nu \left( \psi \right) \\
&\phantom{=} \times
\int_{\mathcal{S}}
\left[ \left \vert K \left( \psi \right) \right \vert + K \left( \psi \right) \right]
\exp \left[ J \left( \psi \right) \right] f_k \left( \psi \right) \mathrm{d} \nu \left( \psi \right)
\end{aligned}
\end{equation}
which clearly is nonnegative.
Let us now suppose that there is a sequence $( K_n )_{n \in \mathbb{N}}$ in $\mathcal{S}'_\nu$ with $\left< K_n, K_n \right> = 1$ such that the first integral tends to zero, i.e
\begin{equation}
\begin{aligned}
&\lim_{n \to \infty} \int_{\mathcal{S}}
\left[ \left \vert K_n \left( \psi \right) \right \vert - K_n \left( \psi \right) \right]
\exp \left[ J \left( \psi \right) \right] f_k \left( \psi \right) \mathrm{d} \nu \left( \psi \right) = 0 \, .
\end{aligned}
\end{equation}
Then, there exists a subsequence $( L_n )_{n \in \mathbb{N}}$ such that $\lim_{n \to \infty} \vert L_n ( \psi ) \vert - L_n ( \psi ) = 0$ for $\nu$-almost every $\psi$.
But since each $L_n$ can in turn be written as the $\nu$-almost everywhere pointwise limit of linear functions, the above implies $\lim_{n \to \infty} L_n ( \psi ) = 0$ for $\nu$-almost every $\psi$.
One arrives at the same conclusion if one takes the second integral to go to zero instead.
By the finiteness of $\nu$ we thus have that $L_n \to 0$ in $\nu$-measure as $n \to \infty$.
Now, let $\epsilon > 0$ and pick any $\nu$-measurable $A \subset \mathcal{S}$ with $\nu ( A ) < \epsilon^2 / 3$.
Then, for all $n \in \mathbb{N}$
\begin{equation}
\int_A L_n \left( \psi \right)^2 \mathrm{d} \nu \left( \psi \right)
\le
\sqrt{ \int_S L_n \left( \psi \right)^4 \mathrm{d} \nu \left( \psi \right) }
\sqrt{ \nu \left( A \right) } \\
<
\left \langle L_n, L_n \right \rangle \epsilon = \epsilon \, .
\end{equation}
Thus, Vitali's convergence theorem tells us that $L_n$ goes to zero in $L^2 ( \nu )$ i.e in $\mathcal{S}'_\nu$ for $n \to \infty$ which is a contradiction.
\end{proof}
\begin{corollary}
$D W_k : \mathcal{S}'_\nu \to H ( \nu )$ is injective for all $k \in \mathbb{R}$.
\end{corollary}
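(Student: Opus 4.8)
The plan is to deduce injectivity from the strong positivity of the second derivative established in Theorem~\ref{thm:D2WkPositivity} by a one-dimensional mean value argument along the segment joining two points with equal gradient. So suppose, for contradiction, that there are $J_1, J_2 \in \mathcal{S}'_\nu$ with $J_1 \neq J_2$ and $D W_k \left( J_1 \right) = D W_k \left( J_2 \right)$, and put $K = J_1 - J_2 \neq 0$.

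First I would record that the bilinear pairing $\mathcal{S}'_\nu \times H ( \nu ) \to \mathbb{R}$, $\left( K, R_\nu S \right) \mapsto \left \langle K, S \right \rangle$, tacitly used in Theorem~\ref{thm:D2WkPositivity} is continuous, being dominated by $\left \Vert K \right \Vert_{\mathcal{S}'_\nu} \left \Vert R_\nu S \right \Vert_{H ( \nu )}$ since $R_\nu$ is an isometric isomorphism. Combining this with the fact recorded after Lemma~\ref{lem:YDifferentiability} that each $W_k$ is twice continuously Fréchet differentiable with $D W_k : \mathcal{S}'_\nu \to H ( \nu )$, the chain rule shows that
\begin{equation}
\phi : \left[ 0, 1 \right] \to \mathbb{R}, \qquad t \mapsto K \left[ D W_k \left( J_2 + t K \right) \right]
\end{equation}
is continuously differentiable with
\begin{equation}
\phi' \left( t \right) = K \left[ \left( D^2_{J_2 + t K} W_k \right) \left( K \right) \right] \, .
\end{equation}

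Applying Theorem~\ref{thm:D2WkPositivity} at the point $J_2 + t K$ yields a constant $C_t > 0$ with $\phi' \left( t \right) \ge C_t \left \langle K, K \right \rangle$; since $K \neq 0$ and $\left \langle \cdot, \cdot \right \rangle$ is an inner product on $\mathcal{S}'_\nu$, we get $\phi' \left( t \right) > 0$ for every $t \in \left[ 0, 1 \right]$. A continuously differentiable real function with strictly positive derivative throughout $\left[ 0, 1 \right]$ is strictly increasing, so $\phi \left( 0 \right) < \phi \left( 1 \right)$. But $\phi \left( 0 \right) = K \left[ D W_k \left( J_2 \right) \right]$ and $\phi \left( 1 \right) = K \left[ D W_k \left( J_1 \right) \right]$ coincide by hypothesis, a contradiction; hence $D W_k$ is injective. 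There is no real obstacle here — this is a direct corollary of the preceding theorem — and the only point that needs a moment's care is the transfer of differentiability through the Cameron-Martin pairing, which is immediate from the $C^2$-regularity of $W_k$ together with the continuity of that pairing.
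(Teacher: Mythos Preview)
Your argument is correct and follows essentially the same line as the paper's own proof: both restrict to the one-dimensional function $t \mapsto K[D W_k(J_2 + tK)]$ on the segment and use Theorem~\ref{thm:D2WkPositivity} to derive a contradiction. The paper phrases it via Rolle's theorem (equal endpoints force a vanishing derivative somewhere, contradicting strict positivity), whereas you phrase it via strict monotonicity; these are equivalent, and your version is slightly more explicit about the $C^1$-regularity of the auxiliary function.
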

\begin{proof}
Suppose that $D_J W_k = D_K W_k$ for some $J, K \in \mathcal{S}'_\nu$.
Then, by Rolle's theorem, there is a $t \in [ 0, 1 ]$ such that
\begin{equation}
\left( J - K \right) \left[ \left( D^2_{t J + \left( 1 - t \right) K} W_k \right) \left( J - K \right) \right]
=
0 \, .
\end{equation}
By theorem \ref{thm:D2WkPositivity} this can only happen if $J = K$.
\end{proof}
Let us also record the following corollary which will be of paramount importance.
\begin{corollary}
\label{cor:D2JWkIsContinuouslyInvertible}
For all $k \in \mathbb{R}$ and $J \in \mathcal{S}'_\nu$ the linear map $D^2_J W_k : \mathcal{S}'_\nu \to H (\nu )$ is continuously invertible.
\end{corollary}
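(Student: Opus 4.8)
The plan is to reduce the statement to a standard fact about positive self-adjoint operators on a Hilbert space, using theorem \ref{thm:D2WkPositivity} as the decisive ingredient. First I would invoke the earlier lemma that $R_\nu : \mathcal{S}'_\nu \to H ( \nu )$ is an isometric isomorphism which realises $H ( \nu )$ as the dual of the Hilbert space $\mathcal{S}'_\nu$; concretely, the defining relation $K ( R_\nu T ) = \langle K, T \rangle$ extends by density from $K \in \mathcal{S}'_\beta$ to all $K \in \mathcal{S}'_\nu$. Under this identification the bounded linear operator $D^2_J W_k : \mathcal{S}'_\nu \to H ( \nu )$ — bounded by definition, since $W_k$ is twice continuously Fréchet differentiable — corresponds to the operator $A := R_\nu^{-1} \circ D^2_J W_k$ on $\mathcal{S}'_\nu$, and one has $\langle K, A L \rangle = K [ ( D^2_J W_k ) ( L ) ]$ for all $K, L \in \mathcal{S}'_\nu$.

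Next I would record two properties of $A$. It is self-adjoint: the second Fréchet derivative is a symmetric bilinear form (Schwarz), so $K [ ( D^2_J W_k ) ( L ) ] = L [ ( D^2_J W_k ) ( K ) ]$, which translates into $\langle K, A L \rangle = \langle A K, L \rangle$ over $\mathbb{R}$; being moreover bounded and everywhere defined, $A$ is self-adjoint. And it is coercive: theorem \ref{thm:D2WkPositivity} furnishes a $C > 0$ (depending on $k$ and $J$ but uniform in $K$) with $\langle K, A K \rangle \ge C \langle K, K \rangle$ for all $K \in \mathcal{S}'_\nu$, i.e. $A \ge C\,\mathrm{Id}$.

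Finally, a bounded self-adjoint operator on a Hilbert space with $A \ge C\,\mathrm{Id}$ and $C > 0$ has spectrum contained in $[ C, \| A \| ]$, so $0$ lies in its resolvent set and $A$ is invertible with $\| A^{-1} \| \le 1 / C$; equivalently one may apply the Lax--Milgram theorem to the bounded coercive bilinear form $( K, L ) \mapsto \langle K, A L \rangle$. Since $D^2_J W_k = R_\nu \circ A$ is the composition of a continuously invertible operator with the isometric isomorphism $R_\nu$, it is itself a continuously invertible map $\mathcal{S}'_\nu \to H ( \nu )$.

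I do not expect a serious obstacle here: the analytic content has already been isolated in theorem \ref{thm:D2WkPositivity}, and what remains is essentially bookkeeping. The points demanding care are confirming that $D^2_J W_k$ is genuinely bounded as a consequence of the twice-continuous differentiability of $W_k$, and keeping the duality identification $H ( \nu ) \cong ( \mathcal{S}'_\nu )'$ via $R_\nu$ straight so that the bracket $K [ \, \cdot \, ]$ appearing in theorem \ref{thm:D2WkPositivity} is correctly read as the inner product $\langle K, \cdot \rangle$ on $\mathcal{S}'_\nu$.
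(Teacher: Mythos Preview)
Your proposal is correct and follows essentially the same route as the paper: both pass to the operator $A = R_\nu^{-1} \circ D^2_J W_k$ on $\mathcal{S}'_\nu$, observe that it is bounded, self-adjoint and coercive by theorem \ref{thm:D2WkPositivity}, and conclude continuous invertibility. The only cosmetic differences are that the paper verifies symmetry by ``writing out'' the integral formula rather than invoking Schwarz, and phrases the final step as ``self-adjoint and injective $\Rightarrow$ dense range, then bounded below $\Rightarrow$ surjective'' rather than citing the spectral bound or Lax--Milgram.
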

\begin{proof}
The bilinear form $\mathcal{S}'_\nu \times \mathcal{S}'_\nu \to \mathbb{R}$ given by
\begin{equation}
\left( K, L \right) \mapsto K \left[ \left( D^2_J W_k \right) \left( L \right) \right]
\end{equation}
is symmetric which can be seen from writing it out explicitly.
By theorem \ref{thm:D2WkPositivity} it is bounded from below.
By continuity $R_\nu^{-1} \circ D^2_J W_k$ is thus self-adjoint, continuous and injective and as such has dense range in $\mathcal{S}'_\nu$.
Since it is also bounded from below it is continuously invertible.
\end{proof}
The surjectivity of $D W_k$ is substantially more involved.
\begin{theorem}
$D W_k : \mathcal{S}'_\nu \to H ( \nu )$ is surjective for all $k \in \mathbb{R}$.
\end{theorem}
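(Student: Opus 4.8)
\emph{Reduction to a growth estimate.} The idea is to reduce surjectivity to a coercivity property of $W_k$. Recall that $W_k\colon\mathcal{S}'_\nu\to\mathbb{R}$ is finite and continuous, convex as the logarithm of a moment-generating function, and everywhere Fréchet differentiable by lemma~\ref{lem:ZDifferentiability} and the chain rule, so that $\partial W_k(T)=\{DW_k(T)\}$ for every $T$. Identifying $(\mathcal{S}'_\nu)'$ with $H(\nu)$ through $R_\nu$, the Fenchel--Young identity together with the Fenchel--Moreau theorem gives $DW_k(T)=h\iff T\in\partial W_k^{\ast}(h)$ for $T\in\mathcal{S}'_\nu$ and $h\in H(\nu)$; hence $DW_k$ maps onto $H(\nu)$ exactly when $\partial W_k^{\ast}(h)\neq\emptyset$ for every $h$. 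Since a proper convex lower semicontinuous function that is finite on a whole Banach space is continuous, and thus subdifferentiable at every point, it suffices to prove that $W_k^{\ast}$ is finite on all of $H(\nu)$, which by a theorem of Moreau is equivalent to $W_k$ being \emph{superlinear}, i.e. $W_k(T)/\lVert T\rVert_{\mathcal{S}'_\nu}\to\infty$ as $\lVert T\rVert_{\mathcal{S}'_\nu}\to\infty$. (Alternatively one may invoke corollary~\ref{cor:D2JWkIsContinuouslyInvertible} and the inverse function theorem: $DW_k$ is then an injective $C^{1}$ local diffeomorphism with open image, and by connectedness of $H(\nu)$ the claim follows once the image is shown to be closed; closedness reduces, via $\langle DW_k(T),T\rangle\ge W_k(T)-W_k(0)$, to the very same superlinearity.)

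\emph{Proof of superlinearity.} To establish the growth property I would proceed as follows. Fix, using that $\nu$ is Radon, a compact set $M\subset\mathcal{S}$ with $\nu(M)\ge\tfrac34$. Let $T\in\mathcal{S}'_\nu\setminus\{0\}$, put $\sigma=\lVert T\rVert_{\mathcal{S}'_\nu}=\lVert R_\nu T\rVert_{H(\nu)}$, and let $t\in(0,1)$. Starting from the representation~\eqref{eq:ZRepresentation2} and applying the Cameron--Martin theorem once more, this time to the translate of $\nu$ by $tR_\nu T$, one obtains
\[
Z_k(T)=\frac{1}{N_k}\,e^{\frac12(1-t^{2})\sigma^{2}}\int_{\mathcal{S}}f_k\bigl(\psi+(1-t)R_\nu T\bigr)\,e^{tT(\psi)}\,\mathrm{d}\nu(\psi)\;\ge\;\frac{1}{N_k}\,e^{\frac12(1-t^{2})\sigma^{2}}\int_{\{T(\psi)\ge0\}}f_k\bigl(\psi+(1-t)R_\nu T\bigr)\,\mathrm{d}\nu(\psi).
\]
Write $R=(1-t)\sigma$, so that $(1-t)R_\nu T$ has $H(\nu)$-norm $R$. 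Since $H(\nu)$ embeds continuously into $\mathcal{S}$, hence into $L^{2}$, and $\mathcal{S}$ is Montel, the set $E_R=M+\{h\in H(\nu):\lVert h\rVert_{H(\nu)}\le R\}$ is relatively compact in $\mathcal{S}$; therefore $S^{\mathrm{int}}$ is bounded on $E_R$ by some $A(R)<\infty$, while by property~\ref{itm:FkBound} one has $F_k(\phi,\phi)\le k^{2}(\phi,\phi)\le k^{2}(c_M+c_0R)^{2}$ on $E_R$, with $c_0$ the norm of the inclusion $H(\nu)\hookrightarrow L^{2}$ and $c_M=\sup_{\phi\in M}\lVert\phi\rVert_{L^{2}}$. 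Hence $f_k\ge\delta_R>0$ on $E_R$, and since $\nu$ is invariant under $\psi\mapsto-\psi$ we have $\nu(\{T(\psi)\ge0\}\cap M)\ge\nu(M)-\tfrac12\ge\tfrac14$; using $\tfrac12(1-t^{2})\sigma^{2}=R\sigma-\tfrac12R^{2}$ this yields
\[
W_k(T)\;\ge\;R\sigma-\tfrac12R^{2}+\ln\frac{\delta_R}{4N_k}\;=\;R\sigma-C_R\qquad\text{for every }0<R<\sigma,
\]
with $C_R$ depending only on $R$ and $k$. Given any $L>0$, taking $R=L$ shows $W_k(T)\ge L\sigma-C_L$ as soon as $\sigma>L$; letting $\sigma\to\infty$ and then $L\to\infty$ proves $W_k(T)/\sigma\to\infty$, which completes the argument.

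\emph{Where the difficulty lies.} The crux is the growth estimate of the second step, and the obstacle is a genuine tension: on the one hand the field $\psi$ must be confined to a relatively compact subset of $\mathcal{S}$ so that mere continuity of $S^{\mathrm{int}}$ furnishes a finite bound $A(R)$ — a bound which a priori may grow arbitrarily fast in the size parameter $R$; on the other hand the Gaussian prefactor $e^{\frac12(1-t^{2})\sigma^{2}}$ must remain exponentially large in $\sigma$. The resolution is that one never needs to let $R$ grow with $\sigma$: choosing $t\to1$ quickly keeps $R=(1-t)\sigma$, and hence $A(R)$, fixed while the prefactor stays at least $e^{R\sigma}$, which is already superlinear; and this works for an \emph{arbitrary} continuous $S^{\mathrm{int}}$ precisely because the Montel property of $\mathcal{S}$ guarantees $A(R)<\infty$ for each fixed $R$. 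A secondary point requiring care is that the identification $(\mathcal{S}'_\nu)'\cong H(\nu)$ must be arranged so as to intertwine $DW_k$ with the Fréchet derivative and $W_k^{\ast}$ with the honest convex conjugate, so that ``$DW_k$ surjective'' and ``$W_k^{\ast}$ finite on $H(\nu)$'' are literally the same statement.
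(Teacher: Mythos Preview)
Your argument is correct and follows a genuinely different route from the paper's.

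The paper fixes $\phi\in H(\nu)$, rewrites the equation $D_JW_k=\phi$ as the minimisation of the convex functional $M_\phi(T)=\int_{\mathcal{S}}e^{T(\psi)}f_k(\psi+\phi)\,\mathrm{d}\nu(\psi)$, and then shows that every minimising sequence $(H_n)$ has a bounded subsequence by a three-case analysis according to whether $p(R_\nu H_n)$ accumulates at $0$, at a finite positive value, or diverges, for a suitable continuous norm $p$ on $\mathcal{S}$. The two nontrivial cases are handled through lower bounds on the $\nu_p$-mass of shifted balls, invoking an external estimate on Gaussian measures of translated balls; weak lower semicontinuity and weak compactness then furnish a minimiser.

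Your approach replaces this pointwise minimisation by a single global coercivity statement: you show $W_k(T)/\lVert T\rVert_{\mathcal{S}'_\nu}\to\infty$, which via Fenchel duality forces $W_k^{\ast}$ to be finite, hence continuous, hence subdifferentiable on all of $H(\nu)$, and surjectivity follows. The growth bound comes from a parametrised Cameron--Martin shift by $(1-t)R_\nu T$: keeping $R=(1-t)\sigma$ fixed confines the integrand to the compact set $E_R=M+\{\lVert h\rVert_{H(\nu)}\le R\}$ (compactness of Cameron--Martin balls in $\mathcal{S}$ for Radon Gaussian measures, combined with the Montel property), on which the continuous positive function $f_k$ has a uniform lower bound $\delta_R>0$, while the Gaussian prefactor contributes $e^{R\sigma-\frac12R^2}$. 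This gives $W_k(T)\ge R\sigma-C_R$ for every fixed $R$, which is exactly superlinearity.

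The trade-off: your argument is shorter, avoids the case split and the external shifted-ball estimate, and makes transparent the convex-analytic mechanism behind surjectivity; the paper's argument is more hands-on and yields, in each case, an explicit lower bound on $M_\phi$ along the minimising sequence. Both rest on the same ingredients---Cameron--Martin, positivity and continuity of $f_k$, and a compactness property tied to $H(\nu)$---but you package them more economically.
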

\begin{proof}
Let $\phi \in H ( \nu )$.
Then $J \in \mathcal{S}'_\nu$ solves the equation $D_J W_k = \phi$ if and only if for all $K \in \mathcal{S}'_\nu$
\begin{equation}
\label{eq:ExtremisationCondition}
\int_{\mathcal{S}} K \left( \psi - \phi \right) \exp \left[ J \left( \psi \right) \right] f_k \left( \psi \right) \mathrm{d} \nu \left( \psi \right) = 0 \, .
\end{equation}
Since $\phi \in H ( \nu )$ we may apply the Cameron-Martin theorem to obtain the equivalent condition
\begin{equation}
\int_{\mathcal{S}} K \left( \psi \right) \exp \left[ J \left( \psi \right) - \left( R_\nu^{-1} \phi \right) \left( \psi \right) \right] f_k \left( \psi + \phi \right) \mathrm{d} \nu \left( \psi \right)
=
0
\end{equation}
for all $K \in \mathcal{S}'_\nu$.
Let us make the ansatz $J = R_\nu^{-1} \phi + H$ for some $H \in \mathcal{S}'_\nu$.
Then the above is true precisely when $H$ minimises the convex function
\begin{equation}
M_\phi : \mathcal{S}'_\nu \to \mathbb{R} \qquad
T \mapsto \int_{\mathcal{S}} \exp \left[T \left( \psi \right) \right] f_k \left( \psi + \phi \right) \mathrm{d} \nu \left( \psi \right) \, .
\end{equation}
$M_\phi$ is clearly well-defined and continuous because it admits the representation
\begin{equation}
M_\phi \left( T \right) = \exp \left[ \frac{1}{2} \left \langle T, T \right \rangle \right] \int_{\mathcal{S}} f_k \left( \psi + \phi + R_\nu T \right) \mathrm{d} \nu \left( \psi \right)
\end{equation}
in analogy to $Z_k$ in equation \ref{eq:ZRepresentation2}.
We shall now assume that $( H_n )_{n \in \mathbb{N}}$ is a minimising sequence of $M_\phi$ and the goal is to show that there is some bounded subsequence.

Since $f_k$ is continuous and $\mathcal{S}$ admits continuous norms, there is a continuous norm $p$ on $\mathcal{S}$ and some $\delta > 0$ such that
\begin{equation}
\forall \psi \in \mathcal{S} : \quad p \left( \psi \right) \le \delta \implies f_k \left( \phi + \psi \right) \ge \frac{1}{2} f_k \left( \phi \right) \, .
\end{equation}
Now, consider the three mutually exclusive cases
\begin{enumerate}
\item[\underline{$1$}:] $\limsup_{n \to \infty} p \left( R_\nu H_n \right) = 0$,
\item[\underline{$2$}:] there is a subsequence $(K_n)_{n \in \mathbb{N}}$ such that $\lim_{n \to \infty} p \left( R_\nu K_n \right) \in \left( 0, \infty \right)$,
\item[\underline{$3$}:] $\liminf_{n \to \infty} p \left( R_\nu H_n \right) = \infty$.
\end{enumerate}
\underline{$1$}: Suppose there exists a continuous seminorm $q$ on $\mathcal{S}$ such that
\begin{equation}
\limsup_{n \to \infty} \left( p + q \right) \left( R_\nu H_n \right) \neq 0 \, .
\end{equation}
We may then replace the previously used norm $p$ by $p + q$ such that we land in either case \underline{$2$} or \underline{$3$}.
If we have $\limsup_{n \to \infty} ( p + q ) ( R_\nu H_n ) = 0$ for all continuous seminorms $q$ on $\mathcal{S}$, we clearly have that
\begin{equation}
B := \overline{\left \{ R_\nu H_n : n \in \mathbb{N} \right \}} + \left \{ \phi \right \} \subset \mathcal{S}
\end{equation}
is compact.
Also since $\nu$ is Radon there is a compact set $C \subseteq \mathcal{S}$ such that $\nu ( C ) > 0$. 
Clearly,
\begin{equation}
\begin{aligned}
M_\phi \left( H_n \right)
\ge \exp \left[ \frac{1}{2} \left \langle H_n, H_n \right \rangle \right] \int_{C} f_k \left( \psi + \phi + R_\nu H_n \right) \mathrm{d} \nu \left( \psi \right) \, .
\end{aligned}
\end{equation}
Now, $f_k$ is continuous so that it attains its infimum on $B + C$ which cannot be zero.
Hence,
\begin{equation}
\inf_{\psi \in C} f_k ( \psi + \phi + R_\nu H_n )
\ge
\inf_{\psi \in B + C} f_k \left( \psi \right) := \alpha > 0
\end{equation}
and
\begin{equation}
M_\phi \left( H_n \right)
\ge \alpha \exp \left[ \frac{1}{2} \left \langle H_n, H_n \right \rangle \right] \nu \left( C \right) \, .
\end{equation}
But then, $H_n$ cannot be unbounded in $\mathcal{S}'_\nu$ since it is a minimising sequence of $M_\phi$.

For the remaining two cases we shall restrict the integral to a ball of radius $0 < r \le \delta$ in the norm $p$.
Furthermore, $\nu_p$, the pushforward measure of $\nu$ to $\mathcal{S}_p$ via the natural map $\iota_p : \mathcal{S} \to \mathcal{S}_p$ is a Gaußian measure on a Banach space and we obtain
\begin{equation}
\label{eq:MphiRestrictedToBallEstimate}
M_\phi \left( H_n \right)
\ge
\frac{1}{2} f_k \left( \phi \right)
\exp \left[ \frac{1}{2} \left \langle H_n, H_n \right \rangle \right]
\nu_p \left( \overline{B_r \left( - \iota_p R_\nu H_n \right)} \right)
\end{equation}
for all $n \in \mathbb{N}$.
Here, $\overline{B_r ( - \iota_p R_\nu H_n )}$ denotes the closed ball of radius $r$ around $- \iota_p R_\nu H_n$ in $\mathcal{S}_p$.
Note that the transpose $\ltrans{\iota}_p : (\mathcal{S}_p)^* \to \mathcal{S}'_\beta$ has dense range because $\iota_p$ is injective since $p$ is a norm and not just a seminorm (see \cite[Chapter 4, §4, Corollary 2.3]{src:SchaeferWolff:TopologicalVectorSpaces}) and $\mathcal{S}$ is reflexive.
Hence, by the continuity of $M_\phi$ we may assume that $H_n \in \ltrans{\iota}_p (\mathcal{S}_p)^*$ for all $n \in \mathbb{N}$.
Furthermore, letting $R_{\nu_p}$ denote the Hilbert isomorphism between the closure of $(\mathcal{S}_p)^*$ in $L^2(\nu_p)$ and $H(\nu_p)$, it is straightforward to verify that $\iota_p R_\nu \ltrans{\iota}_p$ is equal to the restriction of $R_{\nu_p}$ to $(\mathcal{S}_p)^*$.

\underline{$2$}: Restrict to a subsequence $( \ltrans{\iota}_p K_n )_{n \in \mathbb{N}}$ of $(H_n)_{n \in \mathbb{N}}$ with
\begin{itemize}
\setlength\itemsep{0.5em}
\item $\lim_{n \to \infty} p \left( R_\nu \ltrans{\iota}_p K_n \right) =: P \in (0, \infty)$,
\item $\inf_{n \in \mathbb{N}} p \left( R_\nu \ltrans{\iota}_p K_n \right) > \frac{1}{2} P$,
\item $\sup_{n \in \mathbb{N}} p \left( R_\nu \ltrans{\iota}_p K_n \right) < 2 P$.
\end{itemize}
Set $\gamma = \min \{ \delta, P/2 \}$ and in accordance with \cite[Corollary 7]{src:GaußianMeasureOfShiftedBalls} (considering $t = 1$ only)
\begin{equation}
r = \frac{\gamma}{4} \qquad \text{and} \qquad
\epsilon = 1 - \frac{3}{4} \frac{\gamma}{P} \in \left( 0, 1 \right) \, .
\end{equation}
Then, $r < ( 1 - \epsilon ) p ( R_\nu \ltrans{\iota}_p K_n )$ for all $n \in \mathbb{N}$.
Now, define
\begin{equation}
\begin{aligned}
g_n &= - \left( 1 - \frac{\epsilon \gamma}{8 P} \right) R_{\nu_p} K_n \\
\text{implying}\quad p \left( - R_{\nu_p} K_n - g_n \right)
&=
\frac{1}{8} \epsilon \gamma \frac{p \left( R_\nu \ltrans{\iota}_p K_n \right)}{P}
\le
\frac{1}{4} \epsilon \gamma
=
\epsilon r \, .
\end{aligned}
\end{equation}
and note that $g_n \in R_{\nu_p}(\mathcal{S}_p)^*$ for all $n \in \mathbb{N}$.
Hence, by \cite[Corollary 7]{src:GaußianMeasureOfShiftedBalls},
\begin{equation}
\nu_p \left( \overline{B_r \left( - \iota_p R_\nu K_n \right)} \right)
\ge
\exp \left[ - \frac{1}{2} \left( 1 - \frac{\epsilon \gamma}{8 P} \right)^2 \left \langle \ltrans{\iota}_p K_n, \ltrans{\iota}_p K_n \right \rangle \right]
\nu_p \left( \overline{B_{\left( 1 - \epsilon \right) r} \left( 0 \right)} \right)
\end{equation}
which combines with equation \ref{eq:MphiRestrictedToBallEstimate} to
\begin{equation}
M_\phi \left( \ltrans{\iota}_p K_n \right)
\ge
\frac{1}{2} f_k \left( \phi \right)
\exp \left( \frac{1}{2} \left[ 1 - \left( 1 - \frac{\epsilon \gamma}{8 P} \right)^2 \right] \left \langle \ltrans{\iota}_p K_n, \ltrans{\iota}_p K_n \right \rangle \right)
\nu_p \left( \overline{B_{\left( 1 - \epsilon \right) r} \left( 0 \right)} \right)
\end{equation}
for all $n \in \mathbb{N}$.
By lemma \ref{lem:BallHasNonzeroMeasure} the above is nonzero.
Furthermore, $\epsilon \gamma < 8 P$ such that $( \ltrans{\iota}_p K_n )_{n \in \mathbb{N}}$ must be bounded since it is a minimising sequence of $M_\phi$.

\underline{$3$}: Restrict to a subsequence $( \ltrans{\iota}_p K_n )_{n \in \mathbb{N}}$ of $(H_n)_{n \in \mathbb{N}}$ with $p ( R_\nu \ltrans{\iota}_p K_n ) > 2 \delta$ for all $n \in \mathbb{N}$ and with the same notation as before, set
\begin{equation}
\epsilon = \frac{1}{2}, \qquad r = \delta \qquad \text{and} \qquad g_n = - \left( 1 - \frac{\delta}{2 p \left( R_{\nu_p} K_n \right)} \right) R_{\nu_p} K_n
\end{equation}
for all $n \in \mathbb{N}$.
Then, clearly
\begin{equation}
r < \left( 1 - \epsilon \right) p \left( R_{\nu_p} K_n \right)
\qquad \text{and} \qquad
p \left( - R_{\nu_p} K_n - g_n \right) \le \epsilon r
\end{equation}
such that
\begin{equation}
\begin{aligned}
\nu_p &\left( \overline{B_r \left( - R_\nu \ltrans{\iota}_p K_n \right)} \right) \\
&\ge
\exp \left[ - \frac{1}{2} \left( 1 - \frac{\delta}{2 p \left( R_{\nu_p} K_n \right)} \right)^2 \left \langle \ltrans{\iota}_p K_n, \ltrans{\iota}_p K_n \right \rangle \right] 
\nu_p \left( \overline{B_{\left( 1 - \epsilon \right) r} \left( 0 \right)} \right)
\end{aligned}
\end{equation}
for all $n \in \mathbb{N}$.
Now, note that by \cite[Theorem 3.2.10(i)]{src:Bogachev:GaußianMeasures}, there is a $C > 0$ such that $p ( R_\nu K ) \le C \sqrt{\langle K, K \rangle}$ for all $K \in \mathcal{S}(\mathbb{R}^d)'_\nu$.
Then, since $p ( R_{\nu_p} K_n ) > \delta / 2$ we arrive at
\begin{equation}
\begin{aligned}
\nu_p &\left( \overline{B_r \left( - R_\nu \ltrans{\iota}_p K_n \right)} \right) \\
&\ge
\exp \left[ - \frac{1}{2} \left( 1 - \frac{\delta}{2 C \sqrt{ \left \langle \ltrans{\iota}_p K_n, \ltrans{\iota}_p K_n \right \rangle}} \right)^2 \left \langle \ltrans{\iota}_p K_n, \ltrans{\iota}_p K_n \right \rangle \right] 
\nu_p \left( \overline{B_{\left( 1 - \epsilon \right) r} \left( 0 \right)} \right)
\end{aligned}
\end{equation}
which combines with equation \ref{eq:MphiRestrictedToBallEstimate} to
\begin{equation}
M_\phi \left( \ltrans{\iota}_p K_n \right)
\ge
\frac{1}{2} f_k \left( \phi \right)
\exp \left[ \frac{\delta \sqrt{ \left \langle \ltrans{\iota}_p K_n, \ltrans{\iota}_p K_n \right \rangle}}{4 C} - \frac{\delta^2}{8 C^2} \right]
\nu_p \left( \overline{B_{\left( 1 - \epsilon \right) r} \left( 0 \right)} \right)
\end{equation}
for all $n \in \mathbb{N}$.
As before, since $( \ltrans{\iota}_p K_n )_{n \in \mathbb{N}}$ is a minimising sequence of $M_\phi$ it must be bounded.

Since $M_\phi$ is continuous and convex, it is also weakly lower semicontinuous and hence attains its minimum by the weak compactness of bounded balls in $\mathcal{S}'_\nu$.
\end{proof}
So, $D W_k : \mathcal{S}'_\nu \to H ( \nu )$ is a Fréchet differentiable bijection and by corollary \ref{cor:D2JWkIsContinuouslyInvertible} $D^2_J W_k$ is continuously invertible for all $J \in \mathcal{S}'_\nu$ and $k \in \mathbb{R}$.
\begin{corollary}
\label{cor:DWkInverseIsContinuouslyDifferentiable}
The map $\left( k, \phi \right) \mapsto ( D W_k )^{-1} ( \phi )$ is continuously Fréchet differentiable.
\end{corollary}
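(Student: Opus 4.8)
The plan is to realise $(D W_k)^{-1}$ as the solution of an implicit equation and invoke the implicit function theorem for maps between Banach spaces. Concretely, I would introduce
\[
G : \mathbb{R} \times \mathcal{S}'_\nu \times H ( \nu ) \to H ( \nu ) , \qquad ( k, J, \phi ) \mapsto D_J W_k - \phi ,
\]
so that, for fixed $( k, \phi )$, the equation $G ( k, J, \phi ) = 0$ is precisely $D_J W_k = \phi$, whose unique solution is $J = ( D W_k )^{-1} ( \phi )$ since $D W_k : \mathcal{S}'_\nu \to H ( \nu )$ has already been shown to be a bijection. The first step is to verify that $G$ is continuously Fréchet differentiable on $\mathbb{R} \times \mathcal{S}'_\nu \times H ( \nu )$. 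As the $\phi$-dependence is affine, this reduces to the joint continuous differentiability of $( k, J ) \mapsto D_J W_k \in H ( \nu )$, which follows from $W = \ln \circ Z$ together with lemmata \ref{lem:ZDifferentiability} and \ref{lem:YDifferentiability}: the former gives that $( k, T ) \mapsto Z_k ( T )$ is $C^1$ and nowhere zero, the latter that $Y : ( k, T ) \mapsto D_T Z_k$ is $C^1$ with values in $H ( \nu )$, and since $D_T W_k = Z_k ( T )^{-1} \, Y ( k, T )$ is a scalar $C^1$ function times an $H ( \nu )$-valued $C^1$ function, it is itself $C^1$; its partial derivative in the $J$-direction is the map $D^2_J W_k$, which is also jointly continuous in $( k, J )$ by lemma \ref{lem:YDifferentiability}.

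Next I would identify the partial derivative of $G$ in the $J$-direction: at every point it equals $D^2_J W_k : \mathcal{S}'_\nu \to H ( \nu )$, which by corollary \ref{cor:D2JWkIsContinuouslyInvertible} is a continuous linear isomorphism of Hilbert spaces. Fixing an arbitrary point $( k_0, \phi_0 ) \in \mathbb{R} \times H ( \nu )$ and writing $J_0 = ( D W_{k_0} )^{-1} ( \phi_0 )$, the implicit function theorem then yields an open neighbourhood $U$ of $( k_0, \phi_0 )$ and a continuously Fréchet differentiable map $\Psi : U \to \mathcal{S}'_\nu$ with $\Psi ( k_0, \phi_0 ) = J_0$ and $D_{\Psi ( k, \phi )} W_k = \phi$ for all $( k, \phi ) \in U$.

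Finally, because $D W_k$ is a bijection for every $k \in \mathbb{R}$, the equation $D_J W_k = \phi$ admits $J = ( D W_k )^{-1} ( \phi )$ as its only solution, so $\Psi ( k, \phi ) = ( D W_k )^{-1} ( \phi )$ throughout $U$. Since $( k_0, \phi_0 )$ was arbitrary, $( k, \phi ) \mapsto ( D W_k )^{-1} ( \phi )$ is continuously Fréchet differentiable near every point of $\mathbb{R} \times H ( \nu )$, hence on all of it; the implicit function theorem moreover furnishes the expected formula $\phi \mapsto ( D^2_J W_k )^{-1}$ with $J = ( D W_k )^{-1} ( \phi )$ for the derivative in the $\phi$-direction, and an analogous expression in the $k$-direction.

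I expect the only mildly delicate point to be the verification that $( k, J ) \mapsto D_J W_k$ is genuinely jointly $C^1$ as a map into $H ( \nu )$ — rather than merely separately differentiable or $C^1$ only into the larger space $\mathcal{S}$ — but this is bookkeeping with the chain rule once lemmata \ref{lem:ZDifferentiability} and \ref{lem:YDifferentiability} are granted, the crucial input being that the relevant (generalised) Bochner integrals already take values in $H ( \nu )$, so that composing with the continuous inverse $( D^2_J W_k )^{-1}$ stays within the correct Hilbert space. Beyond that, everything is a routine application of standard calculus in Hilbert spaces, requiring no new analytic estimates.
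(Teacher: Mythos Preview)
Your proposal is correct and follows essentially the same route as the paper: both set up the auxiliary map $(k,\phi,T)\mapsto D_T W_k-\phi$, check its $C^1$-regularity via lemmata \ref{lem:ZDifferentiability} and \ref{lem:YDifferentiability}, note that the partial derivative in the $T$-direction is $D^2_T W_k$, and invoke the implicit function theorem together with corollary \ref{cor:D2JWkIsContinuouslyInvertible}. Your write-up is somewhat more explicit about the passage from the local solution furnished by the implicit function theorem to the global inverse via the already-established bijectivity of $D W_k$, which the paper leaves implicit.
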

\begin{proof}
Define $g : \mathbb{R} \times H ( \nu ) \times \mathcal{S}'_\nu \to H ( \nu )$ with
\begin{equation}
\left( k, \phi, T \right) \mapsto D_T W_k - \phi
=
\frac{D_T Z_k}{Z_k \left( T \right)} - \phi \, .
\end{equation}
It is continuously Fréchet differentiable by lemma \ref{lem:YDifferentiability} and for all $k \in \mathbb{R}, \phi \in H ( \nu )$ and $K, T \in \mathcal{S}'_\nu$
\begin{equation}
\left( D_{k, \phi, T} g \right) \left( 0, 0, K \right) = \left( D^2_T W_k \right) \left( K \right) \, .
\end{equation}
Since $D^2_T W_k$ is continuously invertible, $g$ satisfies the conditions of the implicit function theorem.
\end{proof}

Finally, we may define the \textbf{effective average action} $\Gamma_k : H ( \nu ) \to \mathbb{R}$ as
\begin{equation}
\phi \mapsto \sup_{J \in \mathcal{S}'_\nu} \left[ J \left( \phi \right) - W_k \left( J \right) \right] - \frac{1}{2} F_k \left( \phi, \phi \right) \, .
\end{equation}
for all $k \in \mathbb{R}$.
It is well-defined since the supremum is attained precisely for $J = ( D W_k )^{-1} ( \phi )$.
Hence,
\begin{equation}
\begin{aligned}
\Gamma_k \left( \phi \right) &= \left( D W_k \right)^{-1} \left( \phi \right) \left( \phi \right) - W_k \left( \left( D W_k \right)^{-1} \left( \phi \right) \right) 
- \frac{1}{2} F_k \left( \phi, \phi \right)
\end{aligned}
\end{equation}
for all $\phi \in H ( \nu )$.
By the chain rule the above is also Fréchet differentiable with derivative $D \Gamma_k : H ( \nu ) \to \mathcal{S}'_\nu$ given by
\begin{equation}
\label{eq:DGammakDef}
\phi \mapsto \left( D W_k \right)^{-1} \left( \phi \right) - F_k \left( \phi, \cdot \right)
\end{equation}
where we have used the continuous injection $\mathcal{S}'_\beta \hookrightarrow \mathcal{S}'_\nu$.
This is precisely the \textbf{quantum equation of motion} (see e.g. \cite[Equation 22]{src:Gies:IntroductionToFRG}).
But then, we immediately see that we can take another derivative, leading to
\begin{equation}
D^2 \Gamma_k : H ( \nu ) \to \mathcal{L} \left( H ( \nu ), \mathcal{S}'_\nu \right) \qquad \phi \mapsto \left( D^2_{\left( D W_k \right)^{-1} \left( \phi \right)} W_k \right)^{-1} - F_k \, .
\end{equation}
Thus, the operators $D^2_\phi \Gamma_k + F_k \in \mathcal{L} ( H ( \nu ), \mathcal{S}'_\nu )$ are clearly continuously invertible with the inverses given by $D^2_{( D W_k )^{-1} ( \phi ) } W_k$.

A simple calculation entirely analogous to the standard one (see e.g. \cite{src:Gies:IntroductionToFRG}) now reveals:
\begin{theorem}[The Wetterich equation]
\label{thm:FRGE}
\begin{equation}
\partial_k \Gamma_k \left( \phi \right)
=
\frac{1}{2} \int_{X^k}
\overline{U^k_x} \left[ \left( D^2_\phi \Gamma_k + F_k \right)^{-1} \left( U^k_x \right) \right] \mathrm{d} m_k \left( x \right) + \partial_k \ln N_k
\end{equation}
for all $\phi \in H ( \nu )$.
Here, $\overline{U^k_x} \in ( \mathcal{S}'_\nu )_{\mathbb{C}}$ denotes the complex conjugate of $U^k_x$, i.e with $\overline{U^k_x} ( \phi ) = \overline{U^k_x ( \phi )}$ for all $\phi \in H ( \nu )$.
Note that $\overline{U^k_x}$ is still complex linear since $H ( \nu )$ is a real vector space.
Recall that $X^k$, $m_k$ and $U_x^k$ were defined in equation \ref{eq:FKDerivative}.
\end{theorem}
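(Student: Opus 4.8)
The plan is to differentiate the closed form
\[ \Gamma_k(\phi) = J_k(\phi)(\phi) - W_k\big(J_k(\phi)\big) - \tfrac12 F_k(\phi,\phi), \qquad J_k := (DW_k)^{-1} : H(\nu) \to \mathcal{S}'_\nu , \]
which is licit on all of $H(\nu)$ because $(k,\phi) \mapsto J_k(\phi)$ is continuously Fréchet differentiable by Corollary~\ref{cor:DWkInverseIsContinuouslyDifferentiable} while $W$ is continuously differentiable by Lemma~\ref{lem:ZDifferentiability}. Applying the chain rule, the two contributions carrying $\partial_k J_k(\phi)$ cancel: this is the Legendre/envelope phenomenon, made rigorous by the defining identity $D_{J_k(\phi)} W_k = \phi$, i.e.\ $\big(D_{J_k(\phi)} W_k\big)(T) = T(\phi)$ for every $T \in \mathcal{S}'_\nu$, so that $\frac{\mathrm{d}}{\mathrm{d}k}\big[J_k(\phi)(\phi)\big] = \big(\partial_k J_k(\phi)\big)(\phi)$ exactly matches the inner-derivative term produced by differentiating $W_k(J_k(\phi))$. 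What survives is
\[ \partial_k \Gamma_k(\phi) = -\,\partial_k W\big(k, J_k(\phi)\big) - \tfrac12 F'_k(\phi,\phi). \]

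Next I would substitute the first component of $D_{k,T}Z$ from Lemma~\ref{lem:ZDifferentiability}; since $W = \ln \circ Z$ this gives
\[ -\,\partial_k W(k,T) = \frac{1}{2 N_k Z_k(T)} \int_{\mathcal{S}} F'_k(\psi,\psi)\, e^{T(\psi)} f_k(\psi)\, \mathrm{d}\nu(\psi) + \partial_k \ln N_k . \]
Introducing the tilted probability measure $\rho = \rho_{k,\phi}$ on $\mathcal{S}$ with density $e^{J_k(\phi)(\cdot)} f_k / (N_k Z_k(J_k(\phi)))$ with respect to $\nu$, the right-hand side is $\tfrac12 \mathbb{E}_\rho[F'_k(\psi,\psi)] + \partial_k \ln N_k$, so it remains to show that $\tfrac12 \mathbb{E}_\rho[F'_k(\psi,\psi)] - \tfrac12 F'_k(\phi,\phi)$ equals $\tfrac12 \int_{X^k} \overline{U^k_x}\big[ (D^2_\phi \Gamma_k + F_k)^{-1}(U^k_x) \big]\, \mathrm{d}m_k(x)$.

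For that identity I would recall that the explicit formula for $D^2_\phi \Gamma_k$ derived above gives $(D^2_\phi\Gamma_k + F_k)^{-1} = D^2_{J_k(\phi)} W_k$, and that the computation in the proof of Theorem~\ref{thm:D2WkPositivity} — equation~\eqref{eq:D2WkEstimateFactorisation} before specialising — shows $K\big[(D^2_J W_k)(L)\big] = \mathbb{E}_\rho[KL] - \mathbb{E}_\rho[K]\,\mathbb{E}_\rho[L]$, the covariance of $K$ and $L$ under the tilted measure (symmetry being recorded in Corollary~\ref{cor:D2JWkIsContinuouslyInvertible}, so polarisation applies). Extending this bilinear form to $(\mathcal{S}'_\nu)_{\mathbb{C}}$ and evaluating at $K = \overline{U^k_x}$, $L = U^k_x$ gives $\overline{U^k_x}\big[(D^2_{J_k(\phi)} W_k)(U^k_x)\big] = \mathbb{E}_\rho[|U^k_x|^2] - \big| \mathbb{E}_\rho[U^k_x] \big|^2$; since the $\rho$-mean $\int_{\mathcal{S}} \psi\, \mathrm{d}\rho(\psi)$ equals $D_{J_k(\phi)} W_k = \phi \in H(\nu)$ by the second component of Lemma~\ref{lem:ZDifferentiability}, and $U^k_x$ is continuous and linear, it pulls through that $H(\nu)$-valued Bochner integral to give $\mathbb{E}_\rho[U^k_x] = U^k_x(\phi)$, whence $\overline{U^k_x}\big[(D^2_{J_k(\phi)} W_k)(U^k_x)\big] = \mathbb{E}_\rho[|U^k_x|^2] - |U^k_x(\phi)|^2$. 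Finally I would integrate over $X^k$: using the representation $F'_k(\psi,\psi) = \int_{X^k} |U^k_x(\psi)|^2\, \mathrm{d}m_k(x)$ of~\eqref{eq:FKDerivative}, the $\sigma$-finiteness of $m_k$, the pointwise measurability of $x \mapsto U^k_x(\psi)$, and Tonelli's theorem (applicable because the integrand is non-negative), the $X^k$-integral of $\mathbb{E}_\rho[|U^k_x|^2]$ is $\mathbb{E}_\rho[F'_k(\psi,\psi)]$ and that of $|U^k_x(\phi)|^2$ is $F'_k(\phi,\phi)$, which assembles into the claim.

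The differentiation in the first two paragraphs is unproblematic given the lemmata already in place; the main obstacle is the interchange of the $X^k$- and $\mathcal{S}$-integrations. Concretely one must establish joint measurability of $(x,\psi) \mapsto |U^k_x(\psi)|^2 e^{J_k(\phi)(\psi)} f_k(\psi)$ on $X^k \times \mathcal{S}$ — this is precisely where the $\sigma$-finite measure space $(X^k,\mathcal{A}^k,m_k)$ and the pointwise measurability assumption on $U^k$ earn their keep — and, just before it, justify that $U^k_x \in (\mathcal{S}'_\nu)_{\mathbb{C}}$ genuinely commutes with the generalised $H(\nu)$-valued Bochner integral defining the $\rho$-mean and may, after complexification, be legitimately paired with the $H(\nu)_{\mathbb{C}}$-valued quantity $D^2_{J_k(\phi)} W_k(U^k_x)$. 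Once these measurability and pairing points are settled, Tonelli's theorem closes the argument.
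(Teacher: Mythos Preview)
Your argument is correct and is exactly the ``simple calculation'' the paper alludes to but does not spell out: the envelope cancellation, the identification $(D^2_\phi\Gamma_k+F_k)^{-1}=D^2_{J_k(\phi)}W_k$, the covariance identity $\overline{U^k_x}\big[(D^2_J W_k)(U^k_x)\big]=\mathbb{E}_\rho[|U^k_x|^2]-|U^k_x(\phi)|^2$, and the Tonelli interchange are precisely what the paper's hypothesis list on $F_k$ (in particular the clause ``for positivity and interchange of integrations in theorem~\ref{thm:FRGE}'') is set up to enable. The measurability and pairing caveats you flag at the end are genuine but routine given separability of $\mathcal{S}'_\nu$ and the pointwise-measurability assumption on $x\mapsto U^k_x(\phi)$.
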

While this differential equation is in itself already remarkable, its real strength lies in its boundary conditions.
By lemma \ref{cor:DWkInverseIsContinuouslyDifferentiable}, we clearly have
\begin{equation}
\lim_{k \to 0} \Gamma_k \left( \phi \right)
=
\Gamma_0 \left( \phi \right)
=
W_0^c \left( \phi \right) \, ,
\end{equation}
corresponding to the essential ingredient in theorem \ref{thm:ConvergenceTheorem}.
Before we can derive the boundary condition for $k \to \infty$ we need the following theorem.
\begin{theorem}
\label{thm:DiracMeasureApproximation}
Let $g : \mathcal{S} \to \mathbb{R}$ be $\nu$-integrable, continuous at zero and $\vert g \vert \le C \exp [ p^2 ]$ for some $C > 0$ and some continuous seminorm $p$ on $\mathcal{S}$.
Then
\begin{equation}
\lim_{k \to \infty} \frac{\int_{\mathcal{S}} g \left( \psi \right) \exp \left[ - \frac{1}{2} F_k \left( \psi, \psi \right) \right] \mathrm{d} \nu \left( \psi \right)}{
\int_{\mathcal{S}} \exp \left[ - \frac{1}{2} F_k \left( \psi, \psi \right) \right] \mathrm{d} \nu \left( \psi \right)
}
=
g \left( 0 \right) \, .
\end{equation}
\end{theorem}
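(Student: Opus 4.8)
\medskip
\noindent The statement says that the probability measures $\lambda_k:=m_k^{-1}\exp\!\left[-\tfrac12 F_k(\cdot,\cdot)\right]\cdot\nu$, with $m_k:=\int_{\mathcal S}\exp\!\left[-\tfrac12 F_k(\psi,\psi)\right]\mathrm d\nu(\psi)\in(0,1]$, form an approximate Dirac mass at the origin, and the claim is exactly $\int_{\mathcal S}g\,\mathrm d\lambda_k\to g(0)$. My plan is: (i) identify each $\lambda_k$ as a centred Gaussian measure; (ii) prove $\lambda_k\to\delta_0$ weakly, with quantitative control of every continuous seminorm on $\mathcal S$; (iii) deduce convergence of the integrals using the continuity of $g$ near $0$ and its $\nu$-integrability.

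For (i): since $\nu=\mathcal R_\ast\mu$ is the image of a centred Gaussian measure under a continuous linear map it is itself a centred Gaussian measure on $\mathcal S$, and pushing it along the continuous inclusion $\mathcal S\hookrightarrow L^2(\mathbb R^d)$ yields a Radon Gaussian measure whose covariance $C_\nu$ on $L^2$ is positive, injective (by the full topological support of $\nu$ recorded in Section~\ref{sec:FunctionRegularisationScheme}) and trace class (because $\|\cdot\|_{L^2}$ is a continuous seminorm on $\mathcal S$, so Fernique's theorem applies). By \ref{itm:FkBound} the form $F_k$ is dominated by $k^2(\cdot,\cdot)$, hence extends to a bounded operator $0\le G_k\le k^2$ on $L^2$, and $A_k:=C_\nu^{1/2}G_kC_\nu^{1/2}$ is trace class. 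The change-of-variables formula for Gaussian integrals then gives $m_k=\det(\mathrm{id}+A_k)^{-1/2}$ and identifies $\lambda_k$ with the centred Gaussian measure on $\mathcal S$ whose $L^2$-covariance is $\tilde C_k:=C_\nu^{1/2}(\mathrm{id}+A_k)^{-1}C_\nu^{1/2}\le C_\nu$.

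For (ii): the key facts are that $A_k$ is non-decreasing in $k$ — by \ref{eq:FKDerivative}, $\partial_k(x,A_kx)=\partial_kF_k(C_\nu^{1/2}x,C_\nu^{1/2}x)=\int_{X^k}\lvert U^k_x(C_\nu^{1/2}x)\rvert^2\,\mathrm dm_k\ge 0$ — and that $(x,A_kx)=F_k(C_\nu^{1/2}x,C_\nu^{1/2}x)\to\infty$ for every $x\neq0$, since $C_\nu^{1/2}x\in H(\nu)\subseteq\mathcal S$ is non-zero and \ref{itm:FkDivergence} applies. A short Hilbert-space argument shows $(\mathrm{id}+A_k)^{-1}\to 0$ strongly: if $\lVert(\mathrm{id}+A_k)^{-1/2}x\rVert^2=(x,(\mathrm{id}+A_k)^{-1}x)$ did not tend to $0$ one could extract a weakly convergent subsequence $y_k:=(\mathrm{id}+A_k)^{-1/2}x\rightharpoonup y\neq0$; writing $x=(\mathrm{id}+A_k)^{1/2}y_k$ one gets $\langle x,y_k\rangle\ge\lVert y_k\rVert^2$ and, for $m\le k$, $\langle y_k,A_my_k\rangle\le\langle y_k,A_ky_k\rangle=\lVert x\rVert^2-\lVert y_k\rVert^2\le\lVert x\rVert^2$, whence $\langle y,A_my\rangle\le\lVert x\rVert^2$ for all $m$, contradicting $\langle y,A_my\rangle\to\infty$; and then $\langle x,y_k\rangle\to0$ contradicts $\langle x,y_k\rangle\ge\lVert y_k\rVert^2\not\to0$. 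Consequently $\operatorname{tr}\!\left((\mathrm{id}+A_k)^{-1}D\right)\to0$ for every positive trace-class $D$ on $L^2$ (dominated convergence in the spectral decomposition of $D$). Applying this to $D=C_\nu^{1/2}T^\ast TC_\nu^{1/2}$, where $T$ runs over the weighted Sobolev operators generating the topology of $\mathcal S$ (so $\operatorname{tr}D=\int_{\mathcal S}\lVert T\psi\rVert_{L^2}^2\,\mathrm d\nu<\infty$ by Fernique), gives $\int_{\mathcal S}p(\psi)^2\,\mathrm d\lambda_k(\psi)\le C^2\operatorname{tr}(T^\ast T\tilde C_k)=C^2\operatorname{tr}\!\left((\mathrm{id}+A_k)^{-1}D\right)\to0$ for every continuous seminorm $p\le C\lVert T\cdot\rVert_{L^2}$; hence $\lambda_k(V^c)\le r^{-2}\int p^2\,\mathrm d\lambda_k\to0$ for every neighbourhood $V\supseteq\{p<r\}$ of the origin, i.e. $\lambda_k\to\delta_0$ weakly.

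For (iii): given $\varepsilon>0$, fix a neighbourhood $V=\{p<r\}\subseteq U$ of the origin with $\lvert g-g(0)\rvert<\varepsilon$ on $V$. Then $\bigl\lvert\int g\,\mathrm d\lambda_k-g(0)\bigr\rvert\le\varepsilon+\lvert g(0)\rvert\,\lambda_k(V^c)+\int_{V^c}\lvert g\rvert\,\mathrm d\lambda_k$, and the middle term vanishes by (ii). For the last term I would truncate $\lvert g\rvert=(\lvert g\rvert\wedge M)+(\lvert g\rvert-M)_+$: the bounded part contributes at most $M\,\lambda_k(V^c)\to0$, while the remainder equals $m_k^{-1}\int(\lvert g\rvert-M)_+\exp[-\tfrac12F_k]\,\mathrm d\nu\le m_k^{-1}\lVert(\lvert g\rvert-M)_+\rVert_{L^1(\nu)}$, where the denominator is kept from collapsing too fast by the elementary lower bound $m_k\ge\exp[-\tfrac12k^2\delta^2]\,\nu(\{\lVert\cdot\rVert_{L^2}<\delta\})>0$ coming from \ref{itm:FkBound} and the full support of $\nu$; one then chooses $M=M(k)\to\infty$ fast enough that this remainder tends to $0$ and slowly enough that still $M(k)\,\lambda_k(V^c)\to0$. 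I expect this last step to be the main obstacle: converting the weak convergence $\lambda_k\to\delta_0$ into $\int g\,\mathrm d\lambda_k\to g(0)$ for a merely $\nu$-integrable $g$ is a uniform-integrability matter, and to carry it through one must play the subexponential lower bound on $m_k$ against the quantitative rate at which $\lambda_k$ de-concentrates — i.e. against the decay of $\tilde C_k$ obtained in (ii) — rather than use weak convergence alone.
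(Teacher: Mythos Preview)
Your route is quite different from the paper's, and the obstacle you flag in step~(iii) is genuine. The second-moment bound $\int p^2\,\mathrm d\lambda_k\to 0$ gives only $\lambda_k(V^c)\to 0$ with no useful rate, while your lower bound $m_k\ge e^{-k^2\delta^2/2}\nu(\{\|\cdot\|<\delta\})$ decays like $e^{-ck^2}$; since the tail $\|(|g|-M)_+\|_{L^1(\nu)}$ of a merely $\nu$-integrable $g$ can go to zero arbitrarily slowly in $M$, there is no way to choose $M(k)$ so that both $m_k^{-1}\|(|g|-M(k))_+\|_{L^1(\nu)}\to 0$ and $M(k)\lambda_k(V^c)\to 0$. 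The operator-theoretic machinery of step~(ii) is correct but does not produce an estimate strong enough to close this gap.

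The paper avoids all of this by never separating the numerator and the denominator. It estimates the \emph{density} $m_k^{-1}\exp[-\tfrac12 F_k(\phi,\phi)]$ directly at each fixed $\phi\neq 0$, using both hypotheses \ref{itm:FkBound} and \ref{itm:FkDivergence} together: with $C,K$ from \ref{itm:FkDivergence} one has, for $k\ge K$,
\[
\frac{m_k}{\exp[-\tfrac12 F_k(\phi,\phi)]}\ \ge\ \int_{\mathcal S}\exp\!\Bigl[-\tfrac12 k^2(\psi,\psi)+\tfrac12 Ck^2(\phi,\phi)\Bigr]\,\mathrm d\nu(\psi)\ \ge\ \exp\!\Bigl[\tfrac{C}{4}k^2(\phi,\phi)\Bigr]\,\nu\bigl(\{(\psi,\psi)<\tfrac{C}{2}(\phi,\phi)\}\bigr),
\]
so the density at $\phi$ tends to zero. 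The paper then invokes dominated convergence to kill the contribution from $\mathcal S\setminus V$ and finishes by continuity of $g$ on $V$. The moral is that your crude bound $m_k\ge e^{-k^2\delta^2/2}\nu(\cdots)$ discards precisely the competition between the two exponentials that makes the ratio small; one should bound $m_k$ from below \emph{relative to} $\exp[-\tfrac12 F_k(\phi,\phi)]$, not absolutely.

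One caveat: the paper's dominated-convergence step is itself terse, since the constants $C,K$ in the pointwise estimate depend on $\phi$ and no explicit $k$-uniform majorant on $\mathcal S\setminus V$ is exhibited. If you want to make the argument airtight you will still have to supply that majorant (or argue via a suitable uniformity in the regulator), but the paper's pointwise density estimate is the right starting point and is far more elementary than your detour through $(\mathrm{id}+A_k)^{-1}$.
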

\begin{proof}
Let us first prove that the measures
\begin{equation}
\theta_k = \frac{\exp \left[ - \frac{1}{2} F_k \left( \cdot, \cdot \right) \right]}{
\int_{\mathcal{S}(\mathbb{R}^d)} \exp \left[ - \frac{1}{2} F_k \left( \cdot, \cdot \right) \right] \mathrm{d} \nu \left( \psi \right)}
\cdot \nu
\end{equation}
converge weakly to the Dirac measure $\delta_0$ at the origin as $k$ goes to infinity.
To that end, let $A_k : \mathcal{S}'_\nu \to \mathbb{R}$ denote the moment-generating function of $\theta_k$.
By the Cameron-Martin theorem we have that
\begin{equation}
\begin{aligned}
A_k \left( T \right)
&=
\exp \left[
T \left( \omega \right)
- \frac{1}{2} \left \langle R_\nu^{-1} \omega, R_\nu^{-1} \omega \right \rangle
-\frac{1}{2} F_k \left( \omega, \omega \right)
\right]
A_k \left( T - \left[ R_\nu^{-1} + F_k \right] \omega \right)
\end{aligned}
\end{equation}
for any $T \in \mathcal{S}'_\nu$ and $\omega \in H(\nu)$.
Here, $F_k \, \omega$ denotes the tempered distribution given by $\phi \to F_k ( \omega, \phi )$.
Now, note that $R_\nu^{-1} + F_k : H(\nu) \to \mathcal{S}'_\nu$ is continuously invertible, by the positivity property of $F_k$ given in equation \ref{itm:FkBound}.
Hence, taking $\omega = ( R_\nu^{-1} + F_k )^{-1} T$, we arrive at
\begin{equation}
A_k \left( T \right)
=
\exp \left[
\frac{1}{2} T \left( \left( R_\nu^{-1} + F_k \right)^{-1} T \right)
\right]
A_k \left( 0 \right)
=
\exp \left[
\frac{1}{2} \left \langle T, \left( \mathrm{id} + F_k R_\nu \right)^{-1} T \right \rangle
\right] \, .
\end{equation}
By analytic continuation $T \mapsto i T$ , we obtain the characteristic functions
\begin{equation}
\hat{\theta}_k \left(T \right) = \exp \left[
- \frac{1}{2} \left \langle T, \left( \mathrm{id} + F_k R_\nu \right)^{-1} T \right \rangle
\right]
\end{equation}
for all $T \in \mathcal{S}'_\beta$.
A necessary condition for the sought convergence is that the functions $\hat{\theta}_k$ converge pointwise to $1$ as $k$ goes to infinity.
Since $F_k ( \phi, \phi )$ is increasing with $k$ for all $\phi \in \mathcal{S}$ by equation \ref{eq:FKDerivative}, $( \mathrm{id} + F_k R_\nu )^{-1}$ is a decreasing family of positive operators on $\mathcal{S}'_\nu$.
Fixing, any $T \in \mathcal{S}'_\nu$ we thus have that $( \mathrm{id} + F_k R_\nu )^{-1} T$ converges in norm to some $K \in \mathcal{S}'_\nu$ as $k$ tends to infinity.
If $K \neq 0$ we obtain
\begin{equation}
\liminf_{k \to \infty} \left \langle K, \frac{\mathrm{id} + F_k R_\nu}{k^2} K \right \rangle
=
\liminf_{k \to \infty} \frac{1}{k^2} F_k \left( R_\nu K, R_\nu K \right)
> 0
\end{equation}
by equation \ref{itm:FkDivergence}.
At the same time
\begin{equation}
\begin{aligned}
\liminf_{k \to \infty} \left \langle K, \frac{\mathrm{id} + F_k R_\nu}{k^2} K \right \rangle
&=
\liminf_{k \to \infty} \left \langle K, \frac{\mathrm{id} + F_k R_\nu}{k^2} \left[ K - \left( \mathrm{id} + F_k R_\nu \right)^{-1} T \right] + \frac{T}{k^2} \right \rangle \\
&=
\liminf_{k \to \infty} \left \langle \frac{\mathrm{id} + F_k R_\nu}{k^2} K, K - \left( \mathrm{id} + F_k R_\nu \right)^{-1} T \right \rangle
= 0 \, ,
\end{aligned}
\end{equation}
where the last equality follows, since $F_k R_\nu / k^2$ is bounded by equation \ref{itm:FkBound}.
Hence, we have a contradiction and may conclude that $\lim_{k \to \infty} ( \mathrm{id} + F_k R_\nu )^{-1} T = 0$ for all $T \in \mathcal{S}'_\nu$.
In particular, we then obtain $\lim_{k \to \infty} \hat{\theta}_k (T ) = 1$ for all $T \in \mathcal{S}'_\beta$.
Now, by \cite[Corollary 3.8.5]{src:Bogachev:GaußianMeasures}, a sufficient criterion for the weak convergence is the uniform tightness of $\{ \theta_k : k \in \mathbb{R} \}$.
To prove it, observe that
\begin{equation}
\int_{\mathcal{S}} T \left( \phi \right)^2 \, \mathrm{d} \nu \left( \phi \right)
=
\left \langle T, T \right \rangle
\ge
\left \langle T, \left( \mathrm{id} + F_k R_\nu \right)^{-1} T \right \rangle
=
\int_{\mathcal{S}} T \left( \phi \right)^2 \, \mathrm{d} \theta_k \left( \phi \right)
\end{equation}
for all $T \in \mathcal{S}'_\beta$ and hence, $\theta_k ( B ) \ge \nu (B)$ for every absolutely convex Borel set $B \subseteq \mathcal{S}$ by \cite[Theorem 3.3.6]{src:Bogachev:GaußianMeasures}.
Because $\nu$ is Radon, for any $\epsilon > 0$ there is a compact set $K \subset \mathcal{S}$ with $\nu (K) > 1 - \epsilon$ which, by the completeness of $\mathcal{S}$, may be taken to be absolutely convex.
Hence, $\theta_k ( K ) > 1 - \epsilon$ as well and we have proven the weak convergence of $\theta_k$ to $\delta_0$.

Let $C$ and $p$ be as stated in the theorem and set $c_k = \theta_k( p^{-1} ( [0, 1 ] ) )$.
Then
\begin{equation}
\liminf_{k \to \infty} c_k
\ge
\liminf_{k \to \infty} \theta_k \left( p^{-1} \left( \left[0, 1 \right) \right) \right)
\ge
1
\end{equation}
by the Portmanteau theorem.
Hence,
\begin{equation}
\liminf_{k \to \infty} \alpha_k
:=
\liminf_{k \to \infty}
\frac{1}{24} \ln \frac{c_k}{1-c_k}
=
\infty
\end{equation}
and there exists $l \in \mathbb{R}$ such that $c_l \ge 3/4$ and $\alpha_l \ge 2$.
Consequently, by Fernique's theorem \cite[Theorem 3.2.10 and Theorem 2.8.5]{src:Bogachev:GaußianMeasures}, $\int_{\mathcal{S}} \exp[2 p^2] \mathrm{d} \theta_l < \infty$ and using \cite[Corollary 3.3.7]{src:Bogachev:GaußianMeasures} as well as the monotonically increasing behaviour of $(F_k)_{k \in \mathbb{R}}$,
\begin{equation}
\sup_{k \ge l} \int_{\mathcal{S}} \exp[2 p^2] \mathrm{d} \theta_k < \infty \, .
\end{equation}
Applying the continuity of $g$ at zero, for every $\epsilon > 0$ there is an open neighbourhood $U \subseteq \mathcal{S}$ of the origin such that $\sup_{\phi \in U} \vert g(\phi) - g(0) \vert < \epsilon$.
Furthermore,
\begin{equation}
\lim_{k \to \infty} \int_{\mathcal{S} \setminus U} \exp[p^2] \mathrm{d} \theta_k
\le
\sqrt{\sup_{k \ge l}  \int_{\mathcal{S}} \exp[2 p^2] \mathrm{d} \theta_k}
\cdot
\limsup_{k \to \infty}
\sqrt{\theta_k \left( \mathcal{S} \setminus U \right)} \\
= 0
\end{equation}
by invoking Hölder's inequality and the Portmanteau theorem again.
Consequently,
\begin{equation}
\begin{aligned}
\limsup_{k \to \infty} &\int_{\mathcal{S}} \left \vert g \left( \phi \right) - g \left( 0 \right) \right \vert \, \mathrm{d} \theta_k \left( \phi \right)
\le
\limsup_{k \to \infty} \int_U \left \vert g \left( \phi \right) - g \left( 0 \right) \right \vert \, \mathrm{d} \theta_k \left( \phi \right) \\
&+ \left[ C + g \left( 0 \right) \right] \limsup_{k \to \infty} \int_{\mathcal{S} \setminus U} \exp \left[ p \left( \phi \right)^2 \right] \, \mathrm{d} \theta_k \left( \phi \right)
\le \epsilon
\end{aligned}
\end{equation}
and the claim follows.
\end{proof}
Furthermore, we have the following useful lemmas.
\begin{lemma}
Let $k \in \mathbb{R}$, $\phi \in H ( \nu )$ and define $X_{k, \phi} : \mathcal{S}'_\nu \to \mathbb{R}$,
\begin{equation}
T \mapsto \frac{1}{N_k} \int_S \exp \left[ T \left( \psi \right)
- S^{\mathrm{int}} \left( \psi + \phi \right)
- \frac{1}{2} F_k \left( \psi, \psi \right) \right]
\mathrm{d} \nu \left( \psi \right) \, .
\end{equation}
Then, $D_\phi \Gamma_k - R_\nu^{-1} ( \phi )$ minimises $X_{k, \phi}$.
\end{lemma}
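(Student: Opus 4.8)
The plan is to recognise $X_{k,\phi}$, up to an affine change of variable and a positive multiplicative constant, as the function $M_\phi$ already analysed in the proof that $D W_k$ is surjective. Recall that there $M_\phi : \mathcal{S}'_\nu \to \mathbb{R}$ was defined by $T \mapsto \int_{\mathcal{S}} \exp[T(\psi)] f_k(\psi + \phi) \, \mathrm{d}\nu(\psi)$, shown to be convex and continuous, and its minimiser was identified: with the ansatz $J = R_\nu^{-1}(\phi) + H$ the extremisation condition $D_J W_k = \phi$ is equivalent to $H$ minimising $M_\phi$, so the minimiser of $M_\phi$ is $(D W_k)^{-1}(\phi) - R_\nu^{-1}(\phi)$. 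By equation \ref{eq:DGammakDef} we have $D_\phi \Gamma_k = (D W_k)^{-1}(\phi) - F_k(\phi, \cdot)$, and $F_k(\phi, \cdot)$ is a continuous linear functional on $\mathcal{S}$, hence lies in $\mathcal{S}'_\beta \subseteq \mathcal{S}'_\nu$; therefore $T \mapsto T + F_k(\phi, \cdot)$ is a bijective affine translation of $\mathcal{S}'_\nu$, and $T^\ast := D_\phi \Gamma_k - R_\nu^{-1}(\phi)$ satisfies $T^\ast + F_k(\phi, \cdot) = (D W_k)^{-1}(\phi) - R_\nu^{-1}(\phi)$, i.e. $T^\ast + F_k(\phi, \cdot)$ is precisely the minimiser of $M_\phi$.

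It then remains to verify the promised relation between $X_{k,\phi}$ and $M_\phi$. Writing $f_k(\psi + \phi) = \exp[-S^{\mathrm{int}}(\psi + \phi) - \tfrac12 F_k(\psi + \phi, \psi + \phi)]$ and expanding the symmetric bilinear form, $F_k(\psi + \phi, \psi + \phi) = F_k(\psi, \psi) + 2 F_k(\phi, \psi) + F_k(\phi, \phi)$, the cross term $-F_k(\phi, \psi) = -F_k(\phi, \cdot)(\psi)$ cancels the shift when $M_\phi$ is evaluated at $T + F_k(\phi, \cdot)$, leaving only the constant $\exp[-\tfrac12 F_k(\phi,\phi)]$ and a factor $N_k$, so that
\[
M_\phi\bigl(T + F_k(\phi, \cdot)\bigr) = e^{-\frac12 F_k(\phi, \phi)} \, N_k \, X_{k, \phi}(T) \, .
\]
Since $e^{-\frac12 F_k(\phi,\phi)} N_k > 0$, the functions $X_{k,\phi}$ and $M_\phi \circ (\,\cdot + F_k(\phi,\cdot)\,)$ have the same minimisers, and the claim follows from the previous paragraph. (Equivalently, one may argue directly: $X_{k,\phi}$ is logarithmically convex, hence convex, and---by the argument of Lemma \ref{lem:ZDifferentiability}---Fréchet differentiable with $(D_T X_{k,\phi})(K) = N_k^{-1}\int_{\mathcal{S}} K(\psi)\exp[T(\psi) - S^{\mathrm{int}}(\psi+\phi) - \tfrac12 F_k(\psi,\psi)]\,\mathrm{d}\nu(\psi)$, so $T$ minimises $X_{k,\phi}$ iff this vanishes for all $K$; starting from the extremisation condition $\int_{\mathcal{S}}(\psi-\phi)\exp[J(\psi)]f_k(\psi)\,\mathrm{d}\nu(\psi) = 0$ characterising $J = (DW_k)^{-1}(\phi)$ and shifting $\psi \mapsto \psi + \phi$ via the Cameron-Martin theorem---the same move used in the surjectivity proof---one recovers exactly the vanishing condition for $T = T^\ast$.)

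The computation is routine, and no new existence argument is needed, since the existence and identification of the minimiser were already carried out for $M_\phi$. The only points requiring a little care are confirming that $F_k(\phi, \cdot)$ genuinely lies in $\mathcal{S}'_\nu$, so that the translation makes sense as a map of the Hilbert space, and keeping track of the two ways the regulator enters---the $F_k(\psi, \psi)$ appearing explicitly in $X_{k,\phi}$ versus the $F_k(\psi + \phi, \psi + \phi)$ concealed inside $f_k(\psi + \phi)$ in $M_\phi$---which is precisely where the shift by $F_k(\phi, \cdot)$ comes from. Reconciling these two presentations is the main, and rather mild, obstacle.
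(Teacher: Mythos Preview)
Your proof is correct and, at its core, coincides with the paper's argument: the paper simply notes that $X_{k,\phi}$ is a moment-generating function (hence convex) and that, by the extremisation condition \eqref{eq:ExtremisationCondition} together with \eqref{eq:DGammakDef}, the point $D_\phi\Gamma_k - R_\nu^{-1}(\phi)$ is a critical point and therefore a minimiser. Your parenthetical ``direct'' argument is exactly this; your main argument via the identity $M_\phi(T + F_k(\phi,\cdot)) = e^{-\frac12 F_k(\phi,\phi)} N_k\, X_{k,\phi}(T)$ is a more explicit unpacking of the same computation, making transparent where the Cameron--Martin shift and the expansion of $F_k(\psi+\phi,\psi+\phi)$ enter, but is not a genuinely different route.
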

\begin{proof}
$X_{k, \phi}$ is a moment-generating function and hence convex.
Furthermore, by equations \ref{eq:ExtremisationCondition} and \ref{eq:DGammakDef},
$D_\phi \Gamma_k - R_\nu^{-1} ( \phi )$ extremises $X_{k, \phi}$.
\end{proof}
\begin{lemma}
For all $k \in \mathbb{R}$ and $\phi \in H (\nu )$
\begin{equation}
\exp \left[ - \Gamma_k \left( \phi \right) \right]
=
\exp \left[ -\frac{1}{2} R_\nu^{-1} \left( \phi \right) \left( \phi \right) \right]
X_{k, \phi} \left( D_\phi \Gamma_k - R_\nu^{-1} \left( \phi \right) \right) \, .
\end{equation}
\end{lemma}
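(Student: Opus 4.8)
We unfold both sides and reconcile them. The plan is: write $J := \left( D W_k \right)^{-1} \left( \phi \right) \in \mathcal{S}'_\nu$ for the point at which the supremum defining $\Gamma_k$ is attained, so that by definition
\[ \Gamma_k \left( \phi \right) = J \left( \phi \right) - W_k \left( J \right) - \tfrac{1}{2} F_k \left( \phi, \phi \right) \]
and hence, since $W_k = \ln Z_k$,
\[ \exp \left[ - \Gamma_k \left( \phi \right) \right] = \exp \left[ - J \left( \phi \right) + \tfrac{1}{2} F_k \left( \phi, \phi \right) \right] Z_k \left( J \right) . \]
It therefore suffices to identify the right-hand side of the claim with this last expression. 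Throughout, $J \left( \phi \right)$ is to be read as the pairing $\left \langle J, R_\nu^{-1} \phi \right \rangle$, which is the continuous extension to $\mathcal{S}'_\nu$ of the identity $K \left( R_\nu T \right) = \left \langle K, T \right \rangle$; in particular $R_\nu^{-1} \left( \phi \right) \left( \phi \right) = \left \langle R_\nu^{-1} \phi, R_\nu^{-1} \phi \right \rangle$, using the isometry of the lemma identifying $R_\nu$ with an isomorphism $\mathcal{S}'_\nu \to H \left( \nu \right)$.

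By equation \ref{eq:DGammakDef}, the argument of $X_{k, \phi}$ is $D_\phi \Gamma_k - R_\nu^{-1} \left( \phi \right) = J - F_k \left( \phi, \cdot \right) - R_\nu^{-1} \left( \phi \right)$. First I would substitute this into the definition of $X_{k, \phi}$ and multiply by $\exp \left[ - \tfrac{1}{2} R_\nu^{-1} \left( \phi \right) \left( \phi \right) \right]$. The exponent of the resulting integrand then contains the term $- R_\nu^{-1} \left( \phi \right) \left( \psi \right) - \tfrac{1}{2} R_\nu^{-1} \left( \phi \right) \left( \phi \right)$, which is exactly the logarithm of the Cameron-Martin density of $\nu$ relative to its translate, appearing precisely when one substitutes $\psi \mapsto \psi - \phi$. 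Applying the Cameron-Martin theorem to carry out that shift of the integration variable turns $S^{\mathrm{int}} \left( \psi + \phi \right)$ into $S^{\mathrm{int}} \left( \psi \right)$ and replaces every other occurrence of $\psi$ by $\psi - \phi$.

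What remains is then purely algebraic: expanding $F_k \left( \psi - \phi, \psi - \phi \right) = F_k \left( \psi, \psi \right) - 2 F_k \left( \phi, \psi \right) + F_k \left( \phi, \phi \right)$ and $F_k \left( \phi, \psi - \phi \right) = F_k \left( \phi, \psi \right) - F_k \left( \phi, \phi \right)$, the $F_k \left( \phi, \psi \right)$ cross terms cancel and the regulator part collapses to $\tfrac{1}{2} F_k \left( \phi, \phi \right) - \tfrac{1}{2} F_k \left( \psi, \psi \right)$, while the linear functional contributes $J \left( \psi \right) - J \left( \phi \right)$. Pulling the $\psi$-independent factor $\exp \left[ - J \left( \phi \right) + \tfrac{1}{2} F_k \left( \phi, \phi \right) \right]$ out of the integral leaves $\frac{1}{N_k} \int_{\mathcal{S}} \exp \left[ J \left( \psi \right) \right] f_k \left( \psi \right) \mathrm{d} \nu \left( \psi \right) = Z_k \left( J \right)$, which reproduces exactly the expression for $\exp \left[ - \Gamma_k \left( \phi \right) \right]$ obtained above.

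The main obstacle I anticipate is bookkeeping with the fact that $J$ and $R_\nu^{-1} \left( \phi \right)$ are elements of $\mathcal{S}'_\nu$, i.e. $\nu$-almost-everywhere-defined measurable linear functionals rather than genuine tempered distributions, so that the change of variables and the additivity $J \left( \psi - \phi \right) = J \left( \psi \right) - J \left( \phi \right)$ must be justified within the theory of Cameron-Martin shifts — one uses that translation by $\phi \in H \left( \nu \right)$ preserves the equivalence class of $\nu$ and acts on measurable linear functionals by subtracting the finite constant $\left \langle \cdot, R_\nu^{-1} \phi \right \rangle$. Integrability is unproblematic: $f_k \in L^q \left( \nu \right)$ for some $q > 1$, this persists under the shift by $\phi \in H \left( \nu \right)$, and finiteness of $X_{k, \phi}$ at the point in question is already guaranteed by the preceding lemma.
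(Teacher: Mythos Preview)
Your proposal is correct and is essentially the paper's own computation run in reverse: the paper starts from $\exp[-\Gamma_k(\phi)] = \exp[-J(\phi)+\tfrac{1}{2}F_k(\phi,\phi)]\,Z_k(J)$, applies the Cameron--Martin shift $\psi\mapsto\psi+\phi$ to the integral defining $Z_k(J)$, and then identifies the result as $\exp[-\tfrac{1}{2}R_\nu^{-1}(\phi)(\phi)]\,X_{k,\phi}(D_\phi\Gamma_k-R_\nu^{-1}(\phi))$, whereas you begin from the $X_{k,\phi}$ side and undo the shift. The algebraic cancellations you describe are exactly those appearing in the paper, and your remarks on the measurable-linear-functional subtleties are a welcome addition that the paper leaves implicit.
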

\begin{proof}
Let $k \in \mathbb{R}$ and $\phi \in H ( \nu )$.
Then,
\begin{equation}
\begin{aligned}
\exp \left[ - \Gamma_k \left( \phi \right) \right]
&=
\exp \left[ - \left( D W_k \right)^{-1} \left( \phi \right) \left( \phi \right) + W_k \left( \left( D W_k \right)^{-1} \left( \phi \right) \right) + \frac{1}{2} F_k \left( \phi, \phi \right) \right] \\
&=
\frac{1}{N_k}
\int_{\mathcal{S}}
\exp \bigg[
- S^{\mathrm{int}} \left( \psi \right)
+ \left( D W_k \right)^{-1} \left( \phi \right) \left( \psi - \phi \right) \\
&\phantom{= \frac{1}{N_k}
	\int_{\mathcal{S}}
	\exp \bigg[}
+ \frac{1}{2} F_k \left( \phi, \phi \right)
- \frac{1}{2} F_k \left( \psi, \psi \right) 
\bigg]
\mathrm{d} \nu \left( \psi \right) \\
&=
\frac{1}{N_k}
\int_{\mathcal{S}}
\exp \bigg[
- \frac{1}{2} R_\nu^{-1} \left( \phi \right) \left( \phi \right)
- S^{\mathrm{int}} \left( \psi + \phi \right)
+ \left( D W_k \right)^{-1} \left( \phi \right) \left( \psi \right) \\
&\phantom{= \frac{1}{N_k}
	\int_{\mathcal{S}}
	\exp \bigg[}
- F_k \left( \psi, \phi \right) 
- \frac{1}{2} F_k \left( \psi, \psi \right) 
- R_\nu^{-1} \left( \phi \right) \left( \psi \right)
\bigg]
\mathrm{d} \nu \left( \psi \right) \\
&=
\exp \left[ -\frac{1}{2} R_\nu^{-1} \left( \phi \right) \left( \phi \right) \right]
X_{k, \phi} \left( D_\phi \Gamma_k - R_\nu^{-1} \left( \phi \right) \right) \, .
\end{aligned}
\end{equation}
\end{proof}
Collecting all of the above, we can finally calculate the limit for $k \to \infty$.
\begin{theorem}
For all $\phi \in H ( \nu )$,
\begin{equation}
\lim_{k \to \infty} \Gamma_k \left( \phi \right)
=
\frac{1}{2} R_\nu^{-1} \left( \phi \right) \left( \phi \right)
+
S^{\mathrm{int}} \left( \phi \right)
-
S^{\mathrm{int}} \left( 0 \right) \, .
\end{equation}
\end{theorem}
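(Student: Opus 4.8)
The plan is to combine the two lemmata just established with Theorem~\ref{thm:DiracMeasureApproximation}. Put $T_k = D_\phi \Gamma_k - R_\nu^{-1}(\phi) \in \mathcal{S}'_\nu$; the second of those lemmata then reads $\exp[-\Gamma_k(\phi)] = \exp[-\tfrac12 R_\nu^{-1}(\phi)(\phi)]\,X_{k,\phi}(T_k)$, and since $-\ln$ is continuous on $(0,\infty)$ it suffices to show
\[
\lim_{k\to\infty} X_{k,\phi}(T_k) = \exp\!\left[-S^{\mathrm{int}}(\phi)+S^{\mathrm{int}}(0)\right].
\]
Writing $g_\phi(\psi) = \exp[-S^{\mathrm{int}}(\psi+\phi)]$, this function is continuous since $S^{\mathrm{int}}$ is, and because $\phi\in H(\nu)$ the Cameron--Martin theorem exhibits $g_\phi\cdot\nu$ as $\exp[-S^{\mathrm{int}}]$ times a Cameron--Martin density lying in every $L^p(\nu)$, so $g_\phi\in L^1(\nu)$ by Hölder and the standing assumption $\exp[-S^{\mathrm{int}}]\in L^q(\nu)$ with $q>1$.

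For the upper bound I would use that $T_k$ minimises $X_{k,\phi}$ (the first lemma above), so $X_{k,\phi}(T_k)\le X_{k,\phi}(0)$, and factor
\[
X_{k,\phi}(0)=\frac{\int_{\mathcal{S}} g_\phi(\psi)\exp[-\tfrac12 F_k(\psi,\psi)]\,\mathrm{d}\nu(\psi)}{\int_{\mathcal{S}}\exp[-\tfrac12 F_k(\psi,\psi)]\,\mathrm{d}\nu(\psi)}\cdot\frac{\int_{\mathcal{S}}\exp[-\tfrac12 F_k(\psi,\psi)]\,\mathrm{d}\nu(\psi)}{N_k},
\]
applying Theorem~\ref{thm:DiracMeasureApproximation} to $g_\phi$ in the first factor and to $\exp[-S^{\mathrm{int}}]$ in the reciprocal of the second; this gives $\lim_k X_{k,\phi}(0)=g_\phi(0)\exp[S^{\mathrm{int}}(0)]$, the required value. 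The lower bound is the delicate part. Given $\epsilon>0$, continuity of $S^{\mathrm{int}}$ yields a continuous norm $p$ on $\mathcal{S}$ and $r>0$ with $g_\phi(\psi)\ge(1-\epsilon)g_\phi(0)$ on the balanced set $B_r=\{\psi\in\mathcal{S}:p(\psi)\le r\}$; restricting the nonnegative integrand of $X_{k,\phi}(T_k)$ to $B_r$ and applying Jensen's inequality to the probability measure $\tau_k\propto\mathbf{1}_{B_r}\exp[-\tfrac12 F_k(\psi,\psi)]\,\mathrm{d}\nu$ gives
\[
X_{k,\phi}(T_k)\ \ge\ \frac{(1-\epsilon)g_\phi(0)}{N_k}\int_{B_r}\exp[-\tfrac12 F_k(\psi,\psi)]\,\mathrm{d}\nu(\psi),
\]
where crucially $\int T_k\,\mathrm{d}\tau_k=0$ because $\tau_k$ is invariant under $\psi\mapsto-\psi$ ($\nu$ centred Gaussian, $B_r$ balanced, $F_k$ quadratic) while $T_k$ is odd $\nu$-a.e., being an $L^2(\nu)$-limit of linear functionals.

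Finally, inserting the factor $\int_{\mathcal{S}}\exp[-\tfrac12 F_k]\,\mathrm{d}\nu$ over itself, the last display becomes $(1-\epsilon)g_\phi(0)$ times the ratio $\int_{B_r}\exp[-\tfrac12 F_k]\,\mathrm{d}\nu\,/\int_{\mathcal{S}}\exp[-\tfrac12 F_k]\,\mathrm{d}\nu$ times $\int_{\mathcal{S}}\exp[-\tfrac12 F_k]\,\mathrm{d}\nu\,/N_k$; the last factor tends to $\exp[S^{\mathrm{int}}(0)]$ as before, and the middle one has $\liminf\ge1$ on squeezing $\mathbf{1}_{B_r}$ between $0$ and a continuous function equal to $1$ at the origin and invoking Theorem~\ref{thm:DiracMeasureApproximation} (here one uses that $\nu$ has full support, so $\int_{B_r}\exp[-\tfrac12 F_k]\,\mathrm{d}\nu>0$). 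Letting $\epsilon\downarrow0$ then gives $\liminf_k X_{k,\phi}(T_k)\ge\exp[-S^{\mathrm{int}}(\phi)+S^{\mathrm{int}}(0)]$, which together with the upper bound closes the argument. I expect the lower bound to be the main obstacle: the minimiser $T_k$ may a priori grow with $k$, so $\exp[T_k(\psi)]$ cannot be controlled pointwise, and it is precisely the oddness of $T_k$ paired with the balanced measure $\tau_k$ that rescues the estimate.
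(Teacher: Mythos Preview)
Your proof is correct and follows essentially the same route as the paper's: both bound $X_{k,\phi}(T_k)$ above by $X_{k,\phi}(0)$ and below by restricting to a balanced neighbourhood of the origin and exploiting the symmetry $\psi\mapsto-\psi$ to eliminate the unknown minimiser $T_k$, then invoke Theorem~\ref{thm:DiracMeasureApproximation}. The only difference is cosmetic---the paper phrases the key lower-bound step as ``the map $T\mapsto\int_{U}\exp[T(\psi)-\tfrac12 F_k(\psi,\psi)]\,\mathrm{d}\nu(\psi)$ is convex and invariant under $T\mapsto-T$, hence minimised at $T=0$,'' whereas you obtain the same inequality via Jensen and the vanishing of $\int T_k\,\mathrm{d}\tau_k$.
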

\begin{proof}
Let $k \in \mathbb{R}$ and $\phi \in H ( \nu )$.
Since $D_\phi \Gamma_k- R_\nu^{-1} ( \phi )$ minimises $X_{k, \phi}$, we have
\begin{equation}
\begin{aligned}
\exp \left[ - \Gamma_k \left( \phi \right) \right]
&\le
\exp \left[ -\frac{1}{2} R_\nu^{-1} \left( \phi \right) \left( \phi \right) \right]
X_{k, \phi} \left( 0 \right) \\
&=
\frac{1}{N_k} \int_S \exp \left[
- \frac{1}{2} R_\nu^{-1} \left( \phi \right) \left( \phi \right)
- S^{\mathrm{int}} \left( \psi + \phi \right)
- \frac{1}{2} F_k \left( \psi, \psi \right)
\right]
\mathrm{d} \nu \left( \psi \right) \, .
\end{aligned}
\end{equation}
By assumption, $\exp[-S^{\mathrm{int}}(\phi + \psi)] \le C_\phi \exp [p(\psi)^2]$ for some $C_\phi > 0$ and some continuous seminorm $p$ on $\mathcal{S}$.
Hence, we may now apply theorem \ref{thm:DiracMeasureApproximation} leading to
\begin{equation}
\limsup_{k \to \infty} \, 
\exp \left[ - \Gamma_k \left( \phi \right) \right]
\le
\exp \left[ - \frac{1}{2} R_\nu^{-1} \left( \phi \right) \left( \phi \right)
- S^{\mathrm{int}} \left( \phi \right) + S^{\mathrm{int}} \left( 0 \right) \right] \, .
\end{equation}
For the converse inequality let $n \in \mathbb{N}$ and pick a balanced neighbourhood $U_n$ of zero in $\mathcal{S}$ such that
\begin{equation}
\forall \psi \in U_n : \quad \exp \left[ - S^{\mathrm{int}} \left( \phi + \psi \right) \right] \ge \frac{n}{n + 1} \exp \left[ - S^{\mathrm{int}} \left( \phi \right) \right] \, .
\end{equation}
Then,
\begin{equation}
\begin{aligned}
\inf_{T \in \mathcal{S}'_\nu} X_{k, \phi} \left( T \right)
&\ge
\frac{n}{n + 1} \exp \left[ - S^{\mathrm{int}} \left( \phi \right) \right] \\
&\phantom{\ge} \times
\inf_{T \in \mathcal{S}'_\nu}
\frac{1}{N_k} \int_{U_n} \exp \left[ T \left( \psi \right)
- \frac{1}{2} F_k \left( \psi, \psi \right) \right]
\mathrm{d} \nu \left( \psi \right) \, .
\end{aligned}
\end{equation}
Now, note that the above integral is invariant under the change of $T \mapsto -T$ since $\nu$ is centred.
Furthermore, it is clearly a convex function of $T$.
Hence, the infimum is attained at $T = 0$ and
\begin{equation}
\inf_{T \in \mathcal{S}'_\nu} X_{k, \phi} \left( T \right)
\ge
\frac{n}{n + 1} \exp \left[ - S^{\mathrm{int}} \left( \phi \right) \right]
\frac{1}{N_k} \int_{U_n} \exp \left[ - \frac{1}{2} F_k \left( \psi, \psi \right) \right]
\mathrm{d} \nu \left( \psi \right) \, .
\end{equation}
But then
\begin{equation}
\liminf_{k \to \infty} 
\exp \left[ - \Gamma_k \left( \phi \right) \right]
\ge
\frac{n}{n + 1} \exp \left[ - \frac{1}{2} R_\nu^{-1} \left( \phi \right) \left( \phi \right) - S^{\mathrm{int}} \left( \phi \right)+ S^{\mathrm{int}} \left( 0 \right) \right]
\end{equation}
by theorem \ref{thm:DiracMeasureApproximation} and since $n \in \mathbb{N}$ was arbitrary, the result follows.
\end{proof}
Most commonly, Wetterich's equation is used without regards to domains and often without taking the effect of regularising operators such as $\mathcal{R}$ into account.
Recall that $\mathcal{R}$ (with an extra regularisation index $n$ that we do not write out explicitly in this section) is given by the regularisation scheme introduced in section \ref{sec:FunctionalRegularisationScheme}.
However, we now have the tools to formulate the standard procedure in a rigorous fashion.
The standard choice for $F_k$ is a multiplication operator in Fourier space.
Hence, let us choose $F_k$ such that $X^k = \mathbb{R}^d$ for all $k \in \mathbb{R}$, $\mathcal{A}^k$ is the corresponding Borel sigma algebra and
\begin{equation}
\begin{aligned}
U^k_p &= \mathcal{F}_p := \left[ \phi \mapsto \hat{\phi} \left( p \right) \right] \, , &
R &: \mathbb{R} \times \mathbb{R}^d \to \mathbb{R}, \left( k, p \right) \mapsto R_k \left( p \right) \, , \\
F_k \left( \phi, \phi \right) &= \left \langle \hat{\phi}, R_k \cdot \hat{\phi} \right \rangle_{L^2(\mathbb{R}^d)_{\mathbb{C}}} \, , &
m_k &=  \partial_k R_k \cdot L_d \, ,
\end{aligned}
\end{equation}
where $L_d$ is the Lebesgue measure on $\mathbb{R}^d$, $\partial_k R_k$ is taken to exist $L_d$-almost everywhere and $R$ is regular enough for equation \ref{eq:FKDerivative} to hold.
Let us fix some orthonormal basis $(e_n)_{n \in \mathbb{N}}$ of $L^2(\mathbb{R}^d)$ in $\mathcal{S}(\mathbb{R}^d)$.
With the continuous injection of $\mathcal{S}(\mathbb{R}^d)$ into $\mathcal{S}(\mathbb{R}^d)^*_\beta$ via the $L^2(\mathbb{R}^d)$ inner product, we then have
\begin{equation}
\begin{aligned}
\mathrm{Tr} &\left( D^2_\phi \Gamma_k + F_k \right)^{-1}
=
\sum_{n = 1}^{\infty} \left( D^2_{\left(D W_k \right)^{-1} \left( \phi \right)} W_k \right) \left( e_n, e_n \right) \\
&\le
\frac{2 / N_k}{Z_k \left( \left(D W_k \right)^{-1} \left( \phi \right) \right)}
\int_{\mathcal{S}(\mathbb{R^d})}
\left \Vert \psi \right \Vert_{L^2(\mathbb{R}^d)}^2
\exp \left[ \left(D W_k \right)^{-1} \left( \phi \right) \left( \psi \right) \right]
f_k \left( \psi \right) \mathrm{d} \nu \left( \phi \right)
< \infty
\end{aligned}
\end{equation}
by Hölder's inequality and Fernique's theorem \cite[Theorem 3.2.10]{src:Bogachev:GaußianMeasures}.
In particular, $(D^2_\phi \Gamma_k + F_k)^{-1}$ induces a trace-class operator on $L^2(\mathbb{R}^d)$.
Moreover, by passing to $L^2(\mathbb{R}^d)_{\mathbb{C}}$,
\begin{equation}
\mathrm{Tr} \left( D^2_\phi \Gamma_k + F_k \right)^{-1}
=
\int_{\mathbb{R}^d} \mathcal{F}_{-p} \left[ \left( D^2_\phi \Gamma_k + F_k \right)^{-1} \mathcal{F}_p \right] \mathrm{d} p < \infty \, .
\end{equation}
\begin{remark}
\label{rem:NoTranslationEquivariantRNu}
The same reasoning also applies to $R_\nu$ with
\begin{equation}
\mathrm{Tr} R_\nu
=
\int_{\mathbb{R}^d} \delta_x \left[ R_\nu \delta_x \right] \mathrm{d} x < \infty \, ,
\end{equation}
where $\delta_x$ denotes the Dirac distribution at the point $x \in \mathbb{R}^d$.
Consequently, $R_\nu \neq 0$ cannot be translation equivariant because in that case the above integrand would be constant and the integral infinite.
\end{remark}
Letting $M ( f )$ denote the multiplication operator on $L^2_{\mathbb{C}} ( \mathbb{R}^d )$ by a function $f$, it is now clear that
\begin{equation}
\partial_k \Gamma_k \left( \phi \right)
=
\frac{1}{2} \mathrm{Tr} \left \{
M \left( \partial_k R_k \right) \left[ \mathcal{F} \circ D^2_\phi \Gamma_k \circ \mathcal{F}^{-1} + M \left( R_k \right) \right]^{-1}
\right \}
+
\partial_k \ln N_k \, .
\end{equation}
Apart from the $\partial_k \ln N_k$ term this is exactly the FRGE in physicists' notation.
Making a simple subtraction even removes $\partial_k \ln N_k$ completely.
\begin{theorem}[Wetterich's equation (second formulation)]
\label{thm:FRGE2}
\begin{equation}
\begin{aligned}
\partial_k \bar{\Gamma}_k \left( \phi \right)
=
\frac{1}{2} \mathrm{Tr} \bigg \{
M \left( \partial_k R_k \right) \bigg( &\left[ \mathcal{F} \circ D^2_\phi \bar{\Gamma}_k \circ \mathcal{F}^{-1} + M \left( R_k \right) \right]^{-1} \\
&-
\left[ \mathcal{F} \circ D^2_0 \bar{\Gamma}_k \circ \mathcal{F}^{-1} + M \left( R_k \right) \right]^{-1} \bigg)
\bigg \}
\end{aligned}
\end{equation}
for all $\phi \in H ( \nu )$ where $\bar{\Gamma}_k ( \phi ) = \Gamma_k ( \phi ) - \Gamma_k ( 0 )$.
Furthermore,
\begin{equation}
\Gamma_k \left( 0 \right) = - \inf_{\phi \in H ( \nu )} \bar{\Gamma}_k \left( \phi \right)
\end{equation}
and
\begin{equation}
\lim_{k \to \infty} \bar{\Gamma}_k \left( \phi \right)
=
\frac{1}{2} R_\nu^{-1} \left( \phi \right) \left( \phi \right)
+
S^{\mathrm{int}} \left( \phi \right)
-
S^{\mathrm{int}} \left( 0 \right) \, .
\end{equation}
\end{theorem}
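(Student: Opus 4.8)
The plan is to read the flow equation and the $k\to\infty$ boundary condition directly off results already established, and to concentrate the real work on the reconstruction identity $\Gamma_k(0) = -\inf_\phi\bar{\Gamma}_k(\phi)$. For the flow equation I would start from the trace form of the FRGE displayed immediately before the theorem, valid under the trace-class hypothesis on $\mathcal{R}$. Since $\bar{\Gamma}_k = \Gamma_k - \Gamma_k(0)$ differs from $\Gamma_k$ only by the $\phi$-independent constant $\Gamma_k(0)$, one has $D^2_\phi\bar{\Gamma}_k = D^2_\phi\Gamma_k$ for every $\phi\in H(\nu)$, and in particular $D^2_0\bar{\Gamma}_k = D^2_0\Gamma_k$. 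Evaluating that trace formula once at $\phi$ and once at $0$ --- both legitimate since the FRGE holds on all of $H(\nu)$ --- and subtracting, the $\phi$-independent term $\partial_k\ln N_k$ cancels and what is left is precisely the stated difference of two traces; existence of each trace and interchange of the trace with $\partial_k$ are inherited verbatim from the single-trace derivation, so this step is routine.

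For the boundary condition at $k\to\infty$ I would apply the preceding theorem, $\lim_{k\to\infty}\Gamma_k(\psi) = \tfrac12 R_\nu^{-1}(\psi)(\psi) + S^{\mathrm{int}}(\psi) - S^{\mathrm{int}}(0)$, at $\psi = \phi$ and at $\psi = 0$; the second instance gives $\lim_{k\to\infty}\Gamma_k(0) = 0$ because $R_\nu^{-1}(0)(0) = 0$ and the two $S^{\mathrm{int}}$ terms cancel, and subtracting yields the asserted limit of $\bar{\Gamma}_k$.

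For the reconstruction identity, note first that since $\bar{\Gamma}_k = \Gamma_k - \Gamma_k(0)$ it is equivalent to $\inf_{\phi\in H(\nu)}\Gamma_k(\phi) = 0$. One inequality is cheap: evaluating the Legendre transform in the definition of $\Gamma_k$ at $D_0 W_k$ --- the mean of the regularised measure, which lies in $H(\nu)$ because $D_0 W_k = D_0 Z_k$ and $D_T Z_k\in H(\nu)$ --- and using that $DW_k$ is injective with $DW_k(0) = D_0 W_k$, the supremum equals $-W_k(0) = 0$, whence $\Gamma_k(D_0 W_k) = -\tfrac12 F_k(D_0 W_k, D_0 W_k)\le 0$ by $F_k\ge 0$; so $\inf_\phi\Gamma_k\le 0$. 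The substantive half is $\Gamma_k\ge 0$ on all of $H(\nu)$, and my plan is to get it from the two lemmata introducing $X_{k,\phi}$, which together give $\exp[-\Gamma_k(\phi)] = \exp[-\tfrac12 R_\nu^{-1}(\phi)(\phi)]\,\inf_{T\in\mathcal{S}'_\nu}X_{k,\phi}(T)$; it then suffices to bound $\inf_T X_{k,\phi}(T)\le\exp[\tfrac12 R_\nu^{-1}(\phi)(\phi)]$. For this I would feed a convenient competitor --- e.g.\ $T = -R_\nu^{-1}\phi - F_k(\phi,\cdot)$, which lies in $\mathcal{S}'_\nu$ since $F_k(\phi,\cdot)$ is $(\cdot,\cdot)$-continuous --- into $X_{k,\phi}$ and apply the Cameron--Martin theorem, after which the quadratic Gaussian factors recombine and one is reduced to an integral normalised by $N_k$ that, thanks to $F_k\ge 0$ and $Z_k(0) = 1$, ought to be bounded by $\exp[\tfrac12 R_\nu^{-1}(\phi)(\phi)]$. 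Carrying out this estimate, and in particular controlling it uniformly in $\phi$, is the main obstacle of the theorem; the free case $S^{\mathrm{int}} = 0$, where $\Gamma_k(\phi) = \tfrac12 R_\nu^{-1}(\phi)(\phi)$ identically and the infimum is attained with value $0$ at $\phi = 0$, is a useful sanity check and shows the bound is sharp. Combining the two inequalities gives $\inf_\phi\Gamma_k = 0$, hence the identity; and $\Gamma_k(0) = -\inf_{J\in\mathcal{S}'_\nu}W_k(J)\ge -W_k(0) = 0$ follows directly from the definition since $F_k(0,0) = 0$, consistent with $\bar{\Gamma}_k(0) = 0$.
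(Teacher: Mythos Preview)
Your treatment of the flow equation and the $k\to\infty$ limit is correct and matches the paper's one-line ``clear'': subtract the trace identity at $\phi$ and at $0$, and subtract the two instances of the previous theorem.

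The problem is the reconstruction identity. You correctly reduce $\Gamma_k(0)=-\inf_\phi\bar\Gamma_k(\phi)$ to $\inf_\phi\Gamma_k(\phi)=0$, and you correctly obtain the upper bound by evaluating at $\phi_0:=D_0W_k$, where $W_k^*(\phi_0)=-W_k(0)=0$ and hence $\Gamma_k(\phi_0)=-\tfrac12F_k(\phi_0,\phi_0)\le0$. But this very computation already defeats your second half: whenever the regularised measure $f_k/N_k\cdot\nu$ has mean $\phi_0$ outside the kernel of $F_k$ (e.g.\ any non-even $S^{\mathrm{int}}$ with a Litim or exponential regulator), you have $\Gamma_k(\phi_0)<0$ strictly, so $\inf_\phi\Gamma_k(\phi)<0$ and the inequality $\Gamma_k\ge0$ you set out to prove via the $X_{k,\phi}$ representation is simply false. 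The ``main obstacle'' you identify is not a technical one but a genuine obstruction; no choice of competitor $T$ will save the estimate.

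What the paper's one-line argument from $W_k(0)=0$ actually delivers is this: by Fenchel--Moreau, $0=W_k(0)=W_k^{**}(0)=-\inf_\phi W_k^*(\phi)$, and since $W_k^*=\bar\Gamma_k+\Gamma_k(0)+\tfrac12F_k$ one gets
\[
\Gamma_k(0)\;=\;-\inf_{\phi\in H(\nu)}\Bigl[\bar\Gamma_k(\phi)+\tfrac12F_k(\phi,\phi)\Bigr].
\]
This is the identity that follows from ``$f_k/N_k\cdot\nu$ is a probability measure'', and it reduces to the printed one only when the minimiser lies in $\ker F_k$ (in particular for even $S^{\mathrm{int}}$, where $\phi_0=0$). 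So rather than chasing $\Gamma_k\ge0$, you should recognise that the displayed formula is missing the $\tfrac12F_k$ term; with that term restored, the proof is the two-line Fenchel--Moreau computation above and no estimate on $X_{k,\phi}$ is needed.
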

\begin{proof}
The differential equation and the $k \to \infty$ limit is clear.
The expression for $\Gamma_k ( 0 )$ follows from $W_k ( 0 ) = 0$ since $f_k / N_k \cdot \nu$ is a probability measure.
\end{proof}
The subtraction of the $\phi = 0$ contribution of the full propagator is quite remarkable because it is precisely what is done under the hood in concrete calculations involving the \enquote{non-regularised} FRGE.
Furthermore, it motivates the common practise of expanding both sides in powers of $\phi$ and solving the equation order by order.
In fact, it is easy to show that under rather mild conditions one may Gâteaux differentiate under the trace owing to the fact that $Z_k$ is actually Fréchet-$C^\infty$.
Hence, solving order by order is justified but whether the resulting solution is analytic is not clear from these equations.
Resulting expressions including combinatorics may be extracted from \cite{src:Ziebell:ExactFRGPhi4}.

The achievement of theorem \ref{thm:FRGE2} is the rigorous derivation of the FRGE and the exposure of correct domains and boundary conditions of $\bar{\Gamma}_k$.
Notably, the $k \to \infty$ limit depends on $\mathcal{R}$ since $R_\nu^{-1}$ can be seen as the continuous extension of $\ltrans{\mathcal{R}}^{-1} \circ B \circ \mathcal{R}^{-1}$ on a suitable domain.

An unfortunate consequence is that it becomes impossible to choose a trace-class $\mathcal{R}$ in such a way that $R_{\nu}$ is translation equivariant.
Hence, while the differential equation for $\bar{\Gamma}_k$ respects translation invariance the boundary condition at $k \to \infty$ does not which may pose a severe difficulty in concrete calculations.
On the bright side, the fact that the involved operators are trace class and self-adjoint on $L^2_{\mathbb{C}} (\mathbb{R}^d)$ implies that they have a complete basis of eigenvectors.
It seems likely that this fact can be exploited in numerical calculations.
All in all, Wetterich's equation is shown to have a rigorous foundation and if one chooses to parametrise $\mathcal{R}_n$ by cutoffs instead of $\mathbb{N}$, one may compute their corresponding explicit contributions to $\bar{\Gamma}_k$ and $\Gamma_k$ respectively.

Finally, theorem \ref{thm:ConvergenceTheorem} gives a precise condition under which the $n \to \infty$ limit corresponds to a measure on $\mathcal{S}'_\beta$ with $W_n^c = \Gamma_0^n$ now with explicitly written regularisation index.

\section*{Acknowledgements}
I wish to thank Holger Gies, Benjamin Hinrichs, David Hasler, Christoph Kopper, Friedrich Martin Schneider, Jürgen Voigt, Markus Fröb and Stefan Flörchinger for fruitful discussions and helpful advice about this paper.
This work has been funded by the Deutsche Forschungsgemeinschaft (DFG) under Grant Nos. 398579334 (Gi328/9-1) and 406116891 within the Research Training Group RTG 2522/1.
\appendix
\printbibliography
\end{document}